\documentclass[preprint, 3p]{elsarticle}

\usepackage{amssymb}
\usepackage{amsmath}
\usepackage{amsthm}

\usepackage{thmtools, thm-restate}
\usepackage{xcolor}
\usepackage[colorlinks=true]{hyperref}

\usepackage[nameinlink,noabbrev]{cleveref}

\newtheorem{theorem}{Theorem}[section]
\newtheorem{lemma}[theorem]{Lemma}
\newtheorem{definition}[theorem]{Definition}

\newtheorem{corollary}[theorem]{Corollary}

\newcommand{\E}{\mathbb{E}}

\journal{IFIP WG 7.3 Performance 2026}

\begin{document}

\begin{frontmatter}

\title{Pareto-Optimal Scheduling in the Half-batch Multiserver-job Model}

\author[nu]{Ziyuan Wang}
\author[nu]{Izzy Grosof}
\affiliation[nu]{organization={Northwestern University},
            addressline={Deparment of Industrial Engineering and Management Science},
            city={Evanston},
            postcode={60201},
            state={IL},
            country={USA\\},
            email = {ziyuanwang2027@u.northwestern.edu, izzy.grosof@northwestern.edu}}

\begin{abstract}
In large-scale computing systems, jobs often demand heterogeneous server allocations: large jobs that occupy a substantial fraction of the servers are of high importance and are thus latency-sensitive, while small jobs fill in the remaining capacity to maintain throughput. To model this dynamic, we introduce the half-batch multiserver-job (MSJ) framework, a queueing model in which large jobs arrive according to a Poisson process and require all servers simultaneously, while small jobs, each needing only one server, are always available.

We prove that, in the half-batch MSJ model, the Pareto frontier for large-job mean response time and small-job throughput admits a simple and exact characterization. It is generated by a family of convoy policies, under which the system serves small jobs until $k$ large jobs have arrived and then switches to serving large jobs, together with convex combinations of neighboring convoy policies. Our result is fully general and non-asymptotic, holding for every stable arrival rate $\lambda$, every number of servers $n$, and every large-job size distribution $S$.
\end{abstract}
 


\end{frontmatter}

\section{Introduction}
\label{sec:intro}

In modern large-scale computing systems, jobs often have highly heterogeneous resource requirements. This is especially clear in supercomputing systems such as Frontier at Oak Ridge \cite{FrontierUserGuide}, where the system is built primarily to serve very large jobs, often called leadership-class jobs, that request more than half of the existing nodes. At the same time, smaller jobs are used to fill otherwise idle capacity.

This setting is naturally modeled by the multiserver-job (MSJ) framework \cite{HarcholBalter2022}. In these systems, a job requests a number of servers or nodes and occupies them simultaneously throughout service. This distinguishes the setting both from classical single-server models, which cannot represent simultaneous server needs, and from multiresource models, where multiple resource dimensions are needed. Similar phenomena also arise in datacenters and large-scale machine learning workloads, where a relatively small number of important, latency-sensitive large jobs are given priority while coexisting with a much larger population of low-resource jobs.

The presence of these two job classes creates a trade-off between small-job throughput and large-job latency. Moreover, conventional queueing models do not capture this trade-off well. In a standard open system, a large throughput region is typically a prerequisite for good latency, while in a closed system latency and throughput are linked by Little's law. Here, by contrast, the central question is how to design a policy that balances latency against throughput.

A key ingredient of this trade-off is non-preemptive scheduling. If preemption were free, then one could simply run small jobs whenever capacity is available and interrupt them immediately when a large job needs to start. In practice, however, preemption is often expensive or unavailable. A running small job may have to be killed, checkpointed, or allowed to complete at the expense of large jobs.

Existing studies of nonpreemptive policies do not fully capture this trade-off. Some nonpreemptive throughput-oriented policies, such as Randomized Timers \cite{Ghaderi2016} and the Markovian Service Rate policy \cite{Chen2025}, can achieve strong stability guarantees by building very long convoys of large jobs. These policies accumulate many jobs of identical resource requirements, and then serve all of them in sequence in a very long convoy. Such policies may be unacceptable in practice because of the latency they impose.

To study systems in which this trade-off is important, we introduce a novel MSJ model, which we call the half-batch MSJ model, in which large jobs arrive according to a Poisson process while small jobs are always available. The term ``batch'' comes from batch processing, meaning that the small jobs represent background work that is run when resources are available. Within this model, many policies are possible, including randomized policies such as MSR, as well as policies that switch to large jobs only after a predetermined number of small jobs have completed. This leads to the central question of the paper:

\begin{center}
\begin{quote}
\textit{Which family of policies form the Pareto-optimal trade-off between small-job throughput and large-job latency?}
\end{quote}
\end{center}

We prove that the Pareto frontier admits a simple and exact characterization. It is generated by a family of convoy policies: under a $(k,n)$-convoy policy, the system runs small jobs on all $n$ servers until the $k$th large job arrives, then clears the small jobs as quickly as possible and serves large jobs exhaustively until the large-job queue becomes empty. We show that every point on the Pareto-optimal curve is achieved either by one of these $(k,n)$-convoy policies or by randomizing between two neighboring convoy policies, together with the endpoint that serves no small jobs at all and begins large-job service immediately.

Our result is of interest because this kind of complete characterization has previously appeared mainly in single-server settings, for example under the Shortest Remaining Processing Time (SRPT) policy, and, has not been obtained in an MSJ setting before. In addition, our optimality result characterizes the Pareto frontier of a bicriterion objective, whereas most previous optimality results concern a single performance metric. The result is fully general and non-asymptotic, holding for every stable arrival rate $\lambda$, every number of servers $n$, and every large-job size distribution $S$.

Our proof techniques are also of independent interest. In many ways, the proof can be viewed as a Markovian analogue of an interchange argument: we construct a sequence of improvements that inexorably pushes any policy toward a very precise family of policies. A key step is proving that the relevant performance metrics are concave in the number of small jobs in service, which forces the policy either to continue building the convoy or to serve the convoy by bringing large jobs into service as quickly as possible.

\subsection{Outline}

The rest of the paper is organized as follows. \Cref{sec:prior} reviews related work. \Cref{sec:model} introduces the half-batch MSJ model and the $(k,m)$-convoy policy. \Cref{sec:main} presents our main results and provides a proof outline. \Cref{sec:numerical} numerically examines the robustness of our policies beyond the assumptions of the theoretical model. \Cref{sec:proof} proves the main theorem and the supporting lemmas needed for the proof, organized into three subsections. 

\section{Prior Work}
\label{sec:prior}

Our half-batch MSJ model is a novel hybrid of open and closed systems, combining external Poisson arrivals with an internal, always-available stream of small jobs. While both open (\Cref{sec:open}) and closed (\Cref{sec:closed}) MSJ models have been extensively studied, the half-batch model naturally bridges these two settings.

\subsection{Open models}
\label{sec:open}
There have been several recent results on open MSJ models, i.e., models with only external Poisson arrivals. A number of throughput-optimal policies have been studied, including preemptive policies such as MaxWeight \cite{Maguluri2012} and ServerFilling \cite{Grosof2022WCFS}, as well as non-preemptive policies such as Randomized Timers \cite{Ghaderi2016, Konstantinos2018}. The recently introduced Markovian Service Rate policy class includes both preemptive and non-preemptive variants and is likewise throughput optimal \cite{Chen2025}.

For response time analysis, the lack of work conservation in the multiserver-job model makes it challenging to analyze even simple policies such as FCFS. Only recently has the first explicit characterization of the mean response time in the MSJ FCFS system been obtained \cite{Grosof2023TRaM}. That work also uses closed models in its analysis. Additionally, a variant of SRPT called ServerFilling-SRPT has been shown to be optimal in the heavy-traffic regime \cite{Grosof2022MSJ}. The MSR-type policies also admit mean response time formulas \cite{Chen2025}. Note that throughput optimality is a prerequisite for optimizing response time.

The MSJ FCFS model has also been studied in asymptotic regimes where the number of servers, the arrival rate, and the server-need distribution all grow to infinity. When these system parameters scale together, \citet{Hong2023} establish bounds on the mean waiting time in the system. On the other hand, \citet{Grosof2026Mrtu} consider a load-focused multilevel scaling regime, where the heavy-traffic limit is taken before the secondary scaling $n\to\infty$ and $p_n\to0$. The model studied by \citet{Grosof2026Mrtu} also has a two-class server-need structure similar to ours, in which jobs require either one server or all $n$ servers.

In the open MSJ model, stability maximization becomes a prerequisite for optimizing response time in the heavy-traffic regime. In contrast, the half-batch model allows response time and throughput to be studied separately.

\subsection{Closed models}
\label{sec:closed}
Closed or saturated MSJ models are also of particular interest, as the throughput of the saturated system matches the stability threshold of the corresponding open system, a folk theorem first formally proven by \citet{Baccelli1995}. \citet{Rumyantsev2017} implicitly used this connection through a matrix-analytic approach, while \citet{Grosof2023Saturated} were the first to make it explicit, deriving a product form expression for the state probabilities of the MSJ system under FCFS. They consider both the case in which all job types have exponentially distributed service durations and a two-class setting where each class follows a different exponential distribution. Subsequent work has extended product form stationary distributions to graph-based product forms \cite{Comte2026}. In a saturated system, latency is not well defined, while in a closed system, latency and throughput are equivalent by Little's law.

\citet{Olliaro2023} study a saturated system with two job classes: small jobs that require one server and large jobs that require all servers. In their setting, small jobs follow an exponential distribution, while large jobs may follow an arbitrary distribution. Their model is closely related to ours. However, they focus on a fully saturated system with the goal of characterizing the stability region of the corresponding open system, whereas we study a hybrid system, half-open and half-closed, with the goal of deriving the optimal trade-off between small-job throughput and large-job latency.

\subsection{Semi-open models}

A novel aspect of our MSJ model is its arrival structure, which includes both an external arrival stream and internal batched jobs. While this specific setting has not been previously studied in the queueing theory literature, systems with both external arrivals and internally circulating jobs have received significant attention due to their theoretical significance and practical relevance. Two prominent examples are Jackson networks \cite{Jackson1963,Baccelli1994} and semi-open queueing networks (SOQNs) \cite{Jia2009,Roy2016}.

In the half-batch MSJ model, small jobs are always available in a saturated manner (and are unconstrained), and the number of small jobs in service is controlled by the policy. In SOQNs, the internal component is stochastically constrained, whereas in Jackson networks all jobs, whether internal or external, follow a Markovian routing structure without a fixed population constraint.

\citet{Bhulai2003} study a call-blending model with arriving
inbound calls and always-available outbound calls, maximizing
outbound throughput subject to a constraint on mean inbound
waiting time. Unlike our MSJ model, each call requires only
one server.

\subsection{Bi-criterion optimization}  
Prior work in broader queueing and resource-allocation settings has also explored multi-objective trade-offs. For example, \citet{Philipp2022} analyze a multi-class, multi-server FCFS-ALIS matching queue, where FCFS-ALIS stands for ``First Come, First Served - Assign Longest Idle Server.'' They decompose the compatibility graph into resource-pooling components and study a Pareto frontier that balances average customer waiting time against aggregate matching reward. However, their focus is on designing matching topologies rather than developing control policies, which is central to our paper. 

\citet{Dumitru2014} model data-intensive bag-of-tasks execution on cloud VMs as a closed BCMP network with shared-bandwidth downloads. Their frontier balances makespan against monetary execution cost.

These trade-offs bear similarities to the latency-throughput trade-off studied in our work. In many server farms and cloud computing platforms that rent out their resources, throughput directly translates into reward or monetary value. However, to the best of our knowledge, no prior work has studied this trade-off in the MSJ setting.

\subsection{Single-server policies}
\label{sec:lit_4}
Part of our analysis involves studying specific variants of the $M/G/1$ queue. This includes the $M/G/1$ queue under the so-called $N$-policy, in which the server begins service only after $N$ jobs have accumulated in the system \cite{Yadin1963}. Since $n$ denotes the number of servers in this work, this policy will be referred to as the $k$-policy for to avoid confusion. 

Another queue of interest is the $M/G/1$ queue with setup cost, whose response time was first derived by \citet{Welch1964} and is also discussed by \citet{Harchol-Balter2013}. The setup cost refers to an initial delay experienced by the first job of each busy period before service can begin. These two models are combined to derive the response time of an $M/G/1$ queue with setup cost operating under the $k$-policy (\Cref{sec:kmconvoy}), which we then use to characterize the response time of the convoy policies studied in this paper.

\section{Model}
\label{sec:model}

\subsection{Half-batch model}

We consider a model with $n$ servers and two job classes. \emph{Large} jobs are time-sensitive and require all $n$ servers simultaneously, whereas \emph{small} jobs are the time-insensitive ``batch jobs'' and use a single server each. At any time, the system can serve either one large job or up to $n$ small jobs in parallel. Both classes are non-preemptible. Consequently, if any servers are occupied by small jobs, a large job can begin service only after all small jobs currently in service have completed.

Small jobs have an exponential service-time distribution $S_1\sim\exp(\mu_1)$ with mean $\E[S_1]=1/\mu_1$. The system is saturated with small jobs in the sense that additional small jobs are always available and may be put into service whenever there are idle servers. The exponential assumption is needed for tractability, as our analysis often uses the memoryless property of the exponential distribution to study the long-run distribution of the number of small jobs in the system under certain simple policies.

Large jobs arrive according to a Poisson process with rate $\lambda$ and are served in First-Come, First-Served (FCFS) order. In contrast to small jobs, large jobs may follow an arbitrary service-time distribution $S$ with mean $\E[S]=1$, which is normalized. The system load is $\rho=\lambda \E[S]$, corresponding to the load contributed by large jobs. Under the normalization $\E[S]=1$, this reduces to $\rho=\lambda$. Throughout, we assume $\rho<1$.

\subsection{Control policies}
\label{sec:schedule}

A policy $\pi$ specifies when to admit large or small jobs into service, subject to the constraint that jobs can be added only when enough servers are available. Additionally, we assume the policies are size-blind with respect to all the jobs. The system is said to be in level $m\in\{0,1,\ldots,n\}$ if it is currently serving $m$ small jobs.

We study the trade-off between two primary performance metrics: the mean response time of large jobs, denoted $\E[T^\pi]$, and the throughput of small jobs, denoted $X^\pi$. 

We say that a policy renews when it enters the state where all large jobs have completed and no small jobs have been admitted, even if it instantaneously passes through this state. We consider only policies that are positive recurrent with respect to this state, if a policy is not positive recurrent, then a positive fraction of time is spent with at least one extra large job contributing cumulative response time, which makes the policy suboptimal. Additionally, positive recurrence implies that the renewal-based performance metrics we study are well defined.

 Furthermore, positive recurrence to any state other than the empty state is also suboptimal, since any additional large jobs present contribute extra cumulative response time, while any admitted small jobs are controlled by the policy.

For a given policy $\pi$, define a renewal period as the interval between two successive visits to the renewal state. Equivalently, the process renews when the system becomes free of large jobs, immediately before any small jobs are placed into service.

We also work with auxiliary quantities, including the mean number of small jobs completed in a cycle, denoted $\E[N^\pi]$, the mean cycle length $\E[L^\pi]$, and the mean total cumulative response time in a cycle, $\E[C^\pi]
=
\sum_{j\in\mathcal{B}}\E[T_j^\pi],$
where $\mathcal{B}$ denotes the set of large jobs that arrive during a renewal cycle. A useful identity is $\E[C^\pi]=\lambda\,\E[L^\pi]\cdot \E[T^\pi],$
which follows from Little's law $\E[N_L^\pi]=\lambda\E[T^\pi]$ together with the renewal-reward relation $\E[N_L^\pi]=\E[C^\pi]/\E[L^\pi]$.

Finally, a policy $\pi^\star$ is Pareto optimal for the objective pair $(\E[T],X)$ if there is no policy $\pi'$ such that $\E[T^{\pi'}]\leq\E[T^{\pi^\star}]$ and $X^{\pi'}\geq X^{\pi^\star}$ with at least one inequality strict.

\subsection{Replication}
\label{sec:replication}

One of the key steps in establishing Pareto optimality is to show that any Pareto-optimal policy must belong to a class of policies constructed from basic building blocks, which we call \textit{threshold policies}. These are short-term policies that describe the behavior of a larger policy between two consecutive large-job arrivals, applied when there is no large job in service.

Specifically, we consider two classes of threshold policies, called \emph{Hold-at-$\ell$} and \emph{Drop-to-$\ell$}, indexed by a level $\ell\in\{0,1,\ldots,n\}$.

The Hold-at-$\ell$ policy is a short-term policy that can be applied when the number of small jobs in service at the first large-job arrival is less than $\ell$. Under this policy, the scheduler immediately increases the number of small jobs in service to $\ell$ and then holds the system at that level until the next large-job arrival.

The Drop-to-$\ell$ policy can be applied when, at the first large-job arrival, there are at least $\ell$ small jobs in service. Under this policy, the scheduler reduces the number of small jobs in service to level $\ell$ as quickly as possible through successive small-job completions, without admitting any additional small jobs into service, and then holds the system at that level until the next large-job arrival.

When constructing from these building blocks, a Hold-at-$\ell$ or Drop-to-$\ell$ policy is selected at each large-job arrival according to a probabilistic mixture. We show that any policy can be replicated by such a mixture of threshold policies, where replication means that all key performance metrics are matched exactly, see \Cref{sec:replication_proofs}.

\subsection{$(k,m)$-convoy}
\label{sec:kmconvoy}

Our main result gives an exact characterization of the Pareto frontier. In particular, we show that the frontier consists only of a certain restricted class of policies, which we now define.

For a positive integer $k$ and $m\in\{0,1,\ldots,n\}$, we define the $(k,m)$-convoy policy as follows: the system runs small jobs on $m$ servers until the arrival of the $k$-th large job. At that point, the system enters a \textit{clearing phase} in which all small jobs are completed as quickly as possible to make room for large jobs, after which the large jobs are served exhaustively as a convoy, one after another, until no large jobs remain. The policy then renews.

The duration of this clearing phase corresponds to the maximum of $m$ independent exponential service times, which we denote by $I_m$.

For the special case $m=n$, in which all servers initially run small jobs, we prove in \Cref{thm:main} that the resulting $(k,n)$-convoy policies generate the Pareto frontier.

The $(k,m)$-convoy policy also connects naturally to several classical $M/G/1$ models from the perspective of a large job. When $k=1$ and $m>0$, the small jobs act as a setup cost for the large job \cite{Harchol-Balter2013}. When $m=0$ and $k>1$, the policy reduces to the classical $N$-policy (\cite{Yadin1963}), with $k=N$, see \Cref{sec:lit_4}. Finally, when $k=1$ and $m=0$, we recover an $M/G/1$ queue from the perspective of large jobs. Since the only large-job performance metrics we consider are $\E[T]$ and $\E[C]$, a classical result implies that, among non-preemptive and size-blind policies, the service order is irrelevant \cite[Theorem~29.2]{Harchol-Balter2013}. Thus, in this case, we recover the standard $M/G/1$ FCFS queue with service-time distribution equal to the large-job size distribution, which we denote by $M/G/1$-$L$.

\section{Main Results}
\label{sec:main}
We now state the main result of the paper, which gives a complete characterization of the Pareto frontier in the half-batch MSJ model. The empirical valuation can be found in \Cref{sec:numerical} and the proofs can all be found in \Cref{sec:proof}.

\begin{theorem}
\label{thm:main}
In the half-batch MSJ system, for any arrival rate of large jobs $\lambda$ and any large-job size distribution $S$, the $(1,0)$-convoy policy and the $(k,n)$-convoy policies for $k \ge 1$ lie on the Pareto frontier of $(\E[T],X)$.

Furthermore, every point on the Pareto frontier can be achieved by a policy which is a convex combination of $(k,n)$ and $(k+1,n)$-convoy policies, or as a convex combination of the $(1,0)$ and $(1,n)$-convoy policies.
\end{theorem}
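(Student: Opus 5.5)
Since $X^\pi=\E[N^\pi]/\E[L^\pi]$ and $\E[T^\pi]=\E[C^\pi]/(\lambda\E[L^\pi])$, both objectives are ratios of renewal-cycle quantities with a common denominator. I would first linearize the bicriterion problem. For a Lagrange multiplier $\theta\ge 0$, a point on the Pareto frontier minimizes $\E[T]-\theta X$. Because of the common denominator, this is equivalent, by a standard fractional-programming (Dinkelbach) argument, to minimizing the cycle functional $\E[C^\pi]-\lambda\theta\E[N^\pi]-\gamma\E[L^\pi]$ for an appropriate scalar $\gamma$. The frontier is then the lower-left convex hull of achievable $(\E[T],X)$ pairs, provided randomization over cycles generates all convex combinations. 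Mixing two policies cycle by cycle with appropriate weights gives the convex combinations stated in \Cref{thm:main}. So the task reduces to showing that, for every linear reward $r(\pi)=\alpha\E[N^\pi]-\E[C^\pi]+\gamma\E[L^\pi]$, some $(k,n)$- or $(1,0)$-convoy is optimal. It also remains to show that the optimal $k$ moves monotonically in $\theta$, so that the minimizers are exactly the neighboring pairs.

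Second, I would invoke the replication result of \Cref{sec:replication}. Any positive-recurrent policy is matched, in $\E[N]$, $\E[C]$ and $\E[L]$, by a mixture of Hold-at-$\ell$ and Drop-to-$\ell$ threshold policies chosen at each large-job arrival. An optimal policy can therefore be taken to be deterministic, by linearity, and described by one decision per state. The state consists of the number $j$ of large jobs waiting and the current level $m$. In each state the policy either starts clearing, which is Drop-to-$0$ followed by an exhaustive convoy, or holds or drops to some level $\ell$ until the next arrival. Once large jobs are in service, serving them exhaustively in FCFS order is clearly optimal, since inserting small jobs only delays waiting large jobs; the order among large jobs is irrelevant by \cite[Theorem~29.2]{Harchol-Balter2013}. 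This yields a finite-horizon-like Markov decision problem indexed by $(j,m)$ with $j<k$.

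Third, the key step. I would show that the value of the state $(j,m)$, as a function of the level $m$, is concave, and in fact affine-dominated, in the right sense. Then the optimal action at any arrival epoch is an extreme level, either $\ell=n$ (Hold-at-$n$) or clearing to $0$, and never an intermediate level. For the analysis I would compute things explicitly. Holding at level $\ell$ for an $\exp(\lambda)$ time yields reward $\alpha\ell\mu_1/\lambda+\gamma/\lambda$ minus holding costs linear in $j$. The clearing cost from level $m$ involves $\E[I_m]=\sum_{i\le m}1/(i\mu_1)$ and $\E[I_m^2]$, together with small-job completions during clearing. The cost of a large-job convoy started with $j$ waiting and setup $I_m$ is given by the $M/G/1$ setup and $k$-policy formulas of \Cref{sec:kmconvoy}. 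Linearity of the holding reward in $\ell$, combined with the concavity of the harmonic clearing time in $m$, should show that any intermediate level is dominated by a mixture of $0$ and $n$. Here $0$ means going straight to clearing with nothing admitted, so an intermediate level is dominated by the endpoints. I expect this concavity/interchange step to be the main obstacle, because the clearing cost enters both $\E[C]$ (through the setup's effect on a random-length convoy) and $\E[L]$.

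Finally, with levels restricted to $\{0,n\}$, a policy from the renewal state either clears immediately, which is the $(1,0)$-convoy, or holds at $n$ until some stopping count. Since clearing is never interrupted and the state after each arrival at level $n$ differs only in $j$, the optimal stopping rule in $j$ is a threshold, and this is exactly a $(k,n)$-convoy. I would then compute $\E[N]$, $\E[C]$ and $\E[L]$ for the $(k,n)$-convoys in closed form. Using these, I would verify that the points $(\E[T^{(k,n)}],X^{(k,n)})$ form a convex decreasing sequence, with marginal rate monotone in $k$. Convexity of this sequence ensures that each convoy is an exposed point of the hull and that segments join only consecutive $k$ (and $(1,0)$ with $(1,n)$), which gives both claims of \Cref{thm:main}.
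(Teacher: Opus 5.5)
Your outer framework is sound and differs from the paper's. Randomizing at renewal epochs makes the achievable set convex, since the map $(\E[N],\E[C],\E[L])\mapsto(\E[C]/(\lambda\E[L]),\E[N]/\E[L])$ sends segments to segments. Dinkelbach then turns each supporting direction into a linear cycle functional, and your final convexity check is exactly \Cref{lem:pareto_hypograph}. The gap is your third step, which carries the whole theorem and which you only assert. Your claim that an intermediate level is dominated by a mixture of $0$ and $n$ is a deterministic-level picture, and it covers only Hold-at-$m$. The expected number of small-job completions in an interarrival period is $\mu_1/\lambda$ times the mean level seen at the next arrival, so matching $\E[N]$ is the same as matching the mean of the clearing level $V$. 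For $V\equiv m$, Jensen plus concavity in $m$ of the clearing-dependent costs then finishes. Under Drop-to-$\ell$, however, the clearing level is the random variable $\max(\ell,V_0)$, with $V_0$ the level under Drop-to-$0$. What you need is $\E[f(\max(\ell,V_0))]\ge\alpha f(n)+(1-\alpha)\E[f(V_0)]$ for increasing concave $f$ at the $\E[N]$-matching weight $\alpha=\ell\beta_\ell/n$. This is true (it is \Cref{lem:b_4}, or a stop-loss comparison), but it is a distributional statement your sketch never formulates. Moreover, your extreme action ``clear now'' at $(j,m)$ starts a convoy with $j$ rather than $j+1$ large jobs. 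Your comparison is therefore not between values of a single function of the clearing level at fixed $k$. The paper keeps the clearing epoch fixed in all branches, letting the Drop-to-$0$ branch wait at level $0$ and removing that idling separately. That way every option reduces to $\E[f(V)]$ with $f(m)$ a $(k,m)$-convoy quantity.

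Two further pieces are missing. First, beyond the penultimate period the continuation after a threshold action is the unknown optimal cost-to-go $\Phi_{j+1}(\cdot)$, not an explicit convoy cost. Your claim that the value of $(j,m)$ is concave in $m$ would therefore need an induction through the Bellman recursion over unbounded $j$, with action sets that depend on $m$. What decides the action is concavity of $\Phi_{j+1}$ in the next-arrival level, combined with the Drop comparison above, not concavity in the current level. The paper avoids this entirely with its no-reneging reallocation. Second, in your cost $\E[C]-\lambda\theta\E[N]-\gamma\E[L]$ the constant $\gamma=\lambda\min_\pi(\E[T^\pi]-\theta X^\pi)$ can be positive. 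Then $-\gamma\E[L^{k,m}]$ is convex in $m$ and pushes against the extreme-point conclusion. You need to observe that $\gamma\le\lambda\E[T^{M/G/1\text{-}L}]$, by comparison with the $(1,0)$-convoy. Then the coefficient of $\E[I_m]$ in $\E[C^{k,m}]-\gamma\E[L^{k,m}]$ is at least $k/(1-\lambda)>0$, and concavity survives. With these filled in, your route has a real advantage. It handles the ratios $X=\E[N]/\E[L]$ and $\E[T]=\E[C]/(\lambda\E[L])$ exactly, and needs only \Cref{lem:LCconcave}, not the harder \Cref{lem:Tconcave}.
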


\begin{proof}[Proof deferred to \Cref{sec:proof_main}]
    
\end{proof}

\subsection{Proof Overview}
\label{sec:overview}

To prove \Cref{thm:main}, we start with an arbitrary policy $\pi$ and successively shrink the class of candidate policies toward the target family, namely, convex combinations of neighboring $(k,n)$-convoy policies and the convex combinations of the $(1,0)$- and $(1,n)$-convoy policies, through a sequence of optimality results.

\textit{Renewal cycles.} The first step concerns the renewal-cycle structure of a policy $\pi$. We define a renewal cycle as the period between successive visits to the state in which no large jobs are present and no small jobs have yet been admitted into service. In \Cref{sec:schedule}, we argued that an optimal policy must be positive recurrent with respect to this state, and that recurrence to any other state is suboptimal. In addition, \Cref{lem:exhaustion} shows that once an optimal policy begins serving large jobs, it must continue until all large jobs in the system, including those that arrive during large-job service, have completed before admitting small jobs again.

\textit{Interarrival replications.} Next, we show that for an arbitrary policy $\pi$, over any single interarrival period between two large-job arrivals, we can construct a probabilistic mixture of threshold policies that replicates both the behavior of $\pi$ during that period and the arrival-seen distribution at the next large-job arrival; see \Cref{thm:2_1}. See \Cref{sec:replication} for the definitions of threshold policies and replication.

\Cref{thm:2_1} gives the general one-interarrival replication result. Applying it successively over interarrival periods yields a full replication of an arbitrary policy by threshold policies. We call the resulting policy the Mixture-of-Threshold-$\pi$ policy, abbreviated MoT-$\pi$. By construction, MoT-$\pi$ is a probabilistic policy, choosing a threshold policy at each large-job arrival according to an appropriate distribution. This representation applies to any policy $\pi$, even if $\pi$ itself is deterministic, and is formalized in \Cref{cor:2_1}.

\textit{Suboptimality before clearing.} At this point, the original policy $\pi$ has been reduced to an equivalent MoT-$\pi$ policy. In this representation, the possible combinations of threshold-policy choices over the penultimate interarrival period and the subsequent clearing phase, fall into a small number of canonical forms; see \Cref{sec:suboptimality_proof}.

In particular, we show that, apart from the few penultimate decisions that lead to the $(k,n)$-convoy policies, any other penultimate choice can be improved by a convex combination of two policies that otherwise agree with MoT-$\pi$ and differ only over the penultimate interarrival period and the subsequent clearing phase. These two policies correspond to the neighboring $(k,n)$-convoy policies. The mixing coefficient $\alpha$ can be constructed explicitly, as given in \Cref{lem:3_1,lem:3_4,lem:3_7}, and the fact that the resulting convex combination Pareto dominates MoT-$\pi$, and hence also $\pi$, follows from \Cref{lem:3_6,lem:3_2,lem:3_8,lem:3_9}. This improved policy is called MoT-Mix-$\pi$ and is defined in \Cref{def:motmix}.

Moreover, once a MoT-$\pi$ policy reaches level $0$, it should begin serving large jobs immediately. Any further delay would worsen both small-job throughput and large-job response time.

Note that this involves two successive decisions by the policy at neighboring large-job arrivals. However, thanks to the construction of MoT-$\pi$ policies, the probability of each such joint branch can be computed explicitly, and on each possible branch the policy performs the corresponding MoT-Mix-$\pi$ modification.

Once a threshold policy decreases the small-job level, the threshold policy selected at the next large-job arrival should not increase it again. We call such behavior reneging, and it is suboptimal. The reason is that the probability of future reneging can be anticipated by the construction of MoT-$\pi$ policies and absorbed into the earlier decision, so that one can avoid reducing the small-job level in the first place, improving throughput without increasing large-job response time. The proof of which can be found in \Cref{sec:proof_main}.

Thus, by the no-reneging property, MoT-Mix-$\pi$ cannot involve an increase at a large arrival following a decrease in an earlier interarrival period. Therefore, an optimal policy must either select Hold-at-$n$ in the interarrival period before the clearing phase, or select Drop-to-$0$ and enter the clearing phase immediately. The no-reneging property then implies that before this decision the policy must always Hold-at-$n$. Once the level reaches $0$, the policy must immediately serve all large jobs before admitting small jobs again. Hence, we conclude that any Pareto-optimal policy must be either a $(k,n)$-convoy policy or a convex combination of many such policies. In addition, \Cref{lem:pareto_hypograph} shows that a convex combination of non-adjacent policies lies strictly below the Pareto curve and is therefore suboptimal, leaving only $(1,0)$- and $(k,n)$-convoy policies or a convex combination of adjacent convoy policies, completing the proof of \Cref{thm:main}.

\section{Numerical Robustness Analysis}
\label{sec:numerical}

Our theoretical results rely on several structural assumptions that make the model analytically tractable. In this section, we use simulation to examine the robustness of the proposed policies when these assumptions are relaxed. In particular, we consider nonexponential small-job service times and settings in which large jobs do not necessarily occupy all servers.

\subsection{Simulation setup and benchmark policies}

We first introduce the simulation setup and the policies that will be used throughout the numerical robustness analysis. To study the effect of relaxing the exponential small-job assumption, we model small-job service times using the Weibull family. Specifically, let the small-job service time $S_1$ have survival function $\mathbb{P}(S_1>x)=\exp\left\{-\left(\frac{x}{\theta}\right)^\alpha\right\},$ where $\alpha>0$ is the shape parameter and $\theta>0$ is chosen so that $\mathbb{E}[S_1]=1/\mu_1$. When $\alpha=1$, the Weibull distribution reduces to the exponential distribution with rate $\mu_1$, corresponding exactly to the setting covered by our theoretical results.

We begin with the benchmark case $\alpha=1$. This allows us to compare the simulated policies against the analytically derived Pareto-optimal frontier before varying $\alpha$ in the next subsection to examine robustness to nonexponential small-job service times.

In \Cref{fig:weibull100}, the blue dots represent the $(1,0)$-convoy policy (leftmost point), and the $(k,n)$-convoy policies. These points are connected by straight line segments corresponding to convex combinations of adjacent policies, forming the Pareto frontier. By our main result, \Cref{thm:main}, every point on this frontier is achieved by a convex combination of neighboring $(k,n)$-convoy policies.

In this example, $k$ ranges from $1$ to $5$, and we set $n=10$, $\mu_1=2$, and $\lambda=0.5$, although our results hold for all parameter values. The large-job sizes are exponentially distributed, even though this assumption is not required by our result, which allows an arbitrary large-job size distribution. These frontier points are not obtained by simulation and they are computed directly from the response-time and throughput formulas in \Cref{prop:kmconvoy}. As expected, the resulting curve is concave, and the region below it is convex. The curve approaches the limiting throughput $\lim_{k\to\infty}X^{k,n}=n\mu_1(1-\lambda)=10$.

\begin{figure}[h]
    \centering
    \includegraphics[width=0.8\textwidth]{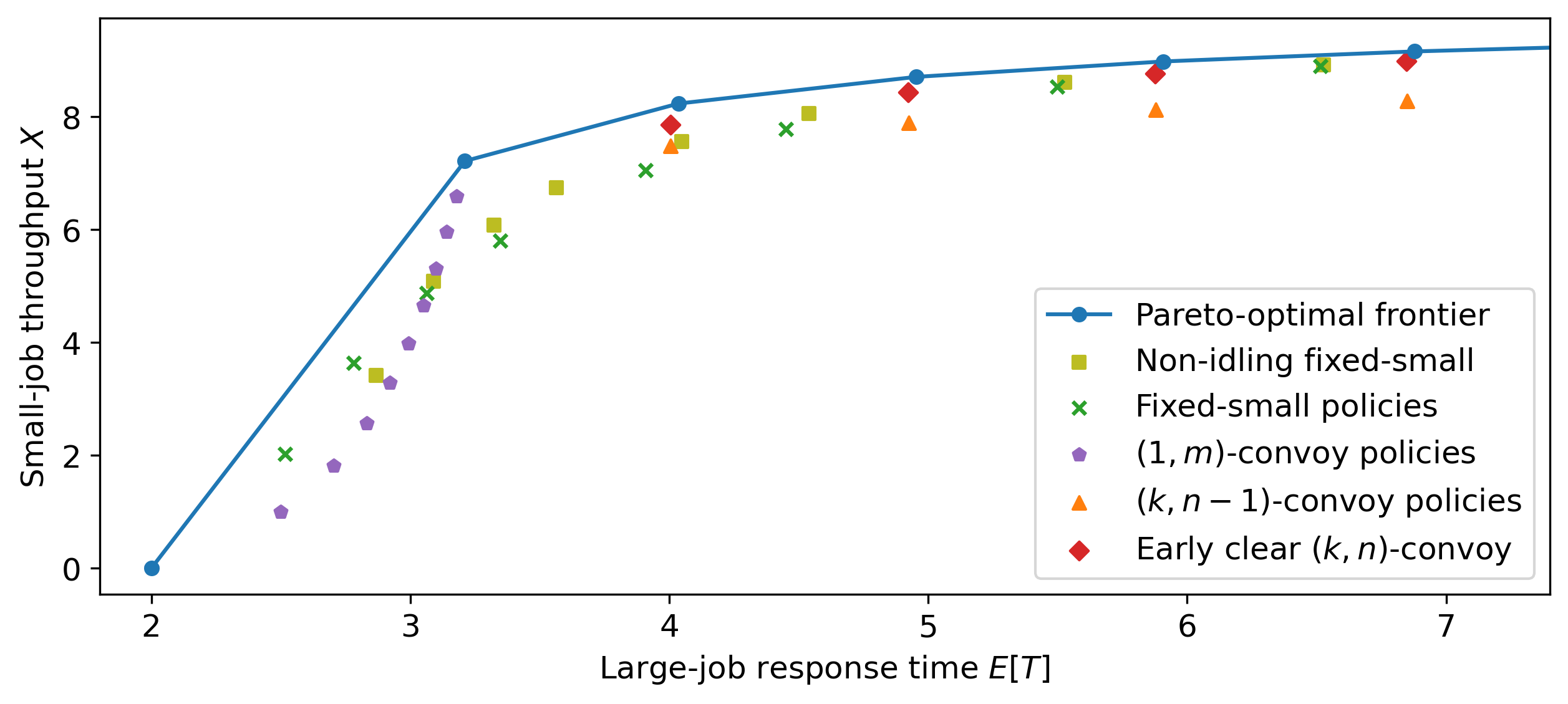}
    \caption{Comparison of the Pareto frontier with other suboptimal policies. The fixed-small, non-idling fixed-small, and $(k,n)$-convoy policies with early clearing are simulated over $5\times 10^6$ renewal cycles. The parameters are $n=10$, $\mu_1=2$, and $\lambda=0.5$. Small-job service times are Weibull with shape parameter $\alpha=1$, corresponding to exponential service times.}
    \label{fig:weibull100}
\end{figure}

Next, we plot the performance of several suboptimal $(k,m)$-convoy policies. These include the $(1,m)$-convoy policies with $m\in\{1,\ldots,n-1\}$, represented by the purple pentagons. These points form a convex curve connecting the $(1,0)$- and $(1,n)$-convoy policies. We also plot the $(k,n-1)$-convoy policies for $k\ge 2$, represented by the orange triangles. Both families lie strictly below the Pareto frontier.

We also plot three simulated policies. The first is the fixed-small policy. Under this policy, the system completes a fixed number $n_1$ of small jobs before switching to exhaustive service of all waiting large jobs. If no large job is present after the $n_1$ small-job completions, the system idles until the next large-job arrival and then serves large jobs exhaustively. Once the large-job queue is empty, the system admits and serves a new batch of $n_1$ small jobs.

The simulated performance of the fixed-small policies is shown by the green crosses, with $n_1$ ranging from $10$ to $160$. Each point is simulated using $5\times 10^6$ renewal cycles. As $n_1\to 0$, the policy converges to the $(1,0)$-convoy policy. As $n_1$ grows, the policy appears to move closer to the Pareto frontier and may be asymptotically optimal. This is intuitive, because as $n_1$ grows the probability of idling goes to zero, and the policy behaves like a convex combination of many $(k,n)$-convoy policies with large $k$, hovering near the limiting throughput value of $10$.

The second simulated policy is a non-idling variant of the fixed-small policy. Under this policy, after the admitted $n_1$ small jobs have completed, if no large jobs are present, the system admits and serves a new batch of $n_1$ small jobs instead of idling until the next large-job arrival. These policies are represented by the squares in \Cref{fig:weibull100}, using the same values of $n_1$ as the fixed-small policy, ranging from $10$ to $160$, and are simulated with $10^8$ large-job arrivals. As $n_1$ increases, the squares coincide with the crosses representing the idling fixed-small policies, as expected, since the probability of idling vanishes as $n_1$ grows.

The third policy is an improvement over the suboptimal $(k,n-1)$-convoy policies (orange triangles). This policy attempts to anticipate the arrival of the $k$th large job by clearing small jobs early. We call any policies that does this \textit{anticipatory policies}. In this example, we simulate the case where the policy reduces the small-job level from $n$ to $n-1$ if the first small-job completion between the $(k-1)$st and $k$th large-job arrivals occurs before the next large-job arrival. These policies are simulated with $10^8$ large-job arrivals and are shown by the red diamonds.

Again, as $k$ grows, the policy appears to become asymptotically optimal, since the loss from idling one server during a single interarrival period is washed out over many interarrival periods. One can similarly clear more small jobs early, which would trace out a curve from the $(k,n)$-convoy policy toward the $(k-1,n)$-convoy policy.

\subsection{Nonexponential small jobs}

We next examine the robustness of the convoy policies when the small-job
service times are nonexponential. We consider Weibull service times with
shape parameters $\alpha=1.5$, $\alpha=0.5$, and $\alpha=0.25$, while
keeping the mean service time fixed at $\E[S_1]=1/\mu_1$. The case
$\alpha=1.5$ has an increasing hazard rate, while $\alpha=0.5$ and
$\alpha=0.25$ have decreasing hazard rates. All other parameters and
policies are the same as in the benchmark $\alpha=1$ case.

For $\alpha=1.5$, we observe the same behavior as in the exponential benchmark, as shown in \Cref{fig:weibull150}. The $(k,n)$-convoy policies and their convex combinations continue to outperform all of the heuristic policies considered. We believe our optimality result should also hold in the case where small jobs have distributions with decreasing hazard rate, we leave this for future work.

\begin{figure}[h]
    \centering
    \includegraphics[width=0.8\textwidth]{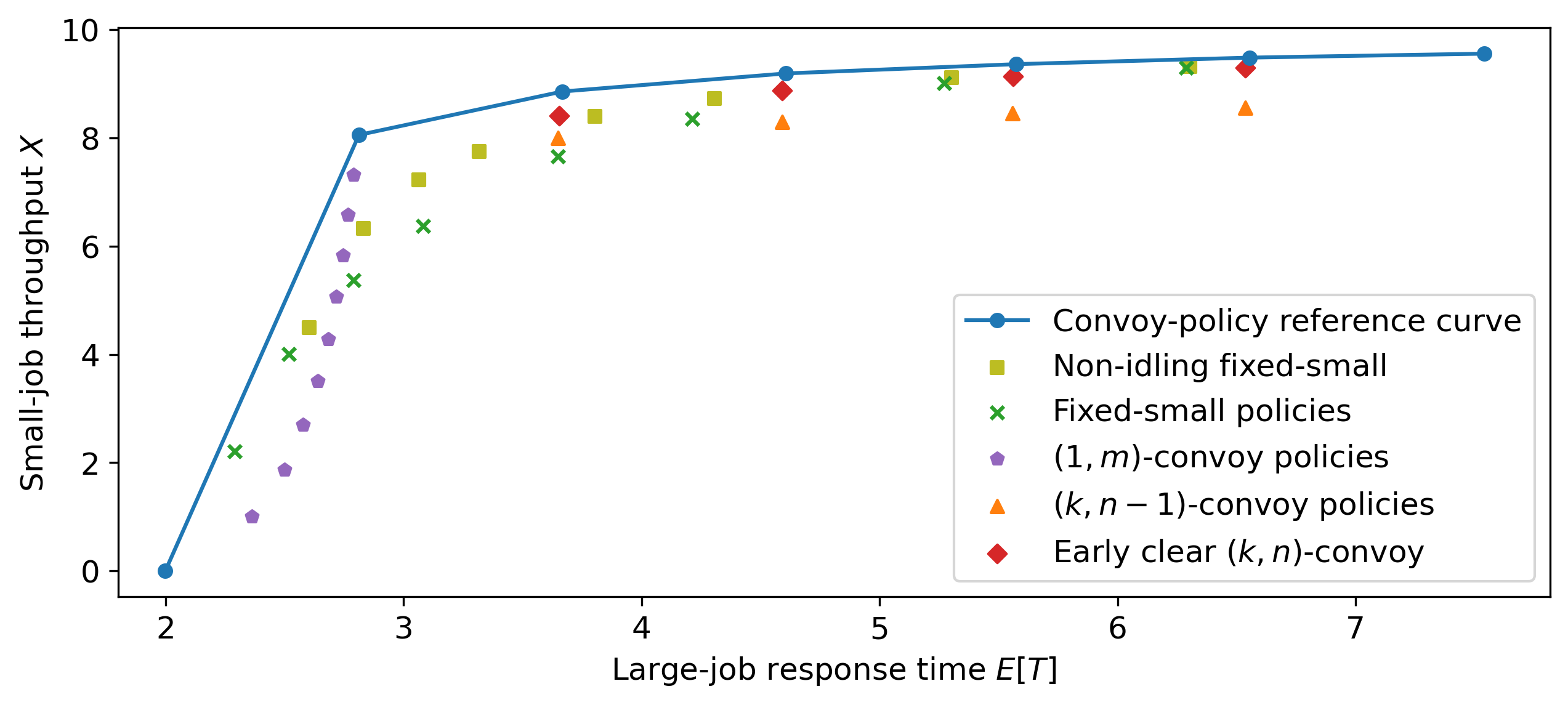}
    \caption{Small-job service times are Weibull with shape parameter
    $\alpha=1.5$ and the same mean $1/\mu_1$ as in
    \Cref{fig:weibull100}, this distribution has an increasing hazard rate. All other parameters and policies are unchanged from \Cref{fig:weibull100}.}
    \label{fig:weibull150}
\end{figure}

We next consider decreasing-hazard-rate service times.
\Cref{fig:weibull50} shows the results for $\alpha=0.5$. Although this
case is outside the assumptions of our theoretical results, the optimality structure of the benchmark case is preserved. In particular, the $(k,n)$-convoy policies and their convex combinations continue to outperform the other policies considered.

\begin{figure}[h]
    \centering
    \includegraphics[width=0.8\textwidth]{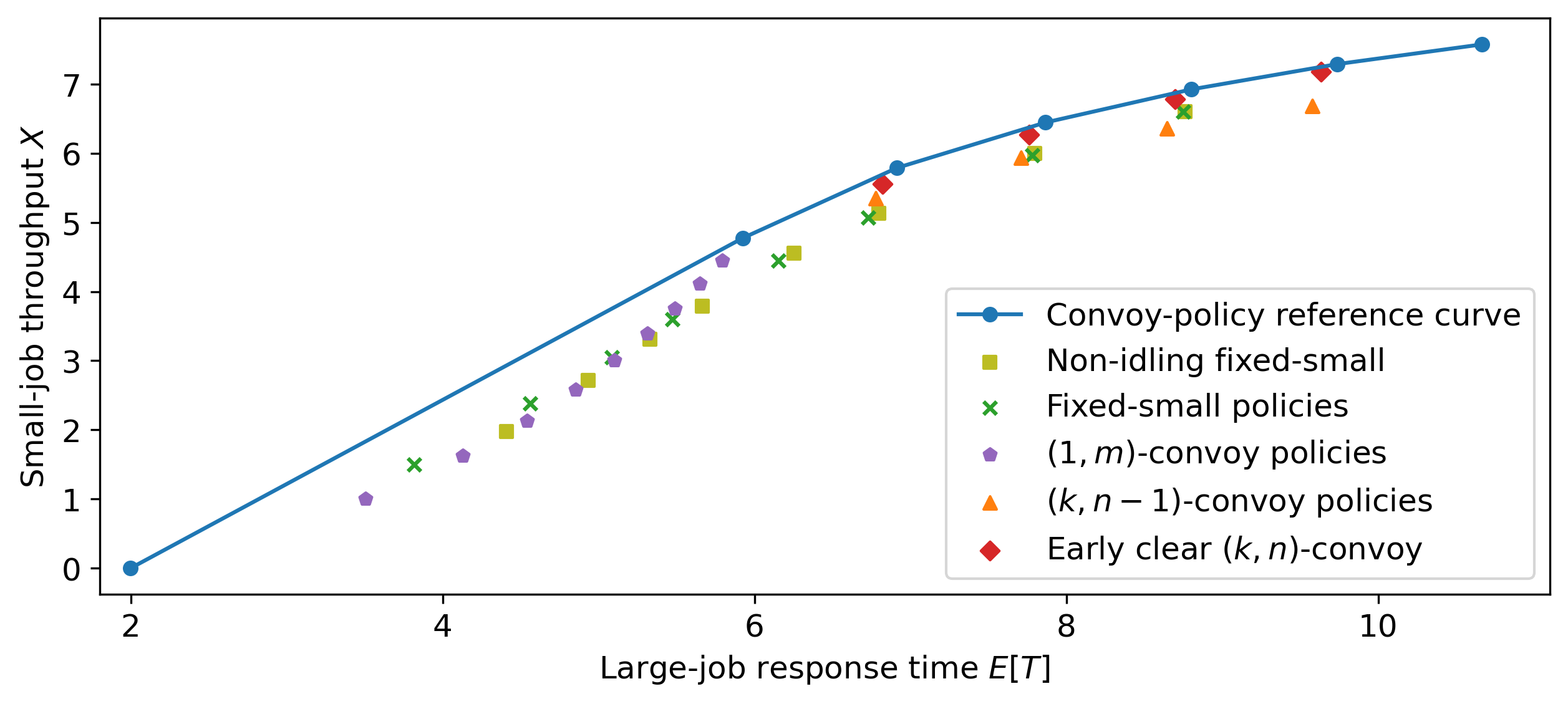}
    \caption{Small-job service times are Weibull with shape parameter $\alpha=0.5$ and the same mean $1/\mu_1$ as in \Cref{fig:weibull100}, this distribution has a decreasing hazard rate. All other parameters and policies are unchanged from \Cref{fig:weibull100}.}
    \label{fig:weibull50}
\end{figure}

The behavior changes more substantially when the shape parameter is reduced
to $\alpha=0.25$, as shown in \Cref{fig:weibull25}. In this case, some
of the early-clearing policies, as well as the $(1,n-1)$-convoy policy, can
outperform the curved formed by convex combinations of the
$(k,n)$-convoy policies. Because this curve may no longer be the optimal frontier, we call it the \textit{reference curve}. Thus, the optimality structure established under
exponential small-job service times need not persist when the small-job
distribution has a sufficiently strongly decreasing hazard rate. In such cases, policies that anticipate large-job arrivals and preemptively make servers available for them may achieve better performance.

\begin{figure}[h]
    \centering
    \includegraphics[width=0.8\textwidth]{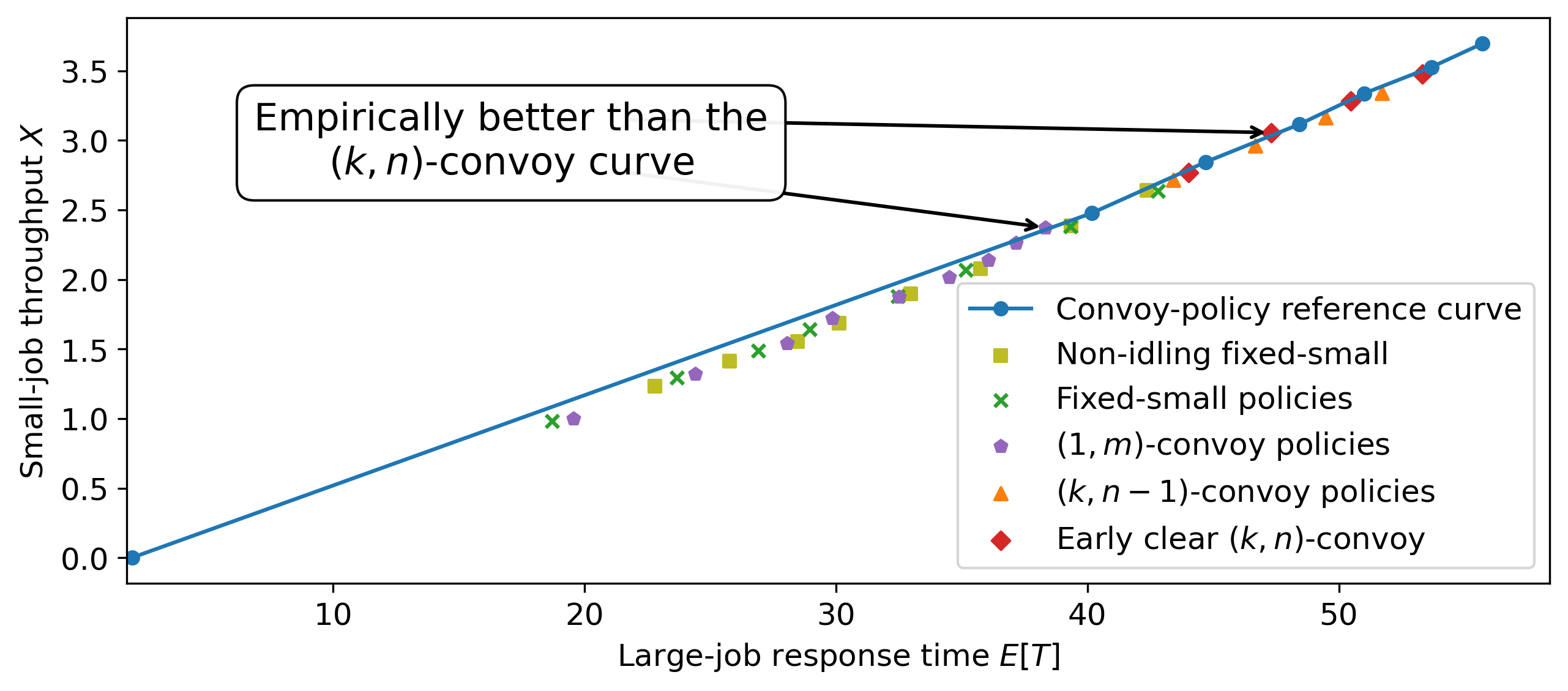}
    \caption{Small-job service times are Weibull with shape parameter $\alpha=0.25$ and the same mean $1/\mu_1$ as in \Cref{fig:weibull100}, this distribution has a decreasing hazard rate. All other parameters and policies are unchanged from \Cref{fig:weibull100}.}
    \label{fig:weibull25}
\end{figure}

A key intuition behind this change is the behavior of the largest small-job service time. As $\alpha$ decreases, increasing $n$ makes waiting for all simultaneously admitted small jobs to clear increasingly costly for some nonexponential distribution. In the exponential case, the additional penalty from the largest of a group of small jobs grows only as $\log n$, and this clearing cost is concave in $n$. Thus, adding more small jobs improves throughput while incurring a diminishing additional clearing penalty, which motivates the $(k,n)$-convoy policies.

For nonexponential service times, this need no longer be the case. Let $M_n=\max\{S_{1,1},\ldots,S_{1,n}\}$ for $n$ i.i.d.\ Weibull small-job service times. For shape parameter $\alpha$ and scale parameter $\theta$, $\E[M_n]\sim \theta(\log n)^{1/\alpha}$ as $n\to\infty.$
Hence, for $\alpha=1$ the maximum grows as $\log n$, while for $\alpha=0.5$ and $\alpha=0.25$ the asymptotic growth rates become $(\log n)^2$ and $(\log n)^4$, respectively.

For $\alpha=0.25$, the $(\log n)^4$ scaling grows particularly rapidly over the range $n\in[1,10]$, making early clearing more beneficial. This is reflected in \Cref{fig:weibull25}, where the $(1,n-1)$-convoy policy and some early-clearing policies lie above the reference curve formed by the $(k,n)$-convoy policies and their convex combinations. Additionally, this reference curve is no longer visibly strictly concave, it appears nearly linear and empirically nonconcave in some regions. However, for large $k$ this small-job throughput will eventually be constant, even though it is locally nearly linear for small $k$.

\subsection{Large jobs using fewer servers}

We next relax the assumption that a large job occupies all $n$ servers. Suppose instead that each large job requires $b<n$ servers simultaneously. We first consider the case $b>n/2$. In this regime, at most one large job can be served at a time, so the basic structure of the original model is preserved. 

We therefore expect the $(k,n)$-convoy policies and their convex combinations to retain the same optimality structure. The main difference is that, while a large job is in service, the remaining $n-b$ servers can continue serving small jobs. Consequently, the leftmost point of the candidate Pareto frontier is now the $(1,n-b)$-convoy policy rather than the $(1,0)$-convoy policy.

The $(k,n)$-convoy policy extends naturally to this setting. The system serves small jobs on all $n$ servers until the $k$th large-job arrival, at which point it clears only $b$ small jobs to make room for large-job service, while the remaining $n-b$ servers continue serving small jobs. The $b$ servers used for large-job service may vary across renewal cycles. The other benchmark policies are extended analogously.

We simulate the case $n=10$ and $b=8$, with exponential small-job distribution. All other parameters are unchanged. As shown in \Cref{fig:n8}, the $(k,n)$-convoy policies and their convex combinations form the candidate Pareto frontier, beginning at the $(1,n-b)=(1,2)$-convoy policy. All policies are simulated over 5 million renewal cycles. 

We observe that all benchmark policies lie below the reference curve and exhibit the same behavior as in the original all-server setting, with the convoy policies and their convex combinations outperforming all benchmark policies. Additionally, we empirically observe a larger gap between the optimal frontier and some of the heuristic policies, such as the fixed-small policies, compared to the exponential benchmark case.

\begin{figure}[h]
    \centering
    \includegraphics[width=0.8\textwidth]{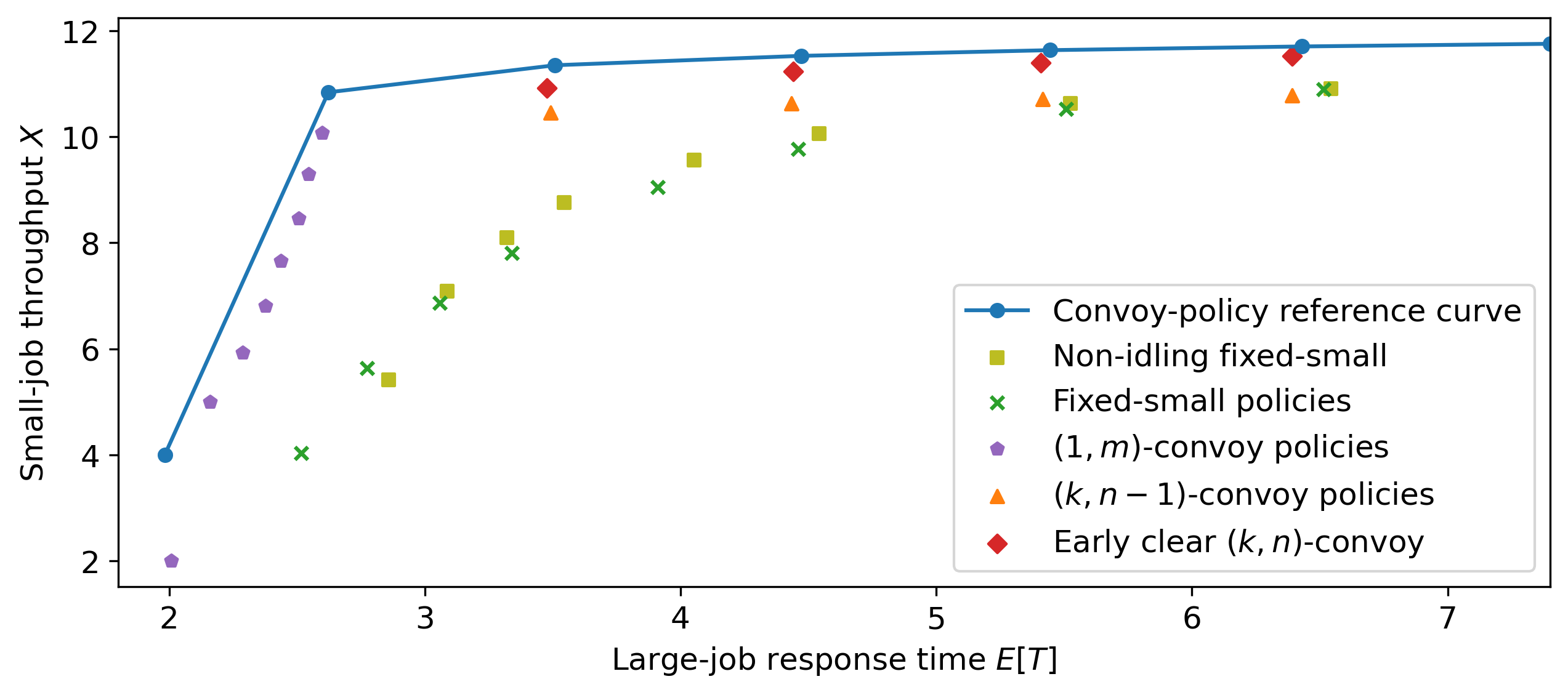}
    \caption{Large jobs require $b=8$ of the $n=10$ servers. Small-job service times are exponential, and all other parameters are unchanged. All policies are simulated over 5 million renewal cycles.}
    \label{fig:n8}
\end{figure}

The case $b\leq n/2$ is more complicated. In this regime, two or more large jobs can be served simultaneously. We therefore do not expect a direct analogue of the simple $(k,n)$-convoy family to characterize the entire Pareto frontier. In particular, the system must decide when to admit more than one large job. We expect this tradeoff to depend on the large-job arrival rate, among other parameters. Consequently, the class of potentially Pareto-optimal policies is substantially richer, and we leave a study of this regime for future work.

\section{Proofs}
\label{sec:proof}

In this section, we prove \Cref{thm:main}. We begin by establishing the required optimality lemmas, organized into three subsections corresponding to the three parts of the proof overview in \Cref{sec:overview}.

\subsection{Renewal cycles}
In this subsection, we derive explicit expressions for key performance metrics of the $(k,m)$-convoy policies, where the case $m=n$ will serve as our candidate family of Pareto-optimal policies. We then prove our first optimality result, namely, that any Pareto-optimal policy must serve large jobs exhaustively once large-job service begins.

Let $\E[T^{k,m}]$ denote the mean response time of a large job
under the $(k,m)$-convoy policy, $\E[L^{k,m}]$ the mean cycle
length, $\E[C^{k,m}]$ the mean total cumulative response time
of large jobs in a cycle, and $X^{k,m}$ the throughput of
small jobs.

\begin{restatable}{proposition}{propT}
\label{prop:kmconvoy}
Under the $(k,m)$-convoy policy:
\begin{itemize}
    \item The mean response time of large jobs is $\E[T^{k,m}]
        =
        \E[T^{M/G/1\text{-}L}]
        +
        \frac{k(k-1)}{2\lambda(k+\lambda\E[I_m])}
        +
        \frac{k\E[I_m]}{k+\lambda\E[I_m]}
        +
        \frac{\lambda\E[I_m^2]}{2(k+\lambda\E[I_m])}.$

    \item The mean cycle length is $\E[L^{k,m}]
        =
        \frac{k}{\lambda}
        +
        \frac{\E[I_m]+k}{1-\lambda}.$
    \item The throughput of small jobs is $X^{k,m}
        =
        \frac{m(\lambda+k\mu_1)(1-\lambda)}{k+\lambda\E[I_m]}.$
\end{itemize}
\end{restatable}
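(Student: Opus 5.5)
We will analyze one renewal cycle of the $(k,m)$-convoy policy and obtain all three quantities by renewal-reward. The cycle has three consecutive phases:
\begin{enumerate}
\item an \emph{accumulation phase}, from renewal until the $k$th large-job arrival, during which $m$ small jobs are in service;
\item the \emph{clearing phase}, of length $I_m$;
\item a large-job \emph{busy period}, served exhaustively.
\end{enumerate}
Under the $(k,m)$-convoy policy the small-job level is held at $m$ throughout the accumulation phase (replacing each completion), so it has the same length as $k$ i.i.d.\ $\exp(\lambda)$ interarrival times, with mean $k/\lambda$. By the memoryless property, at the $k$th arrival exactly $m$ fresh exponential small jobs are in service, so the clearing time is indeed distributed as $I_m$, independent of the past.

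For the mean cycle length, observe that from the $k$th arrival onward the large-job queue behaves like an $M/G/1$ busy period started by initial work $I_m+S_1'+\dots+S_k'$, where the $S_i'$ are the sizes of the $k$ waiting large jobs. Arrivals during the clearing phase simply add to the work, which is drained at unit rate once clearing ends. The standard fact that a busy period started by work $W$ has mean $\E[W]/(1-\lambda)$ gives the second phase plus busy period a mean length of $(\E[I_m]+k)/(1-\lambda)$. Adding $k/\lambda$ yields $\E[L^{k,m}]$. A short simplification gives
\begin{equation*}
\lambda\E[L^{k,m}]=\frac{k+\lambda\E[I_m]}{1-\lambda}.
\end{equation*}

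For the throughput, we count small-job completions per cycle. During accumulation, $m$ servers complete at total rate $m\mu_1$ for mean time $k/\lambda$, contributing $mk\mu_1/\lambda$ by Wald's identity. The clearing phase completes exactly the $m$ jobs in service. Thus
\begin{equation*}
\E[N^{k,m}]=\frac{m(k\mu_1+\lambda)}{\lambda},
\end{equation*}
and $X^{k,m}=\E[N^{k,m}]/\E[L^{k,m}]$ reduces to the stated formula after substituting $\lambda\E[L^{k,m}]$ above.

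For the mean response time, we plan to use a Fuhrmann--Cooper style decomposition for $M/G/1$ queues with generalized vacations, since the policy is an $M/G/1$ queue with a $k$-policy ($N$-policy) combined with a setup time $I_m$, as previewed in \Cref{sec:lit_4}. The decomposition says that
\begin{equation*}
\E[T^{k,m}]=\E[T^{M/G/1\text{-}L}]+\frac{\E[A]}{\E[V]},
\end{equation*}
where $V$ is the non-serving period (accumulation plus clearing) and $A$ is the area under the large-job count process over $V$, i.e.\ the expected number present at a uniformly random time in $V$, divided by $\lambda$.

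We then compute $\E[A]$ directly. During accumulation, the $i$th arrival waits $k-i$ interarrival times, contributing $k(k-1)/(2\lambda)$. During clearing, the $k$ waiting jobs contribute $k\E[I_m]$. Poisson arrivals during clearing contribute $\lambda\E[I_m^2]/2$, conditioning on $I_m$. Since $\E[V]=k/\lambda+\E[I_m]$, dividing and multiplying numerator and denominator by $\lambda$ produces exactly the three extra terms in the statement.

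The main obstacle is justifying the decomposition rigorously in this setting. We would either cite the classical result, whose hypotheses hold because service begins only at the end of $V$, the queue is served exhaustively, and the arrival process during $V$ is independent of service times. Alternatively, we would verify it directly by computing $\E[C^{k,m}]$ through the identity $\E[C]=\lambda\E[L]\E[T]$, tracking the interaction between jobs present at the start of the busy period and those arriving during it. That interaction is what turns the naive factor $1-\lambda$ into the clean denominators $k+\lambda\E[I_m]$.
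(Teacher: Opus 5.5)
Your cycle-length and throughput computations match the paper's. Your single busy period started by work $I_m+S_1'+\dots+S_k'$ is just the paper's sub-busy periods $B_1,\dots,B_k$ lumped together.

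Your response-time argument takes a genuinely different route, and as displayed it is off by a factor of $\lambda$. The Fuhrmann--Cooper decomposition concerns the time-average \emph{number} in system, $\E[N]=\E[N^{M/G/1\text{-}L}]+\E[A]/\E[V]$; note that $\E[A]/\E[V]$ is a count, not a time. Little's law then gives
\[
\E[T^{k,m}]=\E[T^{M/G/1\text{-}L}]+\frac{\E[A]}{\lambda\E[V]},\qquad \lambda\E[V]=k+\lambda\E[I_m].
\]
Your area $\E[A]=\frac{k(k-1)}{2\lambda}+k\E[I_m]+\frac{\lambda}{2}\E[I_m^2]$ is correct, and with it this produces the three stated terms exactly. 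Your formula $\E[A]/\E[V]$ instead gives $\lambda$ times them, and multiplying numerator and denominator by $\lambda$ cannot repair that. For example, with $k=2$, $m=0$ you would get $1/2$ instead of the $N$-policy value $1/(2\lambda)$. Your parenthetical gloss shows you had the right quantity in mind, so this is a slip, not a flaw in the method.

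The paper proceeds differently. It uses PASTA to split arrivals among the accumulation period, the setup sub-busy period $B_1$, and the later sub-busy periods. It computes each conditional mean response time: Welch's setup formula for $B_1$, the busy-conditioned $M/G/1$ response time for $B_{>1}$, and a priority relabeling justified by the conservation law. It then recombines these with some algebra.

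Your route is shorter and more transparent. Everything reduces to an elementary area computation over the non-serving period, and the extra delay appears directly as the mean number present during non-service divided by $\lambda$. It would also apply unchanged to any other non-anticipating rule for ending the non-serving period. The cost is that the Fuhrmann--Cooper theorem carries the full weight. Its hypotheses (Poisson arrivals, nonpreemptive size-blind service order, vacation decisions that do not anticipate future arrivals) do hold here. Your proposed fallback via $\E[C]=\lambda\E[L]\E[T]$ is only an identity, not a computation, so you should commit to the citation. The paper's route is more hands-on but likewise rests on external results, namely Welch's formula and the conservation law.
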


The proof of \Cref{prop:kmconvoy} is deferred to \ref{app:1}. In \Cref{prop:kmconvoy}, $I_m$ represents the setup cost, interpreted here as the time required to clear the $m$ small jobs before large-job service begins. The response time $\E[T^{k,m}]$ and cycle length $\E[L^{k,m}]$ formulas are a special case of a more general result about systems with setup costs.

An important property of $\E[T^{k,m}]$, $\E[L^{k,m}]$, and $\E[C^{k,m}]$ is that, for fixed $k$, they are increasing and concave functions of $m$. These results are proven in \ref{app:1}.

\begin{restatable}{lemma}{lemTconcave}
\label{lem:Tconcave}
    For fixed $k$, $\E[T^{k,m}]$ is increasing and concave in $m$.
\end{restatable}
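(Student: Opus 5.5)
The plan is to work directly from the closed form in \Cref{prop:kmconvoy}, with $k$ and $\lambda$ fixed, and to treat $m$ as a discrete variable. Set $x_m=\E[I_m]$ and $y_m=\E[I_m^2]$. Since $I_m$ is the maximum of $m$ i.i.d.\ $\exp(\mu_1)$ variables, the R\'enyi representation gives
\begin{equation*}
x_m=\frac{H_m}{\mu_1},\qquad y_m=\frac{H_m^2+H_m^{(2)}}{\mu_1^2},
\end{equation*}
where $H_m=\sum_{j\le m}1/j$ and $H_m^{(2)}=\sum_{j\le m}1/j^2$. From these, $\Delta x_m:=x_{m+1}-x_m=\frac{1}{(m+1)\mu_1}$, which is positive and decreasing in $m$. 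A short computation gives the key identity
\begin{equation*}
\Delta y_m=\frac{2H_{m+1}}{(m+1)\mu_1^2}=2x_{m+1}\,\Delta x_m .
\end{equation*}
Write $u_m=k+\lambda x_m$ and $f_m=\E[T^{k,m}]-\E[T^{M/G/1\text{-}L}]=\bigl(A+kx_m+\tfrac{\lambda}{2}y_m\bigr)/u_m$ with $A=k(k-1)/(2\lambda)$. Since $\E[T^{M/G/1\text{-}L}]$ does not depend on $m$, it suffices to prove that $f_m$ is increasing and concave in $m$.

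\textbf{First difference.} Clearing denominators, $f_{m+1}u_{m+1}-f_mu_m=k\Delta x_m+\tfrac{\lambda}{2}\Delta y_m$. Together with $u_{m+1}-u_m=\lambda\Delta x_m$ and the identity for $\Delta y_m$, this gives
\begin{equation*}
f_{m+1}-f_m=\Delta x_m\Bigl(1-\frac{\lambda f_m}{u_{m+1}}\Bigr).
\end{equation*}
Monotonicity therefore reduces to showing $\lambda f_m<u_{m+1}$. To check this, substitute $y=x^2+v$, where $v_m=H_m^{(2)}/\mu_1^2$ is the variance of $I_m$. This yields the decomposition
\begin{equation*}
f=\frac{x}{2}+\frac{k}{2\lambda}-\frac{k}{2\lambda u}+\frac{\lambda v}{2u}.
\end{equation*}
Because $H^{(2)}_m\le H_m^2$, we have $v\le x^2$, and hence $\lambda^2 v/(2u)\le \lambda x/2$. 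It follows that $\lambda f_m\le \lambda x_m+k/2<u_{m+1}$, so $f$ is strictly increasing in $m$.

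\textbf{Concavity.} Write $g_m=\lambda f_m/u_{m+1}\in[0,1)$. Then the first difference is $D_m=\Delta x_m(1-g_m)$, a product of two positive factors. The first factor, $\Delta x_m$, is already known to be decreasing. So it suffices to show that $g_m$ is nondecreasing, i.e.\ $f_{m+1}u_{m+1}\ge f_m u_{m+2}$.

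Using the first-difference formula, this inequality becomes a statement comparing $\Delta x_m(u_{m+1}-\lambda f_m)$ with $\lambda f_m\,\Delta x_{m+1}$. I expect this step to be the main obstacle. My first attempt will be to use $\Delta x_{m+1}<\Delta x_m$, which reduces the goal to the inequality $u_{m+1}\ge 2\lambda f_m$. That inequality need not hold in general, in particular when $A$ is large relative to $k$.

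If that reduction fails, the fallback is to prove $D_{m+1}\le D_m$ directly. For this I will compute $D_{m+1}$ by the same identity with $m$ shifted, and substitute $f_{m+1}=f_m+D_m$. The resulting inequality is then quadratic in $f_m$ with explicit harmonic-number coefficients. I will verify it using the bounds $0\le v\le x^2$, $\Delta x_{m+1}/\Delta x_m=(m+1)/(m+2)$, and the monotonicity of $H_{m+1}/(m+1)$ (which is equivalent to $H_{m+1}\ge (m+1)/(m+2)$).

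As a further fallback, I will use the decomposition of $f$ termwise. The terms $x/2$ and $-k/(2\lambda u)$ are concave in $m$: the first because $H_m$ has decreasing increments, the second as a concave increasing function of the concave sequence $u_m$. The only delicate term is $\lambda v/(2u)$, whose possible convexity would have to be absorbed by the strict concavity of $-k/(2\lambda u)$ and of $x/2$.
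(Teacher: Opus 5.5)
\textbf{Verdict: there is a genuine gap.} Your monotonicity half is correct, and it takes a different and shorter route than the paper's (the paper shows $-1/u_m$ and $y_m/u_m$ are each increasing). The identity $\Delta y_m=2x_{m+1}\Delta x_m$, the formula $f_{m+1}-f_m=\Delta x_m(1-\lambda f_m/u_{m+1})$, and the bound $\lambda f_m\le\lambda x_m+k/2<u_{m+1}$ all check out.

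The gap is concavity, which is the substantive part of the lemma and which the proposal never establishes. Your primary reduction aims at a false statement. Your own decomposition gives exactly $\lambda f_m=\tfrac{u_m}{2}+\tfrac{\lambda^2v_m-k}{2u_m}$. With $a:=\lambda/\mu_1$, one checks that $g_{m+1}\ge g_m$ is equivalent to $u_m^2+\tfrac{2(m+2)}{m+1}a\,u_m\ge(2m+3)\bigl(a^2H_m^{(2)}-k\bigr)$. The left side grows like $(\log m)^2$ and the right side like $2m(a^2\pi^2/6-k)$, so $g_m$ is eventually decreasing whenever $a^2\pi^2/6>k$. Concretely, for $k=1$, $\lambda=1/2$, $\mu_1=1/4$ one gets $g_5\approx0.5457>g_6\approx0.5449$, even though $D_6<D_5$ still holds there. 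So ``it suffices that $g_m$ is nondecreasing'' cannot be carried out in general. The real obstruction is $(\lambda/\mu_1)^2H_m^{(2)}$ versus $k$, not the size of $A$.

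The fallbacks are not executed, and with the tools you list they do not close the gap. Doing the direct comparison exactly (it is linear in $f_m$, not quadratic) gives
\[
D_{m+1}\le D_m\iff \lambda f_m\le\frac{u_{m+1}\,(u_{m+2}+a)}{u_{m+1}+2a},
\]
whose right side behaves like $u_m-a$ for large $m$.
\begin{itemize}
\item Your only upper bound on $f_m$ is $v\le x^2$, which yields $\lambda f_m\le u_m-k+\tfrac{k(k-1)}{2u_m}$. This is too weak once $a>k$: with the same parameters at $m=20$, the bound is $\approx7.20$, the threshold $\approx7.00$, and the true value $\approx4.43$.
\item The facts about $\Delta x_m$ and $H_{m+1}/(m+1)$ say nothing about $v$.
\item The termwise fallback only restates the difficulty. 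For large $m$ the term $v_m/u_m$ is decreasing and convex, so an explicit absorption estimate is needed, and none is given.
\end{itemize}

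The paper avoids splitting off the variance at all. It writes $\E[T^{k,m}]$ as a constant minus $\tfrac{k(k+1)}{2\lambda u_m}$ plus $\tfrac{\lambda}{2}\,y_m/u_m$. It then proves $y_m/u_m$ itself concave by expanding its second difference into an explicit polynomial in $1/(m+1)$, $H_m-1$ and $H_m^2-H_m^{(2)}$, using $H_m^2-H_m^{(2)}\ge2(H_m-1)$ plus a direct check for $m\le3$. To finish along your lines you need an input of comparable strength, one that uses the actual size of $H_m^{(2)}$ relative to $H_m$ and $k$ rather than just $v\le x^2$.
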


\begin{restatable}{lemma}{lemLCconcave}
\label{lem:LCconcave}
    For fixed $k$, both $\E[L^{k,m}]$ and $\E[C^{k,m}]$ are increasing and concave in $m$.
\end{restatable}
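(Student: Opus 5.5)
The plan is to reduce both claims to properties of the two moments of the clearing time $I_m$, and then to use the explicit formulas of \Cref{prop:kmconvoy}. Throughout, ``increasing and concave in $m$'' is meant in the discrete sense on $m\in\{0,1,\ldots,n\}$: the forward differences in $m$ are positive and nonincreasing. Since nonnegative linear combinations of such sequences (plus constants) preserve this property, it suffices to check it for $\E[I_m]$ and $\E[I_m^2]$.

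\emph{Moments of $I_m$.} By the memoryless property, $I_m$ is a sum of independent exponentials with rates $m\mu_1,(m-1)\mu_1,\ldots,\mu_1$, with $I_0=0$. Writing $H_m=\sum_{j=1}^m 1/j$, this gives
\[
\E[I_m]=\frac{H_m}{\mu_1}, \qquad \E[I_m^2]=\frac{1}{\mu_1^2}\Bigl(\sum_{j=1}^m \frac{1}{j^2}+H_m^2\Bigr).
\]
The increment of $\E[I_m]$ from $m-1$ to $m$ is $1/(\mu_1 m)$, which is positive and decreasing.

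For the second moment, the increment is
\[
\frac{1}{\mu_1^2}\Bigl(\frac{1}{m^2}+H_m^2-H_{m-1}^2\Bigr)=\frac{1}{\mu_1^2}\cdot\frac{2H_m}{m},
\]
using $H_m^2-H_{m-1}^2=(2H_{m-1}+1/m)/m$. This is positive. It is nonincreasing because
\[
\frac{H_{m+1}}{m+1}\le\frac{H_m}{m}
\iff mH_m+\frac{m}{m+1}\le (m+1)H_m
\iff \frac{m}{m+1}\le H_m,
\]
and the last inequality holds since $H_m\ge 1$ for $m\ge1$. The increment from $m=0$ to $m=1$ equals $2/\mu_1^2$, so the base case is covered as well. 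This verification of concavity of $\E[I_m^2]$ is the only non-routine step, and I expect it to be the main obstacle.

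\emph{Cycle length.} By \Cref{prop:kmconvoy}, $\E[L^{k,m}]=k/\lambda+(\E[I_m]+k)/(1-\lambda)$. This is an affine function of $\E[I_m]$ with positive coefficient $1/(1-\lambda)$, since $\rho=\lambda<1$. It is therefore increasing and concave in $m$.

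\emph{Cumulative response time.} Use the identity $\E[C^{k,m}]=\lambda\E[L^{k,m}]\E[T^{k,m}]$. First note that
\[
\lambda\E[L^{k,m}]=k+\frac{\lambda(\E[I_m]+k)}{1-\lambda}=\frac{k+\lambda\E[I_m]}{1-\lambda}.
\]
This factor exactly cancels the common denominator $k+\lambda\E[I_m]$ appearing in $\E[T^{k,m}]$. We obtain
\[
\E[C^{k,m}]=\frac{1}{1-\lambda}\Bigl((k+\lambda\E[I_m])\,\E[T^{M/G/1\text{-}L}]+\frac{k(k-1)}{2\lambda}+k\E[I_m]+\frac{\lambda}{2}\E[I_m^2]\Bigr).
\]
For fixed $k$, this is a constant plus a nonnegative combination of $\E[I_m]$ and $\E[I_m^2]$. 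The coefficient of $\E[I_m]$ is $\lambda\E[T^{M/G/1\text{-}L}]+k>0$, and the coefficient of $\E[I_m^2]$ is $\lambda/2>0$, both divided by $1-\lambda>0$. By the moment step above, $\E[C^{k,m}]$ is therefore increasing and concave in $m$, which completes the argument.
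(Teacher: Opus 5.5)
Your proposal is correct and follows the paper's proof essentially step for step. Both write $\E[L^{k,m}]$ as a positive affine function of $\E[I_m]$, and both use $\E[C^{k,m}]=\lambda\E[L^{k,m}]\E[T^{k,m}]$; cancelling $k+\lambda\E[I_m]$ then leaves a constant plus positive multiples of $\E[I_m]$ and $\E[I_m^2]$. Your increment computation $\E[I_m^2]-\E[I_{m-1}^2]=2H_m/(m\mu_1^2)$, together with its monotonicity check, supplies a step the paper only asserts: that $\E[I_m^2]$ is concave in $m$.
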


Next, we show that connecting neighboring points of the $(k,n)$-convoy policies in the $(\E[T],X)$-plane forms a curve whose region below the curve is convex. This is important because any convex combination of non-neighboring points lies strictly below the graph and therefore cannot be Pareto optimal. The proof of this lemma is deferred to \ref{app:1}.

\begin{restatable}{lemma}{lemConHypo}
\label{lem:pareto_hypograph}
Fix $n\ge 2$. Let $\Gamma_n$ be the union of
\begin{enumerate}
    \item the line segment joining $\left(\E[T^{1,0}],0\right)$ and $\left(\E[T^{1,n}],X^{1,n}\right)$, and
    \item the piecewise-linear curve obtained by connecting neighboring points of $\left(\E[T^{k,n}],X^{k,n}\right)$ for $k\geq 1$.
\end{enumerate}
Then the hypograph of $\Gamma_n$ is convex.
\end{restatable}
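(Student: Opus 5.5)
Write $T_k=\E[T^{k,n}]$ and $X_k=X^{k,n}$ for $k\ge1$, and let $P_0=(\E[T^{1,0}],0)$ and $P_k=(T_k,X_k)$. Convexity of the hypograph of a piecewise-linear curve is equivalent to two facts. First, the abscissae $\E[T^{1,0}]<T_1<T_2<\cdots$ must be strictly increasing, so that $\Gamma_n$ is the graph of a function. Second, the successive slopes
\[
s_0=\frac{X_1}{T_1-\E[T^{1,0}]},\qquad s_k=\frac{X_{k+1}-X_k}{T_{k+1}-T_k}\quad(k\ge1),
\]
must be nonincreasing in $k$, so that the function is concave. I will also confirm that $X_k$ is increasing and bounded, with limit $n\mu_1(1-\lambda)$, so the curve extends to the right with positive, decreasing slopes and the hypograph is the region below a concave nondecreasing function. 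I will verify both facts directly from the closed forms in \Cref{prop:kmconvoy}.

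\textbf{Step 1 (closed forms in $k$).} Set $a=\lambda\E[I_n]$ and $b=\lambda\E[I_n^2]$, and abbreviate $D_k=k+a$. Then
\[
X_k=n(1-\lambda)\,\frac{\lambda+k\mu_1}{D_k}=n(1-\lambda)\Big(\mu_1+\frac{\lambda-a\mu_1}{D_k}\Big).
\]
Since $\E[I_n]=\frac{1}{\mu_1}H_n$ with $H_n=\sum_{j\le n}1/j>1$ for $n\ge2$, we have $a\mu_1=\lambda H_n>\lambda$. So $X_k$ is increasing and concave in $k$, with
\[
X_{k+1}-X_k=n(1-\lambda)\,\frac{a\mu_1-\lambda}{D_kD_{k+1}}.
\]
This is the one place where $n\ge2$ is used. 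Similarly, write $T_k-\E[T^{M/G/1\text{-}L}]=f(k)/D_k$ with
\[
f(k)=\frac{k(k-1)}{2\lambda}+k\E[I_n]+\frac{b}{2}.
\]
Computing the difference gives $T_{k+1}-T_k=\dfrac{g(k)}{D_kD_{k+1}}$ with $g(k)=f(k+1)D_k-f(k)D_{k+1}$. A direct expansion shows that $g$ is a quadratic with positive leading coefficient $1/(2\lambda)$:
\[
g(k)=\frac{k^2+(2a-1)k}{2\lambda}+\frac{a}{\lambda}k+\text{const}.
\]
The constant term will involve $a\E[I_n]-b/2$ and related quantities. Granting that $g(k)>0$ (Step 2), the common factor $D_kD_{k+1}$ cancels and
\[
s_k=\frac{n(1-\lambda)(a\mu_1-\lambda)}{g(k)},
\]
so monotonicity of the slopes for $k\ge1$ reduces to showing that $g(k)$ is positive and increasing in $k$.

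\textbf{Step 2 (sign and monotonicity of $g$).} The main obstacle is the constant term of $g$, which involves $\E[I_n^2]$ against $\E[I_n]^2$. I will use the explicit moments of the maximum of $n$ exponentials:
\[
\E[I_n]=\frac{H_n}{\mu_1},\qquad \mathrm{Var}(I_n)=\frac{1}{\mu_1^2}\sum_{j\le n}\frac1{j^2}.
\]
From these I will check that $g(1)>0$ and that $g(k+1)-g(k)>0$ for all $k\ge1$. The latter should follow from the positive leading coefficient plus a linear term with nonnegative coefficient. If the sign of the constant is delicate, I would fall back on an interpretation argument: $T_{k+1}-T_k>0$ because the $(k+1,n)$ policy delays every cycle's first $k$ arrivals by one more interarrival gap.

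\textbf{Step 3 (the first segment).} I will show $s_0\ge s_1$. Here $\E[T^{1,0}]=\E[T^{M/G/1\text{-}L}]$, and by \Cref{prop:kmconvoy} with $k=1$,
\[
T_1-\E[T^{1,0}]=\frac{2\E[I_n]+b}{2(1+a)},\qquad X_1=\frac{n(1-\lambda)(\lambda+\mu_1)}{1+a}.
\]
Thus $s_0=\dfrac{2n(1-\lambda)(\lambda+\mu_1)}{2\E[I_n]+b}$, which I compare with $s_1=\dfrac{n(1-\lambda)(a\mu_1-\lambda)}{g(1)}$. This is again an explicit inequality in $\lambda$, $\mu_1$, $H_n$ and $\sum_{j\le n} 1/j^2$. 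I expect to prove it by cross-multiplying and using $a\mu_1-\lambda<a\mu_1$ together with lower bounds on $g(1)$. Combining Steps 1–3 gives a graph with increasing abscissae and nonincreasing slopes, hence concave, and so its hypograph is convex.
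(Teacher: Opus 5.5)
Your Steps 1 and 2 coincide with the paper's proof: the paper's denominator $k+k^2-\lambda^2\E[I_n^2]+2k\lambda\E[I_n]+2\lambda^2\E[I_n]^2$ is exactly $2\lambda g(k)$. Your displayed expansion of $g$ is off, though. The correct form is
\[
g(k)=\frac{k^2+k}{2\lambda}+\frac{a}{\lambda}\,k+\frac{\lambda}{2\mu_1^2}\bigl(H_n^2-H_n^{(2)}\bigr).
\]
So the constant is nonnegative and $g(k+1)-g(k)=(k+1+a)/\lambda>0$. The $k\ge1$ part therefore goes through without any fallback argument.

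The genuine gap is Step 3: the inequality $s_0\ge s_1$ is false for some parameter values, so cross-multiplying and bounding $g(1)$ cannot prove it. Write $c=\lambda/\mu_1$, $H=H_n$, $H^{(2)}=H_n^{(2)}$, and clear the positive denominators. Then $s_0\ge s_1$ is equivalent to
\[
2+2(H+1)c+\bigl(4H-H^2-H^{(2)}\bigr)c^2-H\bigl(H^2+H^{(2)}-2H\bigr)c^3\ge 0 .
\]
Since $H^2+H^{(2)}-2H>0$ for every $n\ge2$, this fails once small jobs are long relative to large-job interarrival times.

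Concretely, take $n=2$, $\lambda=1/2$, $\mu_1=1/10$, so $\E[I_2]=15$ and $\E[I_2^2]=350$. Measuring response time from $\E[T^{M/G/1\text{-}L}]$, the $(1,0)$, $(1,2)$ and $(2,2)$ points are $(0,0)$, $(205/17,\,6/85)$ and $(239/19,\,7/95)$. This gives $s_0=6/1025<1/168=s_1$. So the $(1,2)$-convoy point lies strictly below the chord from $(1,0)$ to $(2,2)$, and the hypograph of $\Gamma_2$ is not convex. For $n=10$ the same failure already occurs at $c=1$, while the paper's numerical example ($c=1/4$) is fine.

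The paper's proof has the same hole: it handles only the slopes for $k\ge1$ and asserts that the first comparison is ``similar.'' What your Steps 1--2 (and the paper) actually establish is concavity of the chain of $(k,n)$-points. To get the full statement you need either an extra hypothesis, such as $c$ below the unique positive root of the cubic above, or a different first piece. That piece would be the chord from $(1,0)$ to the $(k,n)$-point of maximal slope from $(1,0)$, i.e.\ the upper concave envelope.
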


Next, we prove that any policy that does not complete all large jobs in queue once large-job service has begun is suboptimal. This establishes the first step towards establishing of our main theorem, \Cref{thm:main}.

\begin{lemma}
\label{lem:exhaustion}
    Any Pareto-optimal policy for $(\E[T],X)$ is exhaustive with respect to large jobs. That is, once the system begins serving large jobs, it continues to do so until the large-job queue becomes empty.
\end{lemma}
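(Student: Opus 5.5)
The plan is an interchange argument on sample paths. Take any policy $\pi$ that, at some point after beginning large-job service, admits small jobs while large jobs are still waiting. Build a modified policy $\pi'$ that postpones those small-job admissions until the large-job queue has been emptied, and show that $\pi'$ weakly improves both $\E[T]$ and $X$, with strict improvement in $\E[T]$ whenever the deviation happens with positive probability. Once large-job service has started, all $n$ servers are free at the instant each large job completes, since large jobs need all servers. So the only decision at such an instant is whether to start the next waiting large job or to admit some small jobs.

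\textbf{Step 1: define $\pi'$.} Fix a sample path of large-job arrivals and sizes, together with the small-job service times. Consider the first time in a renewal cycle at which $\pi$, with $q\ge 1$ large jobs still waiting and large-job service already begun, admits small jobs. Let $D$ be the interval from that moment until $\pi$ next starts a large job; it contains small-job service and possibly idling. Under $\pi'$, that stretch of small-job activity is moved to after the current busy period of large jobs. Concretely, $\pi'$ serves the large jobs exhaustively: every waiting job, plus every arrival during large-job service. At the moment the queue first empties, $\pi'$ runs exactly the small-job schedule that $\pi$ ran during $D$, using the same small-job service times, and then follows $\pi$'s behaviour after $D$.

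\textbf{Step 2: small-job throughput does not decrease.} Because small jobs are always available and exponential, the sequence of small-job completions inside the transplanted block has the same distribution whether it is placed before or after the large-job service. Hence $\pi'$ completes at least as many small jobs per cycle as $\pi$. To handle cycle lengths, I would use a coupling in which total large-job work is unchanged, so the cycle of $\pi'$ is a rearrangement of $\pi$'s busy and small-job segments. Then $\E[L^{\pi'}]=\E[L^{\pi}]$ and $\E[N^{\pi'}]=\E[N^{\pi}]$, and so $X^{\pi'}=\E[N^{\pi'}]/\E[L^{\pi'}]=X^\pi$. Any cycle-length effects, for instance when the later renewal of $\pi$ itself depended on $D$, only shrink $\E[L]$ if the renewal state is reached earlier. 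I would handle those through the renewal-reward identity rather than pathwise.

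\textbf{Step 3: large-job response time strictly decreases.} Under $\pi$ there are $q\ge1$ jobs that wait through the whole of $D$, plus arrivals during $D$. Under $\pi'$ the large jobs are served earlier by the length of $D$, and each of them departs at least as early. Summing gives $\E[C^{\pi'}]\le \E[C^\pi]-q\,\E[D]$, which is strict because $\E[D]>0$ whenever a small job is admitted. Using $\E[C]=\lambda\E[L]\E[T]$ together with the equality of cycle lengths, this gives $\E[T^{\pi'}]<\E[T^\pi]$. Since $\pi'$ strictly dominates $\pi$, a non-exhaustive $\pi$ cannot be Pareto optimal. Iterating the modification over all such deviations, or invoking a limit argument over the countably many deviations in a cycle, gives the full claim.

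\textbf{Main obstacle.} The delicate step is the distributional bookkeeping in Steps 2 and 3 when $\pi$ is history-dependent. Moving the block $D$ changes the information available to $\pi$ afterwards, for example the large-job arrivals observed during $D$. I would resolve this by letting $\pi'$ simulate $\pi$ on a fictitious time axis: it feeds $\pi$ the arrival record it would have seen, which is possible because Poisson arrivals during the relocated intervals can be relabelled. By the strong Markov and memoryless properties, the joint law of the performance functionals is then preserved, apart from the strict reduction in waiting.
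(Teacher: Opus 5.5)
Your interchange idea is the right one, but moving $D$ past the \emph{entire} remaining busy period leaves the central step unproved. After the relocated block you say that $\pi'$ ``follows $\pi$'s behaviour after $D$,'' and this is not defined for a general history-dependent $\pi$. At the end of $D$, $\pi$ holds the $q$ original jobs plus the arrivals during $D$, and it is about to start a large job. At the end of the relocated block, $\pi'$ has already served those jobs, holds instead the arrivals that occurred during the relocated block, and is at a different time.

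If $\pi$ deviates again later in the busy period (blocks $D_2,D_3,\dots$), you never say where those blocks go in $\pi'$, and the two systems need not re-synchronize. Even when $\pi$ is exhaustive after $D$, you would need a workload-conservation argument to show that both systems empty at the same instant, and none is given. Hence the pathwise claims in Steps 2--3 are unsupported: that the cycle of $\pi'$ is a rearrangement of that of $\pi$, that $\E[L^{\pi'}]=\E[L^\pi]$, and that every large job departs at least as early.

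The fictitious-time-axis remedy does not repair this as stated. Relabelling arrivals can at best preserve the law of some functionals. It destroys the job-by-job pathwise identification that Step 3 needs to extract the saving $q\,\E[D]$. A minor point: a non-exhaustive policy may also just idle with large jobs waiting, and your trigger ``admits small jobs'' does not cover that case.

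The paper avoids all of this with a \emph{local} swap. It exchanges $D=[t_2,t_4)$ only with the service of the single large job $J$ that $\pi$ starts at $t_4$. By FCFS, $J$ is the head of the queue and is already present at $t_2$. Also, $D$ starts and ends with all $n$ servers free, so the two segments can be exchanged: $\pi'$ serves $J$ on $[t_2,t_3)$ with $t_3-t_2=t_5-t_4$, then replays the actions $\pi$ took in $D$ on $[t_3,t_5)$ with the same small-job service times. At $t_5$ both systems have completed exactly the same jobs, have the same queue and have all servers free, so $\pi'$ copies $\pi$ from then on.

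This $\pi'$ is non-anticipating, because it replays the decisions of $\pi$ with a positive delay. It has identical small-job completions and elapsed time, and it saves exactly $t_4-t_2$ on $J$. If $\pi'$ reaches the renewal state early, one concatenates its subcycles over the cycle of $\pi$. A single such strict improvement already shows that $\pi$ is not Pareto optimal, so no iteration or limit over deviations is needed. Your global relocation could be recovered by iterating this local swap one job at a time.
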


\begin{proof}
We prove the result by an interchange argument. Suppose, toward a contradiction, that a Pareto-optimal policy $\pi_1$ is not exhaustive with respect to large jobs. Then, with positive probability, there exists a sample path on which $\pi_1$ begins serving large jobs at time $t_1$, serves large jobs continuously until time $t_2$, but leaves at least one large job in the queue at time $t_2$. Instead of continuing to serve large jobs, $\pi_1$ performs actions that do not involve serving large jobs over an interval $[t_2,t_4)$. At time $t_4$, policy $\pi_1$ begins serving a large job $J$, completing it at time $t_5$.

We construct another policy $\pi_2$ on the same arrival and service sample path. The policy $\pi_2$ agrees with $\pi_1$ up to time $t_2$. At time $t_2$, instead of performing the non-large-job actions that $\pi_1$ performs over $[t_2,t_4)$, policy $\pi_2$ immediately serves the same large job $J$. Let $t_3$ be the completion time of $J$ under $\pi_2$, so that $t_3-t_2=t_5-t_4$. After time $t_3$, policy $\pi_2$ performs exactly the same non-large-job actions that $\pi_1$ performs over $[t_2,t_4)$, in the same order. Therefore, at time $t_3+(t_4-t_2)=t_5,$ both policies have completed the same large job $J$ and have performed the same non-large-job actions. Hence the two systems are in the same state at time $t_5$. From time $t_5$ onward, let $\pi_2$ follow $\pi_1$ exactly.

Thus, over the interval $[t_2,t_5]$, the two policies complete the same small jobs and perform the same amount of non-large-job work. Therefore, the small-job throughput of $\pi_2$ is the same as that of $\pi_1$ over this sample-path segment. It is easy to see that, relative to $\pi_2$, policy $\pi_1$ keeps the large job $J$ in the system for an additional time $t_5-t_3=t_4-t_2>0$, and therefore incurs exactly $t_4-t_2$ extra cumulative large-job response time over this sample-path segment.

Now consider renewal cycles defined by returns to the renewal state. Under the positive recurrence to the renewal state, these cycles are finite almost surely. Applying the above interchange to the first non-exhaustive large-job service period in a renewal cycle gives a policy that has the same cycle length and completes the same small jobs as $\pi_1$, but has strictly smaller cumulative large-job response time on a set of sample paths with positive probability. If $\pi_2$ reaches the renewal state before the corresponding renewal time of $\pi_1$, we simply concatenate the resulting subcycles and compare rewards over the full cycle of $\pi_1$, this does not change the long-run averages.

Since throughput and response time are time-average ratios involving the number of completions, cumulative response time, and cycle length, $\pi_2$ has the same long-run small-job throughput as $\pi_1$ but strictly smaller mean large-job response time. Thus, $\pi_2$ Pareto dominates $\pi_1$ with respect to the performance pair $(\E[T],X)$, contradicting the Pareto optimality of $\pi_1$.
\end{proof}

\subsection{Interarrival replications}
\label{sec:replication_proofs}

In this subsection, our goal is to construct, for a single interarrival period, a probabilistic mixture of threshold policies that replicates the corresponding behavior of an arbitrary policy $\pi$. More precisely, we show how to match the arrival-seen distribution induced by $\pi$ between two consecutive large-job arrivals, where the arrival-seen distribution is the distribution of the small-job level observed by the next large-job arrival. Applying this construction successively over interarrival periods then yields an overall replication of $\pi$ by threshold policies over a renewal cycle.

For each level $r\ge 1$, let $\tau_r\sim\mathrm{Exp}(r\mu_1)$
and $A\sim\mathrm{Exp}(\lambda)$ be independent random variables.
Define
\begin{align*}
p_r
:= \mathbb{P}(\tau_r<A)
= \frac{r\mu_1}{r\mu_1+\lambda}\quad\text{and}\quad q_r
&:= 1-p_r
= \frac{\lambda}{r\mu_1+\lambda}
\end{align*}
Equivalently, $p_r$ is the probability that, given the system is at level $r$,
a small completion occurs before a large arrival. Define the products $\beta_r := \prod_{j=r+1}^{n} p_j$ for $r=0,1,\ldots,n$ with $\beta_n := 1.$

Equivalently, $\beta_r$ is the probability that,
starting from level $n$, the system reaches level $r$
through successive small completions before the next large arrival occurs, if no additional small jobs are added to service.

From the definition, for each $r=1,\ldots,n$, $\beta_{r-1}
= \prod_{j=r}^{n} p_j
= p_r \prod_{j=r+1}^{n} p_j
= p_r \beta_r.$
Hence
\begin{align}\label{eq:lem3_0_1}
\beta_r q_r
= \beta_r(1-p_r)
= \beta_r - \beta_{r-1}.
\end{align}

Ordering rows and columns by levels $\{n,n-1,\dots,0\}$, we define the matrix $P$
whose columns are the threshold distributions is
\begin{align}
\label{eqn:matrix}
P =
\begin{pmatrix}
\beta_n & \beta_n q_n & \beta_n q_n & \cdots & \beta_n q_n \\
0 & \beta_{n-1} & \beta_{n-1} q_{n-1} & \cdots & \beta_{n-1} q_{n-1} \\
0 & 0 & \beta_{n-2} & \cdots & \beta_{n-2} q_{n-2} \\
\vdots & & \ddots & \ddots & \vdots \\
0 & \cdots & 0 & 0 & \beta_0
\end{pmatrix}.
\end{align}

In particular, each column of $P$ is the arrival-seen distribution of the small-job level at a large-job (Poisson) arrival under the Drop-to-$\ell$ policy with $n$ initial small jobs in service, with levels ordered from $n$ down to $0$. The first column is also the Hold-at-$n$ policy.

Let $u^{\pi}_{(k)}=(u_n,\ldots,u_1,u_0)^\top$ denote the long-run distribution of the small-job level observed between the $(k-1)$-th and $k$-th large-job arrivals. For brevity, we write $u^\pi$. By PASTA, this long-run distribution coincides with the arrival-seen distribution of the $k$-th large job arrival.

The goal is to construct, at the $(k-1)$-th large-job arrival, a probabilistic mixture of threshold policies such that the resulting long-run distribution between the $(k-1)$-th and the $k$-th large-job arrival is exactly $u^\pi$. Specifically, we select among the threshold policies according to a probability vector $v^\pi$ satisfying
\begin{align*}
v^{\pi} = P^{-1}u^{\pi }.
\end{align*}

This vector $v^\pi$ defines a replication strategy between the $(k-1)$-th and $k$-th large-job arrivals. The decision rule of an arbitrary policy $\pi$ within the interval is irrelevant beyond the induced long-run distribution $u^\pi$, and the probabilistic mixture $v^\pi$ reproduces $u^\pi$ exactly. In particular, if we denote the resulting mixture of threshold policies by MoT-$\pi$, then $u^{\mathrm{MoT}\text{-}\pi}=u^\pi.$

This replication strategy is stated under the assumption that the initial level at the $(k-1)$-th arrival is $n$. The general replication strategy, which allows the level at the $(k-1)$-th arrival to be strictly less than $n$, is given in \Cref{thm:2_1}.

Moreover, since the arrival-seen distribution coincides with the long-run distribution, matching $u$ also matches performance metrics that depend only on the level of small jobs over the interval, namely, the expected number of small jobs in service seen by the $k$-th arrival and the expected number of small-job completions between the $(k-1)$-th and $k$-th large-job arrivals.

We then apply this replication strategy inductively. Once the policy has been replicated up to the $(k-1)$-th arrival, we use $v_{(k)}^{\pi}=P^{-1}u_{(k)}^{\pi}$ to replicate the behavior of $\pi$ up to the next large-job arrival. Repeating this over successive arrivals yields a replication of $\pi$ by threshold policies over an entire renewal cycle.

We now construct $P^{-1}$ explicitly. We propose a candidate for $P^{-1}$ and then verify that $P P^{-1}=I$.

\begin{lemma}
\label{lem:2_1}
The matrix $P$ defined in \Cref{eqn:matrix} has an upper-triangular inverse $P^{-1}$. Moreover, each columns of both $P$ and $P^{-1}$ sum to $1$.
\end{lemma}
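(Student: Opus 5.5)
Since the paper says it will propose a candidate for $P^{-1}$ and verify $PP^{-1}=I$, I will do exactly that. I index rows and columns by levels $n,n-1,\dots,0$, so $P$ is upper triangular with diagonal $\beta_n,\beta_{n-1},\dots,\beta_0$, all strictly positive because $p_r>0$. Invertibility and upper-triangularity of the inverse are then immediate, since the inverse of an invertible upper-triangular matrix is upper triangular. The substantive part is the explicit form and the column-sum property.

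\textbf{Column sums of $P$.} Column $\ell$ (Drop-to-$\ell$) has entries $\beta_r q_r$ for $r=n,\dots,\ell+1$, then $\beta_\ell$ on the diagonal. By \eqref{eq:lem3_0_1}, $\beta_r q_r=\beta_r-\beta_{r-1}$, so the off-diagonal part telescopes to $\beta_n-\beta_\ell=1-\beta_\ell$. Adding the diagonal entry gives $1$. This matches the probabilistic reading that each column is a distribution.

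\textbf{Explicit inverse.} The column structure suggests writing $P$ through differences of adjacent columns. Column $\ell$ and column $\ell-1$ agree in rows $r\ge \ell+1$. In row $\ell$ they have $\beta_\ell$ versus $\beta_\ell q_\ell$, a difference of $\beta_\ell p_\ell=\beta_{\ell-1}$. In row $\ell-1$ they have $0$ versus $\beta_{\ell-1}$. Hence
\[
P e_\ell-P e_{\ell-1}=\beta_{\ell-1}(e_\ell-e_{\ell-1}),
\]
and together with $P e_n=e_n$ (since $\beta_n=1$) this yields $e_\ell-e_{\ell-1}=P\,(e_\ell-e_{\ell-1})/\beta_{\ell-1}$. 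Summing this telescopically from $n$ down gives
\[
P^{-1}e_\ell = e_n+\sum_{j=\ell+1}^{n}\frac{1}{\beta_{j-1}}\bigl(e_{j-1}-e_j\bigr).
\]
Equivalently, $P^{-1}$ is upper bidiagonal. Its diagonal entries are $1$ at level $n$ and $1/\beta_{\ell}$ at level $\ell<n$. The entry in row $\ell+1$, column $\ell$, is $-1/\beta_\ell$, and all other entries are $0$. I will state this candidate and check $PP^{-1}=I$ column by column. That check amounts to the difference identity above, which rests on $\beta_{r-1}=p_r\beta_r$.

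\textbf{Column sums of $P^{-1}$.} The column for level $n$ is $e_n$, which sums to $1$. Every other column has entries $1/\beta_\ell$ and $-1/\beta_\ell$, summing to $0$ before the contribution of the leading $e_n$ term. In the telescoped form, each difference $e_{j-1}-e_j$ sums to zero and $e_n$ contributes $1$, so every column sums to $1$.

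A slicker alternative for this last step: $\mathbf 1^\top P=\mathbf 1^\top$ implies $\mathbf 1^\top=\mathbf 1^\top P^{-1}$, so the column sums of $P^{-1}$ follow directly from those of $P$ without the explicit form. I will note this, but keep the explicit form, since later lemmas need $v^\pi=P^{-1}u^\pi$.

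The main obstacle is bookkeeping: the reversed level ordering, and getting the row and column conventions of the bidiagonal inverse right. I will double-check them on the case $n=2$.
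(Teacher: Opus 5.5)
There is one genuine error, although it does not undermine the lemma as stated. Your telescoped formula $P^{-1}e_\ell=e_n+\sum_{j=\ell+1}^{n}\beta_{j-1}^{-1}(e_{j-1}-e_j)$ is correct. The sentence after it (``Equivalently, $P^{-1}$ is upper bidiagonal \dots'') is false, and the candidate you then plan to verify fails the check $PP^{-1}=I$.

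Read the coefficients off your own formula. Column $\ell$ of $P^{-1}$ has $1/\beta_\ell$ in row $\ell$, and $\beta_i^{-1}-\beta_{i-1}^{-1}=-q_i/\beta_{i-1}$ in every row $i$ with $\ell<i\le n$ (using $\beta_{i-1}=p_i\beta_i$ and $\beta_n=1$). These entries are all nonzero, so $P^{-1}$ is a full upper-triangular matrix, exactly the one the paper writes down. In the paper's $n=2$ example ($\lambda=\mu_1=1/2$, levels ordered $(2,1,0)$):
\[
P^{-1}=\begin{pmatrix}1&-1/2&-1/2\\0&3/2&-3/2\\0&0&3\end{pmatrix},
\qquad\text{whereas your candidate is}\qquad
B=\begin{pmatrix}1&-3/2&0\\0&3/2&-3\\0&0&3\end{pmatrix},
\]
and $PB$ has rows $(1,-1,0)$, $(0,1,-1)$, $(0,0,1)$. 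In general, your adjacent-column identity gives $PBe_\ell=e_\ell-e_{\ell+1}$ for $\ell<n$, so $PB=D$, the bidiagonal differencing matrix. Thus $B=P^{-1}D$ is the map sending the CDF vector $(F_r)=D^{-1}u$ to $v$; this is the formula $v_r=F_r/\beta_r-F_{r-1}/\beta_{r-1}$ of the next lemma, not $P^{-1}$ itself. The same slip appears in your column-sum paragraph. The columns of $B$ with $\ell<n$ sum to $0$, and they contain no ``leading $e_n$ term'' to bring the sum to $1$.

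With that sentence deleted, the rest of your argument proves the lemma:
\begin{itemize}
\item the positive diagonal gives an upper-triangular inverse;
\item the telescoping $\beta_rq_r=\beta_r-\beta_{r-1}$ gives unit column sums for $P$;
\item $\mathbf 1^\top P=\mathbf 1^\top\Rightarrow\mathbf 1^\top P^{-1}=\mathbf 1^\top$, or your correct telescoped formula, gives unit column sums for $P^{-1}$.
\end{itemize}
The paper instead guesses $P^{-1}$ and verifies $PP^{-1}=I$ entrywise via $q_\ell/\beta_{\ell-1}=\beta_{\ell-1}^{-1}-\beta_\ell^{-1}$. Your route is more explanatory: observing that $e_n$ and $e_\ell-e_{\ell-1}$ are eigenvectors of $P$, with eigenvalues $1$ and $\beta_{\ell-1}$, produces the inverse rather than checking a guess. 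Your $\mathbf 1^\top$ argument also supplies the column-sum claim, which the paper's proof leaves implicit.
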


\begin{proof}
    The inverse is given by
    \begin{align*}
    P^{-1} =
    \begin{pmatrix}
    \beta_n^{-1} & -\frac{q_n}{\beta_{n-1}} & -\frac{q_n}{\beta_{n-1}} & \cdots & -\frac{q_n}{\beta_{n-1}} \\
    0 & \beta_{n-1}^{-1} & -\frac{q_{n-1}}{\beta_{n-2}} & \cdots & -\frac{q_{n-1}}{\beta_{n-2}} \\
    0 & 0 & \beta_{n-2}^{-1} & \cdots & -\frac{q_{n-2}}{\beta_{n-3}} \\
    \vdots & & \ddots & \ddots & \vdots \\
    0 & \cdots & 0 & 0 & \beta_0^{-1}
    \end{pmatrix}.
    \end{align*}

    We verify that $(PP^{-1})_{r,s}=\mathbf{1}\{r=s\}$, where rows and columns are indexed by levels
    $r,s\in\{0,1,\dots,n\}$ ordered as $n,n-1,\dots,0$. We write $P$ and $P^{-1}$ more compactly,
    \begin{align*}
    P_{r,\ell}=
    \begin{cases}
    0, & r<\ell,\\
    \beta_r, & r=\ell,\\
    \beta_r q_r, & r>\ell,
    \end{cases}
    \quad
    (P^{-1})_{\ell,s}=
    \begin{cases}
    0, & \ell<s,\\
    \beta_s^{-1}, & \ell=s,\\
    -\frac{q_\ell}{\beta_{\ell-1}}, & \ell>s,
    \end{cases}
    \end{align*}
    and $\beta_{\ell-1}=p_\ell\beta_\ell$ for $\ell\ge1$. Since $P_{r,\ell}=0$ for $\ell>r$ and
    $(P^{-1})_{\ell,s}=0$ for $\ell<s$, we have $(PP^{-1})_{r,s}=\sum_{\ell=s}^{r} P_{r,\ell}(P^{-1})_{\ell,s}\quad (r\ge s).$
    
    If $r=s$, then only the term $\ell=r$ is nonzero, hence $(PP^{-1})_{r,r}=P_{r,r}(P^{-1})_{r,r}=\beta_r\cdot \beta_r^{-1}=1.$
    If $r>s$, then $P_{r,r}=\beta_r$ and $P_{r,\ell}=\beta_r q_r$ for all $\ell\le r-1$, while
    $(P^{-1})_{r,s}=-q_r/\beta_{r-1}$. Therefore, $(PP^{-1})_{r,s}
    =\beta_r q_r\sum_{\ell=s}^{r-1}(P^{-1})_{\ell,s}+\beta_r(P^{-1})_{r,s},$ for all $r>s.$
    
    It remains to compute the partial column sum. Using the explicit form of $P^{-1}$, $\sum_{\ell=s}^{r-1}(P^{-1})_{\ell,s}
    =\beta_s^{-1}-\sum_{\ell=s+1}^{r-1}\frac{q_\ell}{\beta_{\ell-1}}.$ Since $\beta_{\ell-1}=p_\ell\beta_\ell$, we have $\frac{q_\ell}{\beta_{\ell-1}}
    =\frac{q_\ell}{p_\ell\beta_\ell}
    =\left(\frac{1}{p_\ell}-1\right)\frac{1}{\beta_\ell}
    =\frac{1}{\beta_{\ell-1}}-\frac{1}{\beta_\ell},$
    and thus the sum telescopes $\sum_{\ell=s+1}^{r-1}\frac{q_\ell}{\beta_{\ell-1}}
    =\sum_{\ell=s+1}^{r-1}\left(\frac{1}{\beta_{\ell-1}}-\frac{1}{\beta_\ell}\right)
    =\frac{1}{\beta_s}-\frac{1}{\beta_{r-1}}.$
    
    Substituting back yields $\sum_{\ell=s}^{r-1}(P^{-1})_{\ell,s}
    =\beta_s^{-1}-\left(\beta_s^{-1}-\beta_{r-1}^{-1}\right)
    =\beta_{r-1}^{-1}.$
    Hence $(PP^{-1})_{r,s}
    =\beta_r q_r\,\beta_{r-1}^{-1}+\beta_r\left(-\frac{q_r}{\beta_{r-1}}\right)=0,$
    which proves $(PP^{-1})_{r,s}=\mathbf{1}\{r=s\}$ for $r\ge s$. Since $PP^{-1}$ is upper triangular,
    this shows that $PP^{-1}=I$, since for $r<s$, all terms are zero. 
\end{proof}

Next, we show that the entries of $v^\pi=P^{-1}u^\pi$ are all nonnegative, so that $v$ defines a valid probabilistic mixture.

\begin{lemma}
\label{lem:2_2}
    Let $u^\pi=(u_n,\ldots,u_1,u_0)^\top$ denote the arrival-seen distribution of the system level observed by the $k$-th Poisson arrival under an arbitrary policy $\pi$, assuming the system is at level $n$ at the instant immediately after the $(k-1)$-th Poisson arrival. Then the vector $v^\pi:=P^{-1}u^\pi$ satisfies $v^\pi_r \ge 0$ for all $r \in \{0,1,\ldots,n\}$.
\end{lemma}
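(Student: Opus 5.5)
The plan is to compute the entries of $v^\pi=P^{-1}u^\pi$ explicitly from the formula for $P^{-1}$ in \Cref{lem:2_1}. This reduces nonnegativity to a local balance inequality between adjacent levels, which I would then prove by a level-crossing (rate conservation) argument over the interarrival period.

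\textbf{Step 1 (algebraic reduction).} Index by levels as before. Row $\ell$ of $P^{-1}$ has entry $\beta_\ell^{-1}$ on the diagonal and $-q_\ell/\beta_{\ell-1}$ in every column $s<\ell$, so
\begin{align*}
v^\pi_\ell=\frac{u_\ell}{\beta_\ell}-\frac{q_\ell}{\beta_{\ell-1}}\sum_{s<\ell}u_s .
\end{align*}
For $\ell=0$ the sum is empty and $v_0^\pi=u_0/\beta_0\ge 0$. For $\ell\ge 1$, use $\beta_{\ell-1}=p_\ell\beta_\ell$ and $p_\ell/q_\ell=\ell\mu_1/\lambda$. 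Then $v^\pi_\ell\ge 0$ is equivalent to
\begin{align*}
\lambda\,\mathbb{P}(X(A)<\ell)\;\le\;\ell\mu_1\,\mathbb{P}(X(A)=\ell),
\end{align*}
where $X(t)$ is the small-job level at time $t$ after the $(k-1)$-th arrival, $X(0)=n$, and $A\sim\mathrm{Exp}(\lambda)$ is the time until the $k$-th arrival.

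\textbf{Step 2 (PASTA/occupation identity).} The policy cannot anticipate the next Poisson arrival, so $A$ is independent of the level process $(X(t))$ run up to $A$. Hence
\begin{align*}
\mathbb{P}(X(A)=\ell)=\int_0^\infty\lambda e^{-\lambda t}\mathbb{P}(X(t)=\ell)\,dt=\lambda\,\E\Big[\int_0^A\mathbf{1}\{X(t)=\ell\}\,dt\Big].
\end{align*}
So the right-hand side of the target inequality equals $\lambda\cdot \ell\mu_1\E[\int_0^A\mathbf{1}\{X(t)=\ell\}dt]$.

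\textbf{Step 3 (crossing argument).} The policy may raise the level by arbitrary jumps, possibly passing through levels instantaneously. It can lower the level only through small-job completions, one at a time, because small jobs are non-preemptible. Let $D_\ell(t)$ be the number of transitions $\ell\to\ell-1$ in $[0,t]$ and $U_\ell(t)$ the number of upward jumps from below $\ell$ to at least $\ell$. Since $X(0)=n\ge\ell$, we have
\begin{align*}
\mathbf{1}\{X(A)<\ell\}=D_\ell(A)-U_\ell(A)\le D_\ell(A).
\end{align*}
While at level $\ell$, completions occur at rate $\ell\mu_1$ by memorylessness. The compensator of $D_\ell$ is therefore $\ell\mu_1\int_0^t\mathbf{1}\{X(s)=\ell\}ds$. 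Optional stopping at $A$ (which is integrable, with $D_\ell(A)$ dominated by a Poisson count) then gives
\begin{align*}
\E[D_\ell(A)]=\ell\mu_1\E\Big[\int_0^A\mathbf{1}\{X(s)=\ell\}ds\Big].
\end{align*}
Multiplying by $\lambda$ and combining with Step 2 yields $\lambda\mathbb{P}(X(A)<\ell)\le \ell\mu_1\mathbb{P}(X(A)=\ell)$, which proves the claim.

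\textbf{Main obstacle.} I expect Step 3 to need the most care, because $\pi$ is arbitrary: it may be history-dependent, randomized, or make instantaneous level changes. The key points to justify are that (i) the policy's actions before the arrival are independent of $A$, and (ii) the only way to cross from $\ell$ to below $\ell$ is an exponential completion at rate $\ell\mu_1$. Upward jumps are harmless, since they only help the inequality $\mathbf{1}\{X(A)<\ell\}\le D_\ell(A)$.
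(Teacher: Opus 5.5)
Your proof is correct, and its skeleton matches the paper's. The paper also reads off $v_0=F_0/\beta_0$ and $v_r=F_r/\beta_r-F_{r-1}/\beta_{r-1}$ (with $F_r=\mathbb{P}(R\le r)$) from the explicit $P^{-1}$. It then reduces nonnegativity to $F_{r-1}\le p_rF_r$, which rearranges to $q_rF_{r-1}\le p_ru_r$, i.e.\ exactly your $\lambda\,\mathbb{P}(X(A)<\ell)\le \ell\mu_1\,\mathbb{P}(X(A)=\ell)$.

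Where you genuinely differ is in how that key inequality is established. The paper simply asserts $\mathbb{P}(R\le r-1\mid R\le r)\le p_r$, with an informal remark about the policy that drops at the first possible completion. It gives no argument for why an arbitrary history-dependent or randomized policy, possibly dropping and later refilling, cannot exceed $p_r$. Your Steps 2--3 supply that argument. The discounting identity $\mathbb{P}(X(A)=\ell)=\lambda\E\bigl[\int_0^A\mathbf{1}\{X(t)=\ell\}\,dt\bigr]$ turns the right-hand side into $\lambda$ times the expected compensator of the $\ell\to\ell-1$ transitions. The pathwise bound $\mathbf{1}\{X(A)<\ell\}\le D_\ell(A)$, valid because the level can only fall one completion at a time, then closes it.

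The paper's route is shorter but leaves the crucial step at the level of intuition; yours is an actual proof. It also identifies the equality case precisely: no up-crossing of level $\ell$ before the arrival, i.e.\ $U_\ell(A)=0$ a.s. And it works verbatim for any starting level $m\ge\ell$ with arbitrary upward jumps, which is the form in which \Cref{thm:2_1} invokes the lemma through $P^{(m)}$. In a write-up, state explicitly that $X$ is the counterfactual level process the policy would follow if no arrival occurred, so the independence of $A$ used in Steps 2 and 3 is literal; you already flag this.
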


\begin{proof}
    Let $u=(u_n,\cdots,u_1,u_0)^\top$ be the arrival-seen distribution under this policy. In this proof, we suppress the superscript $\pi$.

    Define the CDF at level $r$, $F_r:=\sum_{j=0}^r u_j=\mathbb{P}(R\leq r)$ for $r=0,1,...,n,$
    where $R$ is the level seen by the next large arrival. From the upper-triangular structure of $P^{-1}$, write the components of $v=P^{-1}u$ as
    \begin{align*}
         v_0=\frac{\pi_0}{\beta_0}=\frac{F_0}{\beta_0}\quad\text{and}\quad  v_r=\frac{\pi_r}{\beta_r}-\frac{q_r}{\beta_{r-1}}\sum_{j=0}^{r-1}\pi_j=\frac{\pi_r}{\beta_r}-\frac{q_r}{\beta_{r-1}}F_{r-1},\quad r=1,...,n.
    \end{align*}
    
    Using $\beta_{r-1}=p_r\beta_r$, this becomes
        \begin{align*}
            v_0=\frac{F_0}{\beta_0}\quad\text{and}\quad v_r=\frac{\pi_r-\frac{q_r}{p_r}F_{r-1}}{\beta_r}=\frac{F_r}{\beta_r}-\frac{F_{r-1}}{\beta_{r-1}},\quad r=1,...,n.
        \end{align*}
         As a sanity check, $            \sum_{r=0}^n v_r =\frac{F_n}{\beta_n}=1$
        since $F_n=1$ and $\beta_n=1$. 
        
        Next, to show that $v_r\geq 0$, observe that for each $r=1,...,n$, $F_{r-1}\leq p_r F_r.$
        This is true because $\mathbb{P}(R\leq r-1)=\mathbb{P}(R\leq r-1\mid R\leq r)\cdot\mathbb{P}(R\leq r)$ and $\mathbb{P}(R\leq r-1\mid R\leq r)\leq p_r$, where the equality holds only for the policy that drops from $r$ to $r-1$ at the exact first possible completion.

        Consequently, divide $F_{r-1}\leq p_rF_r$ by $\beta_{r-1}=p_r\beta_r$ we have  $\frac{F_{r-1}}{\beta_{r-1}}\leq \frac{F_r}{\beta_r}.$ Therefore, the scaled CDF sequence $\{F_r/\beta_r\}_{r=0}^n$ is nondecreasing, and its successive differences are nonnegative, i.e., $v_r=\frac{F_r}{\beta_r}-\frac{F_{r-1}}{\beta_{r-1}}\geq0$ for $r=1,...,n,$ and trivially $v_0=F_0/\beta_0\geq 0$.
\end{proof}

We have now handled the case where the level immediately after the $(k-1)$-th large-job arrival is $n$, and showed that $v^\pi=P^{-1}u^\pi$ defines a valid mixture of threshold policies.

The next theorem extends this construction to the general case where the initial level is $m\le n$.

\begin{theorem}
\label{thm:2_1}
    Let $u^\pi=(u_n,\ldots,u_1,u_0)^\top$ denote the arrival-seen
    distribution of the system level observed by the $k$-th
    large-job arrival under an arbitrary policy $\pi$. We can
    explicitly construct a convex combination of threshold
    policies whose behavior between the $(k-1)$-th and $k$-th
    large-job arrivals replicates that of $\pi$.
\end{theorem}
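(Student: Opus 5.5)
The plan is to reduce the general case, where the level immediately after the $(k-1)$-th large-job arrival is some $m\le n$, to the construction already carried out for initial level $n$, together with one new ingredient. Condition on the level $m$ at the $(k-1)$-th arrival; since the MoT construction is built inductively and chooses threshold policies as a function of the current state, it suffices to replicate the conditional arrival-seen distribution $u^\pi(\cdot\mid m)$ for each $m$ and then mix over $m$ with the probabilities inherited from the previous step. For fixed $m$, the levels split into those at or below $m$, reachable by Drop-to-$\ell$ with $\ell\le m$, and those above $m$, reachable only by raising the level through Hold-at-$\ell$ with $\ell>m$.

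\emph{Step 1 (levels $\le m$).} Let $P^{(m)}$ be the $(m+1)\times(m+1)$ lower-right block of $P$, with $\beta$'s recomputed relative to $m$, namely $\beta^{(m)}_r=\prod_{j=r+1}^m p_j$. Since $p_r,q_r$ do not depend on $n$, \Cref{lem:2_1,lem:2_2} apply verbatim to this block, with $m$ in place of $n$. The column of Drop-to-$m$, which is Hold-at-$m$, puts mass $\beta^{(m)}_r q_r$ on $r<m$ and mass $1$ at level $m$ at time $0$.

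\emph{Step 2 (levels $> m$).} A Hold-at-$\ell$ choice with $\ell>m$ jumps immediately to $\ell$ and holds there, so its arrival-seen distribution is the point mass $\delta_\ell$. I will decompose $u^\pi(\cdot\mid m)$ by the first time $\pi$ raises the level above $m$ within the interarrival period. On the event that $\pi$ ever raises the level, I would argue the replicating policy should raise it immediately. This is the step needing care: strictly speaking, "replicate" means matching $u$, and hence by PASTA the level-dependent metrics. So I set the weight on Hold-at-$\ell$ equal to $u_\ell$ for $\ell>m$. The residual vector $\tilde u=(u_m,\dots,u_0)$ has total mass $1-\sum_{\ell>m}u_\ell$, and after normalizing I would apply $(P^{(m)})^{-1}$ from Step 1.

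\emph{Main obstacle: nonnegativity of the residual part.} Nonnegativity of $(P^{(m)})^{-1}\tilde u$ requires $F_{r-1}\le p_rF_r$ for $r\le m$, where $F$ is the CDF of $R$. The argument of \Cref{lem:2_2} conditions on $R\le r$. If $\pi$ can raise the level and later fall back, then $R\le r$ could be reached through paths that go above $m$ first. I expect the same inequality still holds: to be seen at level $\le r-1$ from a state at level $\ge r$, the system must pass through level $r$ and then accrue a completion at rate $r\mu_1$ before the Poisson arrival. By memorylessness of the arrival and of the small-job services, $\mathbb P(R\le r-1\mid \text{ever at }r)\le p_r$. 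Since the paths seen at level $r$ have also visited $r$, this gives $F_{r-1}\le p_r\,\mathbb P(\text{visit } r \text{ and } R\le r)$, which is at most $p_rF_r$. This sample-path/strong-Markov argument at the hitting time of level $r$ is what I would write carefully.

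Finally, each component is nonnegative and, by the column-sum property of \Cref{lem:2_1}, the weights sum to $1$. So $v^\pi$ is a valid convex combination reproducing $u^\pi$, and by PASTA it also matches the expected completions in the interval.
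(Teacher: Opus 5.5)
Your construction is the paper's: Hold-at-$\ell$ with weight $u_\ell$ for $\ell>m$, weights $(P^{(m)})^{-1}u_{\le m}$ on Drop-to-$\ell$ for $\ell\le m$, and the column-sum property of \Cref{lem:2_1} to get total mass $1$. The paper settles nonnegativity simply by citing \Cref{lem:2_2}. You are right that excursions of $\pi$ above $m$ need an argument, and so do excursions above $r$, which arise even within \Cref{lem:2_2}.

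\textbf{The gap.} The argument you sketch for this step does not work. The strong Markov step at the hitting time of level $r$ gives $\mathbb{P}(R\le r-1\mid \text{hit } r)\le p_r$. That yields only $F_{r-1}\le p_r\,\mathbb{P}(\text{hit } r \text{ before } A)$. Since $\{R\le r\}\subseteq\{\text{hit } r\}$, we have $\mathbb{P}(\text{hit } r)\ge F_r$, so this bound points the wrong way. A path that reaches $r$, is refilled above $r$, and is seen there counts toward $\mathbb{P}(\text{hit } r)$ but not toward $F_r$. Your intermediate inequality $F_{r-1}\le p_r\,\mathbb{P}(\text{visit } r,\ R\le r)$ is just the target restated, since that event is exactly $\{R\le r\}$. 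It does not follow from the hitting-time bound.

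\textbf{The fix.} The inequality is still true for every non-anticipating policy, but it needs an occupation-time argument rather than a hitting-time one. Write $X_t$ for the level at time $t$ after the $(k-1)$-th arrival, and $A\sim\mathrm{Exp}(\lambda)$ for the time to the next arrival. $A$ is independent of the level process, which $\pi$ generates without foreknowledge of $A$.
\begin{itemize}
\item Then $u_r=\lambda\,\E\bigl[\int_0^A\mathbf{1}\{X_t=r\}\,dt\bigr]$.
\item Let $N_r$ be the number of small-job completions occurring at level $r$ before $A$. It has compensator $r\mu_1\int\mathbf{1}\{X_t=r\}\,dt$, so $\E[N_r]=(r\mu_1/\lambda)\,u_r=(p_r/q_r)\,u_r$.
\item The level starts at $m\ge r$ and can only decrease through completions, so $\{R\le r-1\}\subseteq\{N_r\ge 1\}$.
\item Hence $F_{r-1}\le\mathbb{P}(N_r\ge1)\le\E[N_r]=(p_r/q_r)\,u_r$, i.e.\ $q_rF_{r-1}\le p_ru_r$, which rearranges to $F_{r-1}\le p_rF_r$.
\end{itemize}
This closes your step, and it also gives a rigorous version of the conditioning claim in the proof of \Cref{lem:2_2}.

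\textbf{Minor point.} Your description of the Hold-at-$m$ column is off. In $P^{(m)}$ that column is the point mass at $m$. The entries $\beta^{(m)}_r q_r$ appear in the Drop-to-$\ell$ columns at rows $r>\ell$.
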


\begin{proof}
    Suppose that at the instant of the $(k-1)$-th Poisson arrival the system is at level $m$. We construct a mixture of threshold policies whose arrival-seen distribution at the $k$-th Poisson arrival matches $u=(u_n,\ldots,u_1,u_0)^\top$.
    In this proof, we suppress the superscript $\pi$.
    
    The construction is obtained by splitting the target distribution into two parts. For levels above $m$, we use Hold-at-$\ell$ policies with weights given directly by $u$. For levels at or below $m$, we use a mixture of Drop-to-$\ell$ policies. Specifically, for $\ell > m$, we use a probabilistic mixture of Hold-at-$\ell$ policies, choosing Hold-at-$\ell$ with probability $u_\ell$.
    
    On the other hand, for $\ell \leq m$, let $u_{\leq m} := (u_{m},\ldots,u_1,u_0)^\top$ be the subvector corresponding to levels below $m$. Consider the matrix $P^{(m)}$ in place of $P$ in \Cref{lem:2_1}, where $P^{(m)}$ is defined analogously to $P$ but on the reduced state space $\{m,m-1,\ldots,0\}$. By \Cref{lem:2_2}, the vector $v := (P^{(m)})^{-1}u_{\leq m}$
    exists and satisfies $v_\ell\ge 0$ for all $\ell\in\{m,m-1,...,0\}$. We then use a probabilistic mixture of Drop-to-$\ell$ (from level $m$) policies, choosing Drop-to-$\ell$ with probability $v_\ell$ for each $\ell\in\{m,m-1,\ldots,0\}$.

    By \Cref{lem:2_1}, the columns of $P^{(m)}$ and $(P^{(m)})^{-1}$ sum to $1$. So that $\sum_{\ell=0}^{m} v_\ell= \sum_{\ell=0}^{m} u_\ell,$
    and therefore the probabilities assigned for the case $\ell>m$ and $\ell\leq m$ sum to 1.
    
    Thus the resulting convex combination of simple policies replicates the arrival-seen distribution $u$ of the $k$-th Poisson arrivals.
\end{proof}

\Cref{thm:2_1} gives the general one-interarrival replication result. Applying it successively over interarrival intervals yields a full replication of an arbitrary policy by threshold policies.

\begin{definition}
\label{def:motpi}
Given a policy $\pi$, define MoT-$\pi$ as the mixture-of-threshold replication of $\pi$ as follows:
\begin{enumerate}
    \item Suppose MoT-$\pi$ has replicated $\pi$ up to the $(k-1)$-th large-job arrival, and let $m$ be the small-job level at that arrival. Let $u_{(k)}^\pi=(u_n,\ldots,u_1,u_0)^\top$ denote the arrival-seen distribution of the small-job level at the $k$th large-job arrival under $\pi$, conditional on the current state.
    \item During the interarrival period between the $(k-1)$-th and $k$-th large-job arrivals, MoT-$\pi$ selects a threshold policy according to the mixture constructed in \Cref{thm:2_1}. 
    \item This construction is repeated at each subsequent large-job arrival until the renewal cycle ends.
\end{enumerate}
\end{definition}

By construction, MoT-$\pi$ matches the arrival-seen small-job level distribution of $\pi$ over every interarrival period, and hence replicates $\pi$ over the entire renewal cycle. 

\begin{corollary}
\label{cor:2_1}
Under \Cref{def:motpi}, $\mathrm{MoT}\text{-}\pi$ replicates the long-run behavior of $\pi$ over the entire renewal cycle.
\end{corollary}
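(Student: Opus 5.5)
The plan is to prove \Cref{cor:2_1} by induction over the interarrival periods of a renewal cycle, applying \Cref{thm:2_1} at each step, and then to pass from per-period matching to the renewal-reward quantities $\E[N]$, $\E[L]$, $\E[C]$. From these the long-run metrics follow by $X=\E[N]/\E[L]$ and $\E[T]=\E[C]/(\lambda\E[L])$.

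\textbf{Induction over interarrival periods.} The inductive hypothesis is that, after the first $k-1$ large-job arrivals of the cycle, the joint law of the state of MoT-$\pi$ equals that of $\pi$. The state here is the small-job level, the number of large jobs present, and whether the clearing or large-job phase has begun. Since large-job service is exhaustive by \Cref{lem:exhaustion}, the only decisions before a large job enters service concern the small-job level. Conditional on the current state at the $(k-1)$-th arrival, \Cref{thm:2_1} gives a mixture of Hold-at-$\ell$ and Drop-to-$\ell$ policies whose level seen by the $k$-th arrival has the same law $u^\pi_{(k)}$ as under $\pi$. Nonnegativity of this mixture is \Cref{lem:2_2} and normalization is \Cref{lem:2_1}. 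Averaging over the state at the $(k-1)$-th arrival, which by hypothesis has the same law under both policies, gives the same unconditional law at the $k$-th arrival. The number of large jobs present is a function of the arrival count alone until the clearing phase starts, so it is also matched. Once the clearing phase begins, the remaining evolution (clearing from level $m$, then an exhaustive busy period) depends only on the level at that moment, so it too is matched.

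\textbf{Matching the cycle quantities.} By PASTA, the level distribution seen by the $k$-th arrival equals the time-average level distribution over the $k$-th interarrival period. The expected number of small-job completions in that period is $\mu_1$ times the time-integral of the level, so the expected small-job work in each period is matched. The large-job cumulative response time before service is $(k-1)\times$ the period length, which is independent of the policy's choices. The clearing time and the busy period depend only on the level and the queue length when clearing starts, and both are matched. Summing over $k$ then gives equality of $\E[N]$, $\E[L]$ and $\E[C]$, and hence of $X$ and $\E[T]$. To justify the exchange of sum and expectation I would use positive recurrence, so that $\E[L]<\infty$, together with monotone convergence, since all summands are nonnegative.

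\textbf{Main obstacle.} The hardest step is the PASTA-based claim that matching only the level seen at the next arrival also matches the time-integral of the level over the period. The arrival time is independent of the policy's actions only when the policy is non-anticipating, and $\pi$ may depend on history. My first attempt would condition on the full history at the $(k-1)$-th arrival. The interarrival time $A\sim\mathrm{Exp}(\lambda)$ is then independent of the level path $R(t)$ up to time $A$. This gives $\mathbb{P}(R(A)=r)=\lambda\,\E\!\left[\int_0^A \mathbf{1}\{R(t)=r\}\,dt\right]$, so the expected occupation time at each level equals $u_r/\lambda$ under both $\pi$ and MoT-$\pi$. This identity is exactly what is needed for both completions and time, and it should extend through the history conditioning.
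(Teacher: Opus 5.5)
Your overall route matches the paper's. You apply \Cref{thm:2_1} at each large-job arrival, and you use PASTA to turn matched arrival-seen level laws into matched expected small-job completions, $\frac{\mu_1}{\lambda}\E[R_j]$ per period; your identity $\mathbb{P}(R(A)=r)=\lambda\,\E[\int_0^A\mathbf{1}\{R(t)=r\}\,dt]$ for non-anticipating policies is the right tool here. The paper asserts the corollary essentially by construction from these two facts, and your treatment of $\E[N]$ is a correct, more careful version of it.

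\textbf{Where the argument fails.} The step that does not go through as written is your matching of $\E[L]$ and $\E[C]$. You charge large-job waiting as $(k-1)$ times whole period lengths, and you claim that the clearing time and busy period depend only on the level and queue length at the start of clearing, ``both matched''. The construction matches only the levels seen at arrival epochs. It does not match when clearing begins, or from which level. A general $\pi$ may begin clearing between arrivals, whereas MoT-$\pi$ can only begin clearing at an arrival epoch, through a Drop-to-$0$ branch from the level seen there.

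For instance, suppose $\pi$ is at level $n$ at arrival $k-1$, lets the first small job finish unreplaced, holds at level $n-1$ for a deterministic time $t_0$, and then clears.
\begin{itemize}
\item Under $\pi$, clearing starts from level $n-1$ at a non-arrival epoch and takes time $I_{n-1}$.
\item MoT-$\pi$ puts weight $v_0=F_0/\beta_0=e^{-\lambda t_0}$ on Drop-to-$0$, so it clears from level $n$ at the arrival epoch and takes time $I_n$.
\item $\pi$ is never clearing at level $n$ when the next arrival occurs, whereas MoT-$\pi$ is, with probability $e^{-\lambda t_0}q_n$.
\end{itemize}
So neither the clearing-start level, nor the clearing time, nor the phase component of your inductive hypothesis is matched in law. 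The period in which service begins is also not a whole pre-service period.

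\textbf{How to repair it.} The conclusion still holds, and your own identity repairs it, because only expectations matter. First remove idling at level $0$ while large jobs wait; this only helps $\pi$, and it is the convention under which the paper identifies Drop-to-$0$ with clearing. Let $\tau$ be the epoch at which large-job service begins, and let $a_j$ be the $j$th arrival epoch of the cycle. For $j\ge 2$, $\mathbb{P}(\tau>a_j)$ is then exactly the mass that $u_{(j)}$ puts on positive levels.
\begin{itemize}
\item Applying your identity to the indicator $\mathbf{1}\{t<\tau\}$ gives $\E\bigl[\int_{a_{j-1}}^{a_j}\mathbf{1}\{t<\tau\}\,dt\bigr]=\mathbb{P}(\tau>a_j)/\lambda$ for $j\ge2$. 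So the pre-service part of period $j$ contributes $\mathbb{P}(\tau>a_j)/\lambda$ to $\E[L]$ and $(j-1)\,\mathbb{P}(\tau>a_j)/\lambda$ to $\E[C]$.
\item Service begins in $[a_{j-1},a_j)$ with $j-1$ queued jobs with probability $\mathbb{P}(\tau\ge a_{j-1})-\mathbb{P}(\tau\ge a_j)$.
\item Arrivals after the stopping time $\tau$ again form a Poisson process. Hence the expected remaining length and cost of the exhaustive busy period depend only on $j-1$.
\end{itemize}

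Hence every contribution to $\E[N]$, $\E[L]$ and $\E[C]$ is a function of the marginal laws $u_{(j)}$ alone. Carry the induction on these marginals: MoT-$\pi$ uses $\pi$'s conditional law of the next arrival-seen level given the current one, which preserves every marginal. Drop the claim that clearing times or phases are matched.
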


We conclude this subsection with a simple example illustrating the
one-interarrival replication construction. Let $n=2$ and $\lambda=\mu_1=1/2$. Consider a policy $\pi$
that follows the $(2,2)$-convoy policy, except for the following
modification between the first and second large-job arrivals.
Starting with two small jobs in service, $\pi$ replaces the first
small job that completes, but does not replace the second.
It then holds the system at level $1$ until the second large-job
arrival. If that arrival occurs before the second small-job
completion, the system is still at level $2$. At the second
large-job arrival, $\pi$ enters the clearing phase and subsequently
serves large jobs exhaustively.

Here, $p_2=2/3$ and $p_1=1/2$. With levels ordered as
$(2,1,0)$, the threshold matrix is
\begin{align*}
P=
\begin{pmatrix}
1 & 1/3 & 1/3\\
0 & 2/3 & 1/3\\
0 & 0   & 1/3
\end{pmatrix},
\end{align*}
whose columns correspond to Hold-at-$2$, Drop-to-$1$, and
Drop-to-$0$, respectively.

Under $\pi$, the second large-job arrival sees level $1$ exactly
when two small-job completions occur before that arrival.
Since the first completed job is replaced, the total small-job
completion rate remains $2\mu_1$ until the second completion.
Thus, $u^\pi=\begin{pmatrix}1-p_2^2, p_2^2, 0\end{pmatrix}^{\top}
=\begin{pmatrix}5/9, 4/9 ,0\end{pmatrix}^{\top},$ and $v^\pi=P^{-1}u^\pi=\begin{pmatrix}1/3, 2/3, 0\end{pmatrix}^{\top}.$

Thus, at the first large-job arrival, MoT-$\pi$ selects
Hold-at-$2$ with probability $1/3$ and Drop-to-$1$ with
probability $2/3$. The Hold-at-$2$ branch follows the
$(2,2)$-convoy policy over this interval. The Drop-to-$1$
branch clears one small job at the first completion
and then holds at level $1$. Its arrival-seen distribution is
$(1/3,2/3,0)^\top$, so the mixture gives $u^{\mathrm{MoT}\text{-}\pi}
=
\frac13\begin{pmatrix}1,0,0\end{pmatrix}^{\top}
+
\frac23\begin{pmatrix}1/3 , 2/3 , 0\end{pmatrix}^{\top}
=
\begin{pmatrix}5/9 , 4/9 , 0\end{pmatrix}^{\top}
=
u^\pi.$

\subsection{Suboptimality before clearing}
\label{sec:suboptimality_proof}

\Cref{cor:2_1} states that any policy $\pi$ can be replicated by a mixture of threshold policies, which we denote by MoT-$\pi$. In this section, we show that certain threshold-policy choices in MoT-$\pi$ are suboptimal.

Our focus is on the penultimate interarrival period of a renewal cycle, that is, the interarrival period immediately before the clearing phase, where a Drop-to-$0$ policy is used. Note that in a MoT-$\pi$ policy, the only threshold policy that can reach level $0$ is a Drop-to-$0$ policy selected at a large-job arrival, which by definition corresponds to the clearing phase.

The key idea is to improve any MoT-$\pi$ policy by modifying only the penultimate interarrival period before the clearing phase. Which interarrival period is penultimate may itself be random, since the policy may choose probabilistically whether to enter the clearing phase after the next large-job arrival. However, because the probability of each such branch can be computed in advance, we can condition on that branch and modify the corresponding penultimate interarrival period with the same probability. We then replace the original threshold-policy choice in that period by a different mixture of threshold policies that preserves $\E[N]$, where $\E[N]$ is the mean number of small jobs completed in a renewal cycle, while yielding strictly smaller large-job metrics.

During this penultimate interarrival period, MoT-$\pi$ must select one of the following threshold policies:
\begin{enumerate}
\item Hold-at-$m$: the policy holds the small-job level at $m$ until the next large-job arrival, after which the system enters the clearing phase.
\item Drop-to-$\ell$: the policy first decreases the small-job level from $m$ to $\ell$, then holds at level $\ell$ until the next large-job arrival, after which the system enters the clearing phase.
\end{enumerate}

We show that, with respect to the primary performance metrics $(X,\E[T])$, every threshold-policy choice in the penultimate interarrival period other than Drop-to-$0$ and Hold-at-$n$ is suboptimal. We will make use of the the auxiliary performance metrics $\E[N]$.

To carry this out, we consider any threshold-policy choice in the penultimate interarrival period other than Drop-to-$0$ and Hold-at-$n$, namely, Hold-at-$m$ with $m<n$ or Drop-to-$\ell$ with $\ell>0$, and compare it against mixtures of two extreme policies:
\begin{enumerate}
\item MoT-Full-$\pi$, which selects Hold-at-$n$ in the penultimate interarrival period and then Drop-to-$0$ in the subsequent clearing phase;
\item MoT-Clear-$\pi$, which selects Drop-to-$0$ already in the penultimate interarrival period, and in all subsequent periods.
\end{enumerate}
Both MoT-Full-$\pi$ and MoT-Clear-$\pi$ agree with MoT-$\pi$ up to the start of the final interarrival period before the clearing phase, and differ from MoT-$\pi$ only in their behavior during that penultimate interarrival period among each probability branch. Specifically, we explicitly construct a policy MoT-Mix-$\pi$, defined as a mixture of MoT-Full-$\pi$ and MoT-Clear-$\pi$ with a suitably chosen mixing probability $\alpha\in[0,1]$. With this choice of $\alpha$, MoT-Mix-$\pi$ matches policy $\pi$'s $\E[N]$ while achieving strictly smaller large-job metrics.

\subsubsection{Suboptimal Hold-at-$m$}

In this subsection, we show that selecting Hold-at-$m$ for some $m<n$ in the penultimate interarrival period is suboptimal. We distinguish two cases according to whether the penultimate interarrival period is also the first interarrival period of the renewal cycle.

If the penultimate interarrival period coincides with the first interarrival period of the renewal cycle, then the system starts at level $0$, so any Hold-at-$m$ policy can be selected freely; in particular, MoT-Clear-$\pi$ simply selects Hold-at-$0$. Otherwise, the initial level is determined by the preceding evolution of the renewal cycle, so the feasible threshold policies are constrained by that initial level. In particular, if the initial level is $\ell$, then Hold-at-$m$ can be chosen only for $m\ge \ell$.

We begin with the case where the penultimate interarrival period coincides with the first interarrival period of the renewal cycle, and compute the mixing coefficient $\alpha$ for MoT-Mix-$\pi$, defined as a mixture of MoT-Full-$\pi$ and MoT-Clear-$\pi$ so that $\E\left[N^{\text{MoT-Mix-}\pi}\right]=\E\left[N^\pi\right]$. Since these policies differ only in the penultimate interarrival period and the subsequent clearing phase, it suffices to match the expected number of small jobs completed over these two phases.

Let $U$ denote the number of small-job completions during the penultimate interarrival period, let $V$ denote the number of small jobs in service at the final large-job arrival, and define $W:=U+V$. Then $W$ is exactly the number of small jobs completed over the penultimate interarrival period and the subsequent clearing phase. Since MoT-$\pi$ replicates $\pi$, we have $\E\left[N^{\text{MoT-}\pi}\right]=\E\left[N^\pi\right]$, and hence matching $\E\left[W^{\text{MoT-Mix-}\pi}\right]=\E\left[W^{\text{MoT-}\pi}\right]$ implies $\E\left[N^{\text{MoT-Mix-}\pi}\right]=\E\left[N^\pi\right].$

\begin{lemma}
\label{lem:3_1}
Consider a policy MoT-$\pi$ that selects Hold-at-$m$ for some $m<n$ in the penultimate interarrival period, and suppose that this penultimate interarrival period is also the first interarrival period of the renewal cycle. Let MoT-Mix-$\pi$ denote the policy that selects MoT-Full-$\pi$ with probability $\alpha$ and MoT-Clear-$\pi$ with probability $1-\alpha$. If $\alpha=\frac{m}{n},$
then $\E\left[N^{\mathrm{MoT\text{-}Mix}\text{-}\pi}\right]=\E\left[N^\pi\right].$
\end{lemma}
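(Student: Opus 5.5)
Since MoT-Mix-$\pi$, MoT-Full-$\pi$, MoT-Clear-$\pi$ and MoT-$\pi$ all agree outside the penultimate interarrival period and the subsequent clearing phase, the plan is to reduce the claim to matching $\E[W]$, as explained just before the statement. Concretely, we will compute $\E[W]$ under Hold-at-$m$, under MoT-Full-$\pi$ (Hold-at-$n$) and under MoT-Clear-$\pi$ (Hold-at-$0$), and check that $\E[W^{\text{MoT-}\pi}]=\alpha\,\E[W^{\text{Full}}]+(1-\alpha)\,\E[W^{\text{Clear}}]$ with $\alpha=m/n$. Because the penultimate period is also the first one, the period starts at level $0$. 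Hold-at-$m$ therefore immediately raises the level to $m$, and every choice considered is a Hold-at-$j$ policy for some $j\in\{0,m,n\}$.

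The key computation is $\E[W]$ under Hold-at-$j$ started from level $0$. During the interarrival time $A\sim\mathrm{Exp}(\lambda)$, the level stays at $j$ and completed jobs are replaced immediately, so completions form a Poisson process of rate $j\mu_1$. Hence $\E[U]=j\mu_1/\lambda$. At the final arrival exactly $j$ small jobs are in service, so $V=j$, and during the clearing phase all $j$ of them complete. This gives
\begin{align*}
\E[W_j]=j\left(\frac{\mu_1}{\lambda}+1\right),
\end{align*}
which is linear in $j$ and vanishes at $j=0$. It follows that $\E[W_m]=\frac{m}{n}\E[W_n]+\left(1-\frac{m}{n}\right)\E[W_0]$, so $\alpha=m/n$ matches $\E[W]$ and hence $\E[N]$.

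One point needs care. We must confirm that conditioning on the branch in which this period is penultimate does not bias the distribution of $A$. The decision to enter clearing at the next arrival is made at the current arrival, as a branch of the MoT-$\pi$ mixture, so $A$ is independent of it by the Poisson assumption and memorylessness. We will also note that in each branch the quantity $N$ decomposes additively into the contribution from before the penultimate period plus $W$; the earlier part is identical across the three policies. Given that, the proof is a direct calculation, and the only delicate step is this justification that $\E[N]$ differences reduce exactly to $\E[W]$ differences with the branch probabilities unchanged.
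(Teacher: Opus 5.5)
Your proposal is correct and follows the paper's proof essentially verbatim. The paper also reduces the claim to matching $\E[W]$. It computes $\E[W]=m+m\mu_1/\lambda$, $n+n\mu_1/\lambda$, and $0$ for Hold-at-$m$, MoT-Full-$\pi$, and MoT-Clear-$\pi$ respectively, and checks that $\alpha=m/n$ equates them. Your formula $\E[W_j]=j(1+\mu_1/\lambda)$, linear in $j$ with $\E[W_0]=0$, is a compact restatement of that substitution. Your remark that conditioning on the branch does not bias the interarrival time states explicitly an assumption the paper uses implicitly.
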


\begin{proof}
For MoT-$\pi$, we have $V^{\text{MoT-}\pi}=m$ and $\E\left[W^{\text{MoT-}\pi}\right]
=
m+\frac{m\mu_1}{\lambda}.$ For MoT-Full-$\pi$, we have $V^{\text{MoT-Full-}\pi}=n$ and $\E\left[W^{\text{MoT-Full-}\pi}\right]
=
n+\frac{n\mu_1}{\lambda}.$ For MoT-Clear-$\pi$, because this is the first interarrival period of the renewal cycle, no small jobs are initially present and no small jobs are admitted, so $W^{\text{MoT-Clear-}\pi}=0.$

Therefore, $\E\left[W^{\text{MoT-Mix-}\pi}\right]
=
\alpha\left(n+\frac{n\mu_1}{\lambda}\right).$ Substituting $\alpha=\frac{m}{n}$ gives $\E\left[W^{\text{MoT-Mix-}\pi}\right]
=
\frac{m}{n}\left(n+\frac{n\mu_1}{\lambda}\right)
=
m+\frac{m\mu_1}{\lambda}
=
\E\left[W^{\text{MoT-}\pi}\right].$
Hence $\E\left[N^{\text{MoT-Mix-}\pi}\right]
=
\E\left[N^{\text{MoT-}\pi}\right]=\E\left[N^\pi\right].$
\end{proof}

Under the mixing coefficient in \Cref{lem:3_1}, $\E[N^{\text{MoT-Mix-}\pi}]$ matches $\E\left[N^{\text{MoT-}\pi}\right]$, and hence also $\E\left[N^\pi\right]$. We next show that under this same choice of $\alpha$, MoT-Mix-$\pi$ yields a smaller value of $\E\left[f(V)\right]$, where $V$ is the level at the start of the clearing phase and $f$ is any increasing discretely concave function. 

Here, $f$ represents a family of performance metrics that depend on $V$, the number of jobs present at the penultimate period, for which we will establish improvements under MoT-Mix-$\pi$. These include response time, cycle duration, and cumulative response time, each of which is shown in \ref{app:1} to be increasing and discretely concave.

\begin{lemma}
\label{lem:3_6}
Under the assumptions of \Cref{lem:3_1}, if $f:\{0,1,\dots,n\}\to\mathbb{R}$ is increasing and discretely concave, then $\E\left[f\left(V^{\text{MoT-Mix-}\pi}\right)\right]
\le
\E\left[f\left(V^{\text{MoT-}\pi}\right)\right].$
\end{lemma}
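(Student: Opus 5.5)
Under the assumptions of \Cref{lem:3_1}, the level $V$ at the start of the clearing phase is deterministic for each of the three policies involved: $V^{\text{MoT-}\pi}=m$ under Hold-at-$m$, $V^{\text{MoT-Full-}\pi}=n$ under Hold-at-$n$, and $V^{\text{MoT-Clear-}\pi}=0$, since in the first interarrival period the clearing policy starts at level $0$ and admits nothing. Hence under MoT-Mix-$\pi$ with $\alpha=m/n$ the random variable $V$ takes the value $n$ with probability $m/n$ and $0$ with probability $1-m/n$. Its mean is therefore $m$, the same as the deterministic value under MoT-$\pi$. The claim reduces to the inequality
\begin{align*}
\frac{m}{n}f(n)+\left(1-\frac{m}{n}\right)f(0)\le f(m),
\end{align*}
which is exactly a chord-below-graph statement for a concave function on $\{0,1,\dots,n\}$.

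To prove this inequality from discrete concavity, write $\Delta_j:=f(j)-f(j-1)$ for $j=1,\dots,n$. Discrete concavity means $\Delta_1\ge\Delta_2\ge\cdots\ge\Delta_n$. We have $f(m)-f(0)=\sum_{j=1}^m\Delta_j$ and $f(n)-f(0)=\sum_{j=1}^n\Delta_j$. The inequality to show is then
\begin{align*}
n\sum_{j=1}^m\Delta_j\ge m\sum_{j=1}^n\Delta_j .
\end{align*}
This holds because the average of the first $m$ terms of a nonincreasing sequence is at least the average of all $n$ terms. Explicitly, expand both sides: $n\sum_{j\le m}\Delta_j - m\sum_{j\le n}\Delta_j=(n-m)\sum_{j\le m}\Delta_j-m\sum_{j>m}\Delta_j$. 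Each $\Delta_j$ with $j\le m$ is at least $\Delta_m$, and each $\Delta_j$ with $j>m$ is at most $\Delta_{m+1}\le\Delta_m$. So this difference is at least $(n-m)m\Delta_m-m(n-m)\Delta_m=0$.

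The edge cases $m=0$ (where $\alpha=0$ and both sides equal $f(0)$) and $m=n$ (excluded by hypothesis) are trivial. Monotonicity of $f$ is not actually needed for this inequality, only concavity. It is kept in the statement for compatibility with the later lemmas, where the initial level is not $0$.

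The one step needing care is justifying that comparing $\E[f(V)]$ is the right comparison. The mixture is applied only on the branch where the penultimate period is the first period, with the same branch probability. MoT-Full-$\pi$ and MoT-Clear-$\pi$ both coincide with MoT-$\pi$ outside this branch, so the expectation of $f(V)$ splits over branches and the inequality need only be verified on this branch. Beyond that, I expect no real obstacle: the argument is Jensen's inequality for a two-point distribution supported on the endpoints $\{0,n\}$ with mean $m$.
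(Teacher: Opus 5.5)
Your proposal is correct and follows the paper's proof: both reduce the claim to $f(m)\ge \frac{m}{n}f(n)+\left(1-\frac{m}{n}\right)f(0)$, using that $V$ is deterministic and equal to $m$, $n$, and $0$ under MoT-$\pi$, MoT-Full-$\pi$, and MoT-Clear-$\pi$, respectively. The paper simply cites concavity for this chord inequality, whereas you verify it explicitly through the nonincreasing increments $\Delta_j$ and treat $m=0$ separately, which is a harmless elaboration.
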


\begin{proof}
By construction of MoT-Mix-$\pi$ and the choice $\alpha=\frac{m}{n}$, we have $\E\left[f\left(V^{\text{MoT-Mix-}\pi}\right)\right]=\frac{m}{n}f(n)+\left(1-\frac{m}{n}\right)f(0)$. Because MoT-$\pi$ selects Hold-at-$m$ in the penultimate interarrival period, we also have $\E\left[f\left(V^{\text{MoT-}\pi}\right)\right]=f(m)$. Because $f$ is increasing and discretely concave, by definition of concavity, $f(m)\ge \frac{m}{n}f(n)+\left(1-\frac{m}{n}\right)f(0).$
Therefore, $\E\left[f\left(V^{\text{MoT-Mix-}\pi}\right)\right]\le \E\left[f\left(V^{\text{MoT-}\pi}\right)\right]$.
\end{proof}

Next, we consider the case where the penultimate interarrival period is not the first interarrival period of the renewal cycle. In this case, MoT-Clear-$\pi$ selects Drop-to-$0$.

\begin{lemma}
\label{lem:3_4}
Consider a policy MoT-$\pi$ whose level at the start of the penultimate interarrival period is $m<n$ and which selects Hold-at-$m$ in that period. Suppose that this penultimate interarrival period is not the first interarrival period of the renewal cycle. Let MoT-Mix-$\pi$ denote the policy that selects MoT-Full-$\pi$ with probability $\alpha$ and MoT-Clear-$\pi$ with probability $1-\alpha$. If $\alpha=\frac{m\mu_1}{n\mu_1+(n-m)\lambda},
$ then $\E\left[N^{\text{MoT-Mix-}\pi}\right]=\E\left[N^\pi\right].$
\end{lemma}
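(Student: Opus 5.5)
The plan follows the template of \Cref{lem:3_1}. The three policies MoT-$\pi$, MoT-Full-$\pi$ and MoT-Clear-$\pi$ coincide up to the start of the penultimate interarrival period, on the branch under consideration. So it suffices to match $\E[W]$, the expected number of small jobs completed over the penultimate interarrival period plus the subsequent clearing phase. Then $\E[N^{\text{MoT-Mix-}\pi}]=\E[N^{\text{MoT-}\pi}]=\E[N^\pi]$ follows from \Cref{cor:2_1}. The one new feature compared with \Cref{lem:3_1} is that the system starts the period at level $m$ rather than $0$. These $m$ small jobs are already in service and are non-preemptible, so each of them is eventually completed under every one of the three policies, and I will count them in $W$ consistently.

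First I compute $\E[W]$ for each policy separately, using that the period length is $A\sim\mathrm{Exp}(\lambda)$, independent of small-job service.

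Under MoT-$\pi$ (Hold-at-$m$), the level stays at $m$, so completions occur at rate $m\mu_1$. Hence $\E[U]=m\mu_1/\lambda$ and $V=m$, giving $\E[W^{\text{MoT-}\pi}]=m+m\mu_1/\lambda$.

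Under MoT-Full-$\pi$ (Hold-at-$n$), the scheduler immediately raises the level from $m$ to $n$ and holds it there. This gives $\E[U]=n\mu_1/\lambda$ and $V=n$, so $\E[W^{\text{MoT-Full-}\pi}]=n+n\mu_1/\lambda$.

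Under MoT-Clear-$\pi$ (Drop-to-$0$), no small jobs are admitted. The $m$ jobs in service finish, partly during the period and partly afterwards, so exactly $W=m$ almost surely. Here I have to check that no further small jobs are admitted later in the cycle; this holds because MoT-Clear-$\pi$ selects Drop-to-$0$ in all subsequent periods, and by \Cref{lem:exhaustion} the subsequent large-job service is exhaustive. Hence the contribution to $N$ of this part of the cycle is exactly $m$.

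Next I set the mixture equal to the target:
\[
\alpha\left(n+\frac{n\mu_1}{\lambda}\right)+(1-\alpha)\,m \;=\; m+\frac{m\mu_1}{\lambda},
\]
which is linear in $\alpha$. Rearranging gives $\alpha\left(n-m+\frac{n\mu_1}{\lambda}\right)=\frac{m\mu_1}{\lambda}$, and multiplying through by $\lambda$ yields exactly $\alpha=\frac{m\mu_1}{n\mu_1+(n-m)\lambda}$. This lies in $[0,1]$ since $m\le n$. I then substitute back, as in \Cref{lem:3_1}, to conclude equality of $\E[N]$.

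The main obstacle I expect is the bookkeeping for the MoT-Clear-$\pi$ branch. Under that branch, the clearing phase effectively begins one large-job arrival earlier, so the cycle structure differs from that of MoT-$\pi$ (fewer arrivals before clearing). I must argue that $N$ nonetheless differs only through $W$, i.e., that the portion of the cycle after this point contributes no additional small completions in any of the three policies. This follows because small jobs are admitted only before the clearing phase and, by \Cref{lem:exhaustion}, none are admitted during exhaustive large-job service before renewal.
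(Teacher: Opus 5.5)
Your proposal is correct and follows the paper's proof essentially verbatim. You compute $\E[W]=m+m\mu_1/\lambda$, $n+n\mu_1/\lambda$, and $m$ for MoT-$\pi$, MoT-Full-$\pi$, and MoT-Clear-$\pi$ respectively, then match the mixture by solving the linear equation for $\alpha$, where the paper simply substitutes the given value. Your added remark that the shortened MoT-Clear-$\pi$ cycle contributes no completions beyond $W$ is the same justification the paper relies on when it says the policies differ only over the penultimate period and the clearing phase.
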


\begin{proof}
For MoT-$\pi$, we have $V^{\text{MoT-}\pi}=m$ and $\E\left[W^{\text{MoT-}\pi}\right]
=
m+\frac{m\mu_1}{\lambda}.$ For MoT-Full-$\pi$, we have $V^{\text{MoT-Full-}\pi}=n$ and $\E\left[W^{\text{MoT-Full-}\pi}\right]
=
n+\frac{n\mu_1}{\lambda}.$ For MoT-Clear-$\pi$, since the period starts at level $m$ and no additional small jobs are admitted, we have $W^{\text{MoT-Clear-}\pi}=m.$

Therefore, $\E\left[W^{\text{MoT-Mix-}\pi}\right]
=
\alpha\left(n+\frac{n\mu_1}{\lambda}\right)+(1-\alpha)m.$ Substituting $\alpha=\frac{m\mu_1}{n\mu_1+(n-m)\lambda}$ gives $\E\left[W^{\text{MoT-Mix-}\pi}\right]
=
m+\frac{m\mu_1}{\lambda}
=
\E\left[W^{\text{MoT-}\pi}\right].$
Hence $\E\left[N^{\text{MoT-Mix-}\pi}\right]
=
\E\left[N^{\text{MoT-}\pi}\right]
=
\E\left[N^\pi\right].$
\end{proof}

Again, under the mixing coefficient in \Cref{lem:3_4}, $\E[N^{\text{MoT-Mix-}\pi}]$ matches $\E\left[N^\pi\right]$. We next show that under this same choice of $\alpha$, MoT-Mix-$\pi$ yields a smaller value of $\E\left[f(V)\right]$.

\begin{lemma}
\label{lem:3_2}
Under the assumptions of \Cref{lem:3_4}, if $f:\{0,1,\dots,n\}\to\mathbb{R}$ is increasing and discretely concave, then $\E\left[f\left(V^{\text{MoT-Mix-}\pi}\right)\right]
\le
\E\left[f\left(V^{\text{MoT-}\pi}\right)\right].$
\end{lemma}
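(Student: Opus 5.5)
Under the mixing coefficient of \Cref{lem:3_4}, the level $V$ at the start of the clearing phase takes only three values. MoT-$\pi$ has $V^{\text{MoT-}\pi}=m$ deterministically. MoT-Full-$\pi$ has $V=n$, since it holds at $n$ until the final arrival. MoT-Clear-$\pi$ drops to $0$ starting from level $m$, so it is already in the clearing phase with $m$ small jobs present. The clearing phase of MoT-Clear-$\pi$ therefore starts at the penultimate arrival with level $m$, not $0$. I would make this accounting consistent with how $f$ enters the large-job metrics: the relevant quantity for MoT-Clear-$\pi$ is $f(m)$ evaluated one arrival earlier, i.e.\ with one fewer large job in the convoy. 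Thus
\[
\E\left[f\left(V^{\text{MoT-Mix-}\pi}\right)\right]=\alpha f(n)+(1-\alpha)f_{\mathrm{clear}},
\]
where $f_{\mathrm{clear}}$ is the value attached to the Clear branch. The plan is to compare this with $f(m)$.

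\textbf{Step 1: the Clear branch.} I would bound $f_{\mathrm{clear}}\le f(m)$. The Clear branch enters clearing with the same $m$ small jobs but one interarrival period earlier, so it has fewer accumulated large jobs. Equivalently, in the $(k,m)$ parametrization it corresponds to $k-1$ versus $k$. I would use monotonicity of the large-job metrics in $k$ for fixed $m$, read off from \Cref{prop:kmconvoy}, to conclude that the Clear contribution is no worse than that of Hold-at-$m$.

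\textbf{Step 2: the Full branch.} It then remains to show $\alpha f(n)+(1-\alpha)f(m)\le f(m)$ in a form that still yields strict improvement. Since $f$ is increasing, $f(n)\ge f(m)$, so this inequality fails as written. A more careful comparison is needed: interpolate along the level axis using the mixing weight. Here $\alpha=\frac{m\mu_1}{n\mu_1+(n-m)\lambda}\le \frac mn$, because $m\mu_1 n\le m(n\mu_1+(n-m)\lambda)$. Write the effective Clear level as $0$ in the sense of \Cref{lem:3_6}. Discrete concavity gives $f(m)\ge \frac mn f(n)+(1-\frac mn)f(0)$, and monotonicity together with $\alpha\le m/n$ gives
\[
\alpha f(n)+(1-\alpha)f(0)\le \tfrac mn f(n)+\left(1-\tfrac mn\right)f(0)\le f(m).
\]

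\textbf{Main obstacle.} The crux is justifying that the Clear branch can be charged $f(0)$, or at most a quantity dominated by the concave interpolation, despite physically starting clearing from level $m$. I would handle this by telescoping: the extra clearing cost of the $m$ jobs is already present on the MoT-$\pi$ Hold-at-$m$ branch. The comparison should therefore be made on the increment over a common baseline $g(\ell)=f(\ell)-f(m)$ restricted to the last interarrival. Concavity then needs to be applied to $f$ on the two-point support $\{m,n\}$ shifted by this baseline. I expect this bookkeeping, rather than the inequality itself, to be the delicate step. If it fails I would fall back on verifying the bound directly from the closed forms in \Cref{prop:kmconvoy}, using \Cref{lem:Tconcave,lem:LCconcave}.
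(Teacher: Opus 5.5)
There is a genuine gap, and it is the step you flag as the main obstacle: you never determine the law of $V$ on the Clear branch, and neither of the accountings you try is correct. In the setting of \Cref{lem:3_4}, the penultimate interarrival period is not the first of the cycle. So MoT-Clear-$\pi$ runs Drop-to-$0$ from level $m$ over that period, and $V$ is the level seen at the final large-job arrival. $V$ is therefore random on $\{0,\dots,m\}$, with $\Pr(V=r)=\beta_r^{(m)}q_r$ for $1\le r\le m$ and $\Pr(V=0)=\beta_0^{(m)}$. The statement to prove is
\[
f(m)\;\ge\;\alpha f(n)+(1-\alpha)\Bigl[\sum_{r=1}^{m}\beta_r^{(m)}q_r f(r)+\beta_0^{(m)}f(0)\Bigr].
\]
Your two accountings each fail:
\begin{itemize}
\item Charging the Clear branch $f(m)$ makes the claim false, as you noticed. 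Monotonicity in $k$ from \Cref{prop:kmconvoy} cannot repair this, because $f$ is an arbitrary increasing, discretely concave function of $V$ alone.
\item Charging it $f(0)$ amounts to importing \Cref{lem:3_6}, which applies only when the period starts at level $0$. Here it bounds in the wrong direction: $f$ increasing gives $\E[f(V^{\text{MoT-Clear-}\pi})]\ge f(0)$, so $\alpha f(n)+(1-\alpha)f(0)$ is a lower bound on the left side, not an upper bound.
\end{itemize}
The baseline shift $g=f-f(m)$ changes nothing, since every term is an expectation under a probability law.

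The slack $\alpha\le m/n$ cannot rescue this, because the target inequality is tight. Since $\E[V^{\text{MoT-Clear-}\pi}]=m-\frac{m\mu_1}{\mu_1+\lambda}$, the given $\alpha$ is exactly the value making $\E[V^{\text{MoT-Mix-}\pi}]=m$, so equality holds for affine $f$. With $f(v)=v$, your chain would assert $\E[V^{\text{MoT-Mix-}\pi}]\le\alpha n<m$, which is false.

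The missing argument is the increment computation of \Cref{lem:b_2}. Write $T_2$ for the right-hand side above and $f(r)=f(0)+\sum_{k\le r}\Delta_k$. Telescoping with $\beta_r^{(m)}q_r=\beta_r^{(m)}-\beta_{r-1}^{(m)}$ gives
\[
f(m)-T_2=(1-\alpha)\sum_{k=1}^{m}\beta_{k-1}^{(m)}\Delta_k-\alpha\sum_{k=m+1}^{n}\Delta_k .
\]
Using $\Delta_k\ge\Delta_m$ for $k\le m$ and $\Delta_k\le\Delta_m$ for $k>m$, this is bounded below by $\Delta_m\bigl[(1-\alpha)\sum_{k=1}^{m}\beta_{k-1}^{(m)}-\alpha(n-m)\bigr]$. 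The bracket vanishes by the identity $\sum_{r=0}^{m-1}\beta_r^{(m)}=\frac{m\mu_1}{\mu_1+\lambda}$ of \Cref{lem:b_1} together with the exact value of $\alpha$.
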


\begin{proof}
Because MoT-$\pi$ selects Hold-at-$m$, we have $\E\left[f\left(V^{\text{MoT-}\pi}\right)\right]=f(m).$ For MoT-Mix-$\pi$, let $\beta_r^{(m)}$ denote the truncated version of $\beta_r$, i.e., the probability that, starting from level $m$, the system reaches level $r$ before the next large-job arrival; see \ref{app:2} for the definition of $\beta_r^{(m)}$. By construction of MoT-Mix-$\pi$, $\E\left[f\left(V^{\text{MoT-Mix-}\pi}\right)\right]
=
\alpha f(n)+(1-\alpha)\left[\sum_{r=1}^{m}\beta_r^{(m)}q_r f(r)+\beta_0^{(m)}f(0)\right].$

The conclusion follows from \Cref{lem:b_2}. In particular, the proof of \Cref{lem:b_2} writes $f$ as the sum of its increments, computes the above difference from $f(m)$ directly, and uses the monotonicity of the increments of the discretely concave function $f$ to show that this difference is nonpositive.
\end{proof}

The inequalities in \Cref{lem:3_6,lem:3_2} depend on $f$ only through the assumptions of monotonicity and concavity, and the mixing coefficient $\alpha$ does not depend on the choice of $f$. In particular, the same bound and the same coefficient $\alpha$ apply to any function $f$ satisfying these properties.

\subsubsection{Suboptimal Drop-to-$\ell$}
We next rule out MoT-$\pi$ policies whose penultimate action is Drop-to-$\ell$. We only treat the case in which the initial level is $n$. If the initial level is $m$, then the same argument below shows that the MoT-$\pi$ policy that drops to $\ell$ from $m$ is dominated by a convex combination of MoT-Clear-$\pi$ (drop to $0$) and Hold-at-$m$. Because the previous subsection already showed that Hold-at-$m$ is suboptimal, it suffices to consider the case of initial level $n$.

\begin{lemma}
\label{lem:3_7}
Consider a policy MoT-$\pi$ whose level at the start of the penultimate interarrival period is $n$ and which selects Drop-to-$\ell$ in that period, where $\ell\in\{1,\dots,n-1\}$. Let MoT-Mix-$\pi$ denote the policy that selects MoT-Full-$\pi$ with probability $\alpha$ and MoT-Clear-$\pi$ with probability $1-\alpha$. If $\alpha=\frac{\ell}{n}\prod_{r=\ell+1}^n\frac{r\mu_1}{r\mu_1+\lambda}=\frac{\ell}{n}\beta_{\ell},$
then $\E\left[N^{\text{MoT-Mix-}\pi}\right]=\E\left[N^\pi\right].$
\end{lemma}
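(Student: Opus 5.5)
The plan is the same as in the proofs of \Cref{lem:3_1,lem:3_4}. MoT-Full-$\pi$, MoT-Clear-$\pi$ and MoT-$\pi$ agree everywhere except in the penultimate interarrival period and the clearing phase that follows it. So it suffices to match $\E[W]$, where $W=U+V$ counts the small jobs completed over those two phases. Matching $\E[W^{\text{MoT-Mix-}\pi}]=\E[W^{\text{MoT-}\pi}]$ then gives $\E[N^{\text{MoT-Mix-}\pi}]=\E[N^{\text{MoT-}\pi}]=\E[N^\pi]$ by \Cref{cor:2_1}. The useful bookkeeping is that, in every branch, each of the $n$ small jobs in service at the start of the period eventually completes, either during the period or during clearing. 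Hence $W$ equals $n$ plus the number of \emph{newly admitted} small jobs that complete.

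\textbf{The two extreme policies.} For MoT-Clear-$\pi$ no small jobs are admitted, so $W=n$. For MoT-Full-$\pi$, Hold-at-$n$ replaces every completion at total rate $n\mu_1$ until an $\mathrm{Exp}(\lambda)$ arrival. The expected number of admissions is therefore $n\mu_1/\lambda$, and $\E[W]=n+n\mu_1/\lambda$; this agrees with \Cref{lem:3_4} at $m=n$. The mixture then gives
\begin{align*}
\E\left[W^{\text{MoT-Mix-}\pi}\right]=\alpha\left(n+\frac{n\mu_1}{\lambda}\right)+(1-\alpha)n=n+\alpha\frac{n\mu_1}{\lambda}.
\end{align*}

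\textbf{The Drop-to-$\ell$ branch.} Under Drop-to-$\ell$ from level $n$, no admissions occur until the level reaches $\ell$. The drop must be completed before the next large arrival, which by the definition of $\beta_\ell$ happens with probability $\beta_\ell=\prod_{r=\ell+1}^n p_r$. On that event the policy holds at $\ell$, replacing completions at rate $\ell\mu_1$. By memorylessness of the remaining $\mathrm{Exp}(\lambda)$ interarrival time, the expected number of admissions from that point is $\ell\mu_1/\lambda$. On the complementary event the arrival occurs mid-drop and there are no admissions. Hence
\begin{align*}
\E\left[W^{\text{MoT-}\pi}\right]=n+\beta_\ell\,\frac{\ell\mu_1}{\lambda}.
\end{align*}
Equating the two displays gives $\alpha n=\ell\beta_\ell$, that is, $\alpha=\frac{\ell}{n}\beta_\ell$. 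Since $\beta_\ell\le1$ and $\ell<n$, this $\alpha$ lies in $[0,1]$, so the mixture is a valid policy.

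\textbf{Main obstacle.} The only delicate step is justifying the strong-Markov/memoryless argument in the Drop-to-$\ell$ branch. Specifically, conditional on reaching level $\ell$ before the arrival, the residual interarrival time is again $\mathrm{Exp}(\lambda)$ and independent of the past. I would argue this via competing exponentials at each level $r$, which is exactly how $p_r$ was defined. The claim that every initially present job completes before large-job service is immediate from non-preemption and the definition of the clearing phase.
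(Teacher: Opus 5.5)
Your proposal is correct and follows essentially the same route as the paper. It reduces to matching $\E[W]$, computes $n$ for MoT-Clear-$\pi$, $n+n\mu_1/\lambda$ for MoT-Full-$\pi$ and $n+\beta_\ell\,\ell\mu_1/\lambda$ for the Drop-to-$\ell$ branch, and solves the resulting linear equation to get $\alpha=\ell\beta_\ell/n$. Your memorylessness justification and your check that $\alpha\in[0,1]$ are details the paper leaves implicit, but they do not change the argument.
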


\begin{proof}
Recall that $\beta_r$ is the probability of reaching level $r$ before the next large-job arrival.
For MoT-Clear-$\pi$, no small jobs are admitted, so $W^{\text{MoT-Clear-}\pi}=n.$ For MoT-Full-$\pi$, we have $V^{\text{MoT-Full-}\pi}=n$, and hence $\E\left[W^{\text{MoT-Full-}\pi}\right]
=
n+\frac{n\mu_1}{\lambda}.$ For MoT-$\pi$, the initial $n$ small jobs contribute $n$ to $W$, and on the event of reaching level $\ell$ before the next large-job arrival, holding at level $\ell$ contributes an additional expected $\ell\mu_1/\lambda$. Therefore, $\E\left[W^{\text{MoT-}\pi}\right]
=
n+\beta_\ell\frac{\ell\mu_1}{\lambda}.$

By definition, $\E\left[W^{\text{MoT-Mix-}\pi}\right]
=
\alpha \E\left[W^{\text{MoT-Full-}\pi}\right]
+(1-\alpha)\E\left[W^{\text{MoT-Clear-}\pi}\right]=n+\alpha\frac{n\mu_1}{\lambda}.$
Substituting $\alpha=\frac{\ell}{n}\beta_\ell$ gives $\E\left[W^{\text{MoT-Mix-}\pi}\right]
=
n+\beta_\ell\frac{\ell\mu_1}{\lambda}
=
\E\left[W^{\text{MoT-}\pi}\right]$ and  $\E\left[N^{\text{MoT-Mix-}\pi}\right]
=
\E\left[N^{\text{MoT-}\pi}\right]
=
\E\left[N^\pi\right].$
\end{proof}

Under the mixing coefficient in \Cref{lem:3_7}, $\E[N^{\text{MoT-Mix-}\pi}]$ matches $\E\left[N^\pi\right]$. We next show that, under this same choice of $\alpha$, MoT-Mix-$\pi$ also yields a smaller value of $\E\left[f\left(V\right)\right]$ for any appropriate $f$.

\begin{lemma}
\label{lem:3_8}
Under the assumptions of \Cref{lem:3_7}, if $f:\{0,1,\dots,n\}\to\mathbb{R}$ is increasing and discretely concave, then $\E\left[f\!\left(V^{\text{MoT-Mix-}\pi}\right)\right]
\le
\E\left[f\!\left(V^{\text{MoT-}\pi}\right)\right].$
\end{lemma}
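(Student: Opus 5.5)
We plan to prove \Cref{lem:3_8} the same way as \Cref{lem:b_2}: write $f$ through its increments and reduce everything to one inequality on the tail probabilities of $V$. Set $\Delta_j:=f(j)-f(j-1)$ for $j=1,\dots,n$. Because $f$ is increasing and discretely concave, $\Delta_1\ge\Delta_2\ge\cdots\ge\Delta_n\ge 0$. Then $f(V)=f(0)+\sum_{j=1}^n\Delta_j\mathbf 1\{V\ge j\}$, so
\[
\E\left[f\left(V^{\text{MoT-Mix-}\pi}\right)\right]-\E\left[f\left(V^{\text{MoT-}\pi}\right)\right]=\sum_{j=1}^n \Delta_j D_j,
\]
where $D_j:=\mathbb P(V^{\text{MoT-Mix-}\pi}\ge j)-\mathbb P(V^{\text{MoT-}\pi}\ge j)$.

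\emph{Step 1: tail probabilities.} Under Drop-to-$\ell$ from level $n$, the final arrival sees level $r\in\{\ell+1,\dots,n\}$ with probability $\beta_r q_r$, and level $\ell$ with probability $\beta_\ell$. Using \eqref{eq:lem3_0_1} to telescope, this gives $\mathbb P(V\ge j)=1$ for $j\le\ell$ and $\mathbb P(V\ge j)=1-\beta_{j-1}$ for $j>\ell$. MoT-Clear-$\pi$ is Drop-to-$0$, so it has $\mathbb P(V\ge j)=1-\beta_{j-1}$ for every $j$. MoT-Full-$\pi$ has $V=n$. Mixing with weight $\alpha$ gives $\mathbb P(V^{\text{MoT-Mix-}\pi}\ge j)=1-(1-\alpha)\beta_{j-1}$. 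Hence
\[
D_j=-(1-\alpha)\beta_{j-1}\le 0 \quad (j\le \ell), \qquad D_j=\alpha\beta_{j-1}\ge 0 \quad (j>\ell).
\]

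\emph{Step 2: Abel-type bound.} On the indices $j\le\ell$ we have $D_j\le0$ and $\Delta_j\ge\Delta_\ell$. On the indices $j>\ell$ we have $D_j\ge0$ and $\Delta_j\le\Delta_\ell$. Therefore $\sum_j\Delta_jD_j\le\Delta_\ell\sum_jD_j$. Since $\Delta_\ell\ge0$, it suffices to show $\sum_jD_j\le0$. This is the statement $\E[V^{\text{MoT-Mix-}\pi}]\le\E[V^{\text{MoT-}\pi}]$, i.e. the case $f(v)=v$. With $\alpha=\frac{\ell}{n}\beta_\ell$ from \Cref{lem:3_7}, it reads
\[
\frac{\ell}{n}\,\beta_\ell\sum_{r=0}^{n-1}\beta_r\;\le\;\sum_{r=0}^{\ell-1}\beta_r,
\qquad\text{equivalently}\qquad
\frac1n\sum_{r=0}^{n-1}\beta_r\;\le\;\frac1\ell\sum_{r=0}^{\ell-1}\frac{\beta_r}{\beta_\ell}.
\]

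\emph{Step 3: the key inequality (main obstacle).} This comparison of averages is where the work lies. One could also try to exploit the matching $\E[W]=\E[U]+\E[V]$ from \Cref{lem:3_7}, but it only trades $\E[V]$ against $\E[U]$, so I do not expect it to remove the need for this inequality. The structural facts I would use are the following.
\begin{itemize}
\item $\beta_r/\beta_\ell=\prod_{j=r+1}^{\ell}p_j$.
\item $p_j=\frac{j\mu_1}{j\mu_1+\lambda}$ is increasing in $j$, so $r\mapsto\log\beta_r$ is increasing and concave.
\item The explicit form $q_j/p_j=\lambda/(j\mu_1)$.
\end{itemize}
My first attempt would be induction on $n$ with $\ell$ fixed. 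Passing from $n$ to $n+1$ multiplies every $\beta_r$ by $p_{n+1}$ and appends the term $\beta_n=1$. So I would need the average $\frac1n\sum_{r<n}\beta_r$ to decrease after this operation, which looks plausible because the factor $p_{n+1}<1$ shrinks all old terms. The base case $n=\ell$ is an equality. If the induction does not close, my fallback would be a direct comparison. I would match the $\ell$ terms $\beta_r/\beta_\ell$ against the $n$ terms $\beta_r$ through a coupling of first-passage events. I would also use the explicit form $q_j/p_j=\lambda/(j\mu_1)$, hoping to get a telescoping bound analogous to the one in the proof of \Cref{lem:2_1}.
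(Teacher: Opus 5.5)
\textbf{Step 3 is the gap: it is left unproved, and it is the whole content of the lemma.} Your Steps 1 and 2 are correct and follow the paper's own route. Your $D_j$ reproduce the expression for $T_1-T_2$ in \Cref{lem:b_4}, and your pivot at $\Delta_\ell$ is the same bound the paper uses.

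You have also misjudged what Step 3 is. It is not a genuine inequality needing a coupling or a delicate induction; it is an exact identity. With $c=\lambda/\mu_1$, one has $\sum_{r=0}^{m-1}\beta_r=\frac{m}{1+c}\beta_m$ for every $m$ (\Cref{lem:b_3}). The proof is a one-line induction: $\sum_{r=0}^{m}\beta_r=\beta_m\bigl(\frac{m}{1+c}+1\bigr)=\beta_m\frac{m+1+c}{1+c}=\frac{(m+1)\beta_{m+1}}{1+c}$, using $\beta_m=p_{m+1}\beta_{m+1}=\frac{m+1}{m+1+c}\beta_{m+1}$.

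Hence both sides of your target, $\frac1n\sum_{r<n}\beta_r$ (recall $\beta_n=1$) and $\frac1\ell\sum_{r<\ell}\beta_r/\beta_\ell$, equal $\frac1{1+c}$. So $\sum_jD_j=0$, and $\sum_j\Delta_jD_j\le\Delta_\ell\cdot 0=0$, which finishes the proof.

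This also shows that the induction you sketch cannot run on the heuristic you give. The average $\frac1n\sum_{r<n}\beta_r$ does not decrease in $n$; it is constant, because the appended term exactly offsets the shrinkage by $p_{n+1}$. Write $S_n=\sum_{r<n}\beta_r$. The step $\frac{p_{n+1}(S_n+1)}{n+1}\le\frac{S_n}{n}$ is equivalent to $S_n\ge\frac{n}{1+c}$, which is a lower bound on $S_n$. Your induction hypothesis only gives an upper bound on $S_n/n$, so the induction does not close unless you find the exact value anyway.

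Your dismissal of the $\E[W]$ route is also wrong, and that route closes Step 3 with no computation. Take any non-anticipating control over an $\mathrm{Exp}(\lambda)$ interarrival period. Completions occur at rate $\mu_1$ times the current level. By PASTA, the expected time-average level equals the expected level seen by the arrival. So $\E[U]=\frac{\mu_1}{\lambda}\E[V]$, and $\E[W]=\bigl(1+\frac{\mu_1}{\lambda}\bigr)\E[V]$ holds for every branch: MoT-$\pi$, MoT-Full-$\pi$ and MoT-Clear-$\pi$. Therefore the matching $\E[W^{\text{MoT-Mix-}\pi}]=\E[W^{\text{MoT-}\pi}]$ built into \Cref{lem:3_7} is exactly $\sum_jD_j=\E[V^{\text{MoT-Mix-}\pi}]-\E[V^{\text{MoT-}\pi}]=0$. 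In other words, the case $f(v)=v$ holds with equality by construction of $\alpha$. Either observation completes your argument; without one of them the proposal stops short of a proof at its key step.
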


\begin{proof}
By the definitions of $\beta_r$ and $q_r$, we have $\E\left[f\!\left(V^{\text{MoT-}\pi}\right)\right]
=
\sum_{r=\ell+1}^n \beta_r q_r f(r)+\beta_\ell f(\ell).$
Also,
\begin{align*}
\E\left[f\!\left(V^{\text{MoT-Mix-}\pi}\right)\right]
&=
\alpha \E\left[f\!\left(V^{\text{MoT-Full-}\pi}\right)\right]
+(1-\alpha)\E\left[f\!\left(V^{\text{MoT-Clear-}\pi}\right)\right]\\
&=
\alpha f(n)
+(1-\alpha)\left[\sum_{r=1}^n \beta_r q_r f(r)+\beta_0 f(0)\right].
\end{align*}
With $\alpha=\frac{\ell}{n}\beta_\ell$ as in \Cref{lem:3_7}, the desired inequality is exactly the statement of \Cref{lem:b_3}.  Again, the proof of \Cref{lem:b_3} writes $f$ as the sum of its increments, computes the above difference from $f(m)$ directly, and uses the monotonicity of the increments of the discretely concave function $f$ to show that this difference is nonpositive.
\end{proof}

Thus, the MoT-$\pi$ policy that selects Drop-to-$\ell$ from level $n$ is dominated by a convex combination of MoT-Full-$\pi$ and MoT-Clear-$\pi$.

Similarly, the inequality in \Cref{lem:3_8} depends on $f$ only through the assumptions of monotonicity and concavity, and the mixing coefficient $\alpha$ is independent of the choice of $f$.

With the preceding lemmas in hand, we can now define MoT-Mix-$\pi$ over an entire renewal cycle. For each branch of the renewal-cycle evolution under MoT-$\pi$ that enters the clearing phase after the next large-job arrival, we step back to the corresponding penultimate interarrival period and replace the threshold-policy choice used on that branch by the improved mixture identified above. Because the probability of each such branch can be computed in advance, this replacement can be carried out with the corresponding probability preemptively, even if the original decision at a large-job arrival is not explicitly probabilistic.

\begin{definition}
\label{def:motmix}
Given a policy MoT-$\pi$, define MoT-Mix-$\pi$ as follows.
\begin{enumerate}
    
    \item On each branch where the penultimate interarrival period is also the first interarrival period of the renewal cycle and MoT-$\pi$ selects Hold-at-$m$ for some $m<n$, MoT-Mix-$\pi$ replaces that choice by the corresponding mixture of MoT-Full-$\pi$ and MoT-Clear-$\pi$ from \Cref{lem:3_1}, namely, with mixing coefficient $\alpha=m/n$.
    
    \item On each branch where the penultimate interarrival period is not the first interarrival period of the renewal cycle and MoT-$\pi$ selects Hold-at-$m$ for some $m<n$, MoT-Mix-$\pi$ replaces that choice by the corresponding mixture of MoT-Full-$\pi$ and MoT-Clear-$\pi$ from \Cref{lem:3_4}, namely, with mixing coefficient $\alpha=(m\mu_1)/(n\mu_1+(n-m)\lambda)$.
    
    \item On each branch where the penultimate interarrival period starts from level $n$ and MoT-$\pi$ selects Drop-to-$\ell$ for some $\ell\in\{1,\ldots,n-1\}$, MoT-Mix-$\pi$ replaces that choice by the corresponding mixture of MoT-Full-$\pi$ and MoT-Clear-$\pi$ from \Cref{lem:3_7}, namely, with mixing coefficient $\alpha=\ell\beta_\ell/n$.
    
    \item If the penultimate threshold-policy choice of MoT-$\pi$ is already Hold-at-$n$ or Drop-to-$0$, then MoT-Mix-$\pi$ leaves that branch unchanged.
\end{enumerate}
\end{definition}

\subsubsection{Optimality}

In this subsection, we show that MoT-Mix-$\pi$ performs at
least as well as MoT-$\pi$ in both $X$ and $\E[T]$, with at least strictly better performance in one of the metrics.

\begin{lemma}
\label{lem:3_9}
Under \Cref{def:motmix}, MoT-Mix-$\pi$ satisfies $X^{\text{MoT-Mix-}\pi}
\geq X^{\text{MoT-}\pi}$ and $\E\left[T^{\text{MoT-Mix-}\pi}\right]
\leq \E\left[T^{\text{MoT-}\pi}\right]$ with at least
one of these inequalities is strict.
\end{lemma}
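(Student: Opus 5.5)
The argument is a renewal-reward comparison: we express both metrics as ratios of per-cycle expectations and show that, on each modified branch, MoT-Mix-$\pi$ keeps the small-job numerator fixed while weakly decreasing the cumulative large-job cost.

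\textbf{Step 1: throughput reduces to cycle length.} By renewal-reward, $X^\pi=\E[N^\pi]/\E[L^\pi]$. Combining $\E[C^\pi]=\lambda\E[L^\pi]\E[T^\pi]$ gives
\[
\E[T^\pi]=\frac{\E[C^\pi]}{\lambda\E[L^\pi]}.
\]
By \Cref{lem:3_1,lem:3_4,lem:3_7}, applied branch by branch and weighted by the precomputable branch probabilities, we have $\E[N^{\text{MoT-Mix-}\pi}]=\E[N^{\text{MoT-}\pi}]$. So it suffices to show $\E[L^{\text{MoT-Mix-}\pi}]\le\E[L^{\text{MoT-}\pi}]$, which immediately gives the throughput inequality.

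\textbf{Step 2: decompose the cycle.} On each branch, split the renewal cycle into three parts:
\begin{itemize}
\item the prefix up to the start of the penultimate interarrival period, on which the two policies agree;
\item the penultimate interarrival period, whose length is an independent $\mathrm{Exp}(\lambda)$ regardless of the threshold choice;
\item the clearing phase plus the exhaustive large-job busy period (via \Cref{lem:exhaustion}).
\end{itemize}
The third part depends on the past only through $V$ and the number $k$ of large jobs present at the final arrival. By \Cref{prop:kmconvoy}, or rather by the underlying setup-cost $M/G/1$ computation, the conditional expected remaining length and the remaining cumulative response time are the functions $g_k(V)$ and $h_k(V)$. Note that $k$ is unchanged by the modification, since it counts arrivals, which do not depend on the decision.

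Large jobs already present also accumulate response time during the penultimate period. That contribution is identical across threshold choices, because the interval length does not depend on the decision.

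\textbf{Step 3: apply the comparison lemmas.} By \Cref{lem:LCconcave}, together with the same computation for the tail quantities $g_k$ and $h_k$ (which differ from $\E[L^{k,m}]$ and $\E[C^{k,m}]$ only by terms independent of $m$), each is increasing and concave in $V$. Applying \Cref{lem:3_6,lem:3_2,lem:3_8} with $f=g_k$ and $f=h_k$ on each branch, then averaging over branches, yields
\[
\E[L^{\text{MoT-Mix-}\pi}]\le \E[L^{\text{MoT-}\pi}]
\quad\text{and}\quad
\E[C^{\text{MoT-Mix-}\pi}]\le \E[C^{\text{MoT-}\pi}].
\]
Given equal $\E[N]$, the first inequality gives $X^{\text{MoT-Mix-}\pi}\ge X^{\text{MoT-}\pi}$.

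\textbf{Step 4: the response-time ratio (main obstacle).} Since both $\E[C]$ and $\E[L]$ drop, the ratio $\E[C]/\E[L]$ does not follow automatically. I would handle this by working directly with the per-branch increments $\Delta C$ and $\Delta L$ of the tail terms, and showing $\Delta C\ge \E[T^{\text{MoT-}\pi}]\,\lambda\,\Delta L$.

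The idea is that removing clearing time $\delta$ from a cycle removes at least $\delta\cdot(\text{number of large jobs waiting})$ in cumulative response time, since every waiting large job (at least $k\ge1$, plus arrivals during the busy period) is delayed. Made precise, the $M/G/1$-with-setup structure gives
\[
h_k(V)=\frac{g_k(V)\cdot(\text{mean number present})}{\cdot}
\]
with the mean number present increasing in $V$. An alternative route is to check concavity of $h_k-c\,g_k$ for the relevant constant $c=\lambda\E[T^{\text{MoT-}\pi}]$, and apply the lemmas to $f=h_k-c\,g_k$ directly. This requires $h_k-c\,g_k$ to be increasing and concave, which I expect to verify from the explicit formulas in \ref{app:1} using $\E[I_m]=\sum_{j\le m}1/(j\mu_1)$ and its second moment.

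\textbf{Step 5: strictness.} Strictness comes from the strict concavity of $\E[I_m]$ in $m$, since its increments $1/(m\mu_1)$ are strictly decreasing. Whenever some branch with positive probability is actually modified (cases 1--3 of \Cref{def:motmix}), Jensen-type inequalities in \Cref{lem:3_6,lem:3_2,lem:3_8} are strict for $g_k$. This makes $\E[L]$ strictly smaller and hence $X$ strictly larger. If no branch is modified, MoT-$\pi$ is already of convoy form and the lemma is applied vacuously.
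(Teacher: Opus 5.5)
Your Steps 1--3 and 5 are sound. Matching $\E[N]$ branch by branch and applying \Cref{lem:3_6,lem:3_2,lem:3_8} to the tail functions $g_k,h_k$ does give that $\E[C]$ does not increase and $\E[L]$ strictly decreases whenever some branch is modified (strictness comes from strict concavity of $\E[I_V]$). Hence $X$ strictly increases.

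The gap is Step 4, and it cannot be closed. On a branch with $k$ large jobs present at the final arrival, the tail changes satisfy exactly $\Delta C/\Delta L=k+\lambda\E[T^{M/G/1\text{-}L}]+\lambda\,\Delta\E[I_V^2]/(2\,\Delta\E[I_V])$. This ratio is determined by that branch alone. By contrast, $c=\lambda\E[T^{\text{MoT-}\pi}]$ is a global average, and other branches with long convoys can make it arbitrarily large. So $\Delta C\ge c\,\Delta L$ fails on small-$k$ branches. Equivalently, the coefficient $(k+\lambda\E[T^{M/G/1\text{-}L}]-c)/(1-\lambda)$ of $\E[I_V]$ in $h_k-c\,g_k$ can be negative enough to make that function convex rather than concave.

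In fact the response-time inequality is false. Take $n=2$, $\mu_1=1$, $\lambda=1/2$, $S\sim\exp(1)$, and let $\pi$ flip a fair coin at each renewal to run either the $(1,1)$- or the $(20,2)$-convoy. Then MoT-$\pi=\pi$. Case 1 of \Cref{def:motmix} (with $\alpha=1/2$) replaces the $(1,1)$ branch by the $(1,2)$- or $(1,0)$-convoy with probability $1/2$ each. By \Cref{prop:kmconvoy}, the per-cycle triples $(\E[N],\E[L],\E[C])$ are $(3,6,9)$, $(6,7,11.75)$, $(0,4,4)$ and $(82,83,904.75)$ for the $(1,1)$-, $(1,2)$-, $(1,0)$- and $(20,2)$-convoys respectively. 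This gives:
\begin{itemize}
\item[(a)] $\E[N]=42.5$ for both policies;
\item[(b)] $\E[L]$ falls from $44.5$ to $44.25$, and $\E[C]$ falls from $456.875$ to $456.3125$;
\item[(c)] $X$ rises from about $0.955$ to $0.960$;
\item[(d)] but $\E[T]=\E[C]/(\lambda\E[L])$ rises from about $20.53$ to $20.62$, since $\Delta C/\Delta L=2.25$ while $c\approx 10.27$.
\end{itemize}
This is a Simpson-type effect. The modified branch's own job-averaged response time drops from $3$ to $15.75/5.5\approx 2.86$. But that branch now generates fewer arrivals per cycle ($2.75$ instead of $3$), which shifts weight toward the expensive $(20,2)$ jobs.

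The paper's proof does not confront this. It obtains the response-time inequality by applying \Cref{lem:3_6,lem:3_2,lem:3_8} with $f(V)=\E[T^{k,V}]$, which is concave by \Cref{lem:Tconcave}. That bounds a probability-weighted average of per-branch mean response times. It does not bound the mean response time $\E[C]/(\lambda\E[L])$ of the randomized policy, which is precisely the distinction you raised.

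So your Steps 1--3 and 5 establish exactly what holds here: equal $\E[N]$, strictly larger $X$, and no larger $\E[C]$. The missing piece is not an idea you overlooked but a false claim. A genuine Pareto improvement would have to control $\E[L]$ and the linear functional $\E[C]-\lambda\E[T^{\text{MoT-}\pi}]\,\E[L]$ at the same time, and the Full/Clear replacement does not do that in general.

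A smaller point about Step 5: if no branch is modified, both inequalities are equalities, so the strictness claim fails rather than holding vacuously. Also, an unmodified MoT-$\pi$ need not be of convoy form, since earlier periods may still drop the level and later renege.
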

\begin{proof}
In \Cref{lem:3_1,lem:3_4,lem:3_7}, the mixing coefficients are chosen so that $\E\left[N^{\text{MoT-Mix-}\pi}\right]
=
\E\left[N^{\text{MoT-}\pi}\right]
=
\E\left[N^\pi\right].$
Next, in \Cref{lem:3_6,lem:3_2,lem:3_8}, taking $f$ to be the cycle length $L$ of a renewal cycle, which is increasing and discretely concave in $V$ by \Cref{lem:LCconcave}, gives $\E\left[L^{\text{MoT-Mix-}\pi}\right]
\leq
\E\left[L^{\text{MoT-}\pi}\right].$ Because $X=\E[N]/\E[L]$, it follows that $X^{\text{MoT-Mix-}\pi}
=
\frac{\E\left[N^{\text{MoT-Mix-}\pi}\right]}{\E\left[L^{\text{MoT-Mix-}\pi}\right]}
\geq
\frac{\E\left[N^{\text{MoT-}\pi}\right]}{\E\left[L^{\text{MoT-}\pi}\right]}
=
X^{\text{MoT-}\pi}.$

For response time, note that $\E[T]=\E[C]/\left(\lambda \E[L]\right)$ is itself an increasing and discretely concave function of $V$, see \Cref{lem:Tconcave}. Therefore, by \Cref{lem:3_6,lem:3_2,lem:3_8}, taking $f$ to be the mean response time as a function of $V$, yields $\E\left[T^{\text{MoT-Mix-}\pi}\right]\leq\E\left[T^{\text{MoT-}\pi}\right].$
\end{proof}

\subsection{Proof of \Cref{thm:main}}
\label{sec:proof_main}
In this subsection, we prove \Cref{thm:main}. The proof follows the same structure as the proof overview in \Cref{sec:overview}, but presents the argument more succinctly.

\begin{proof}[Proof of \Cref{thm:main}]
    Let $\pi$ be an arbitrary policy that is positive recurrent to the renewal state. By \Cref{lem:exhaustion}, once $\pi$ begins serving large jobs, it must continue until all large jobs in the system have completed before admitting small jobs again.
    
    Next, by \Cref{thm:2_1} and \Cref{cor:2_1}, there exists a MoT-$\pi$ policy that replicates $\pi$ over successive interarrival periods by selecting appropriate threshold policies. Specifically, \Cref{thm:2_1} gives the explicit construction of the required convex combinations of threshold policies for a single interarrival period, and \Cref{cor:2_1} shows that applying this construction successively replicates $\pi$ over the entire renewal cycle.

    We then replace MoT-$\pi$ by the improved policy MoT-Mix-$\pi$ defined in \Cref{def:motmix}. By \Cref{lem:3_9}, MoT-Mix-$\pi$ achieves the better throughput smaller large-job response time than MoT-$\pi$ and hence $\pi$ unless the penultimate choice already has the Hold-at-$n$ form. Moreover, once the policy reaches level $0$, it must begin serving large jobs immediately.

    It remains to rule out any pre-penultimate choice other than Hold-at-$n$. If MoT-Mix-$\pi$ ever lowers the small-job level before the penultimate interarrival period, then, since the penultimate level under MoT-Mix-$\pi$ is $n$, the policy must later increase the small-job level at some large-job arrival. We call this behavior \emph{reneging}. We now show that reneging is suboptimal by constructing a non-reneging policy, denoted NR-MoT-Mix-$\pi$, that is at least as good as MoT-Mix-$\pi$.

    The construction is as follows. Suppose MoT-Mix-$\pi$ chooses Drop-to-$m$ with probability $p_m$, and conditional on this branch later increases the small-job level to $n$ with probability $p_{mn}$. We reallocate the probability mass $p_m p_{mn}$ from the Drop-to-$m$ branch to the Hold-at-$n$ branch, setting $\tilde p_m=p_m-p_m p_{mn}$ and $\tilde p_n=p_n+p_m p_{mn}$. The same reallocation is applied to any Hold-at-$m$ branch with $m<n$ that later increases to $n$. 
    
    This transformation preserves the probability of every post-arrival small-job level, so the large-job response-time contribution is unchanged. At the same time, the expected number of small-job completions is strictly larger whenever reneging occurs with positive probability. Applying this reallocation recursively, moving backward from the penultimate interarrival period to the start of the renewal cycle, eliminates all reneging. Hence MoT-Mix-$\pi$ is dominated by the non-reneging policy NR-MoT-Mix-$\pi$ whenever it reneges with positive probability.

    Thus, by the no-reneging property, MoT-Mix-$\pi$ cannot involve an increase at a large arrival followed by a decrease in an earlier interarrival period. Therefore, we conclude that any Pareto-optimal policy must always hold-at-$n$ until clearing phase or it is a convex combination of many such policies. In addition, \Cref{lem:pareto_hypograph} shows that any convex combination of non-adjacent convoy policies lies strictly below the Pareto curve and is therefore suboptimal. Hence only convex combinations of adjacent convoy policies can be Pareto optimal, completing the proof of \Cref{thm:main}.
\end{proof}

\section{Conclusion}

In this paper, we introduced the half-batch MSJ model, a novel hybrid open and closed model in the multiserver-job setting. In this model, large jobs arrive according to a Poisson process and require all servers, while small jobs are always available. This is motivated by real-world systems in which high-priority large jobs coexist with an many time-insensitive small jobs, creating a trade-off between large-job response time and small-job throughput.

We gave an exact characterization of this trade-off. We proved that the Pareto frontier is generated by the $(1,0)$-convoy policy and the family of $(k,n)$-convoy policies, together with convex combinations of neighboring frontier policies. The result is fully non-asymptotic and holds for every stable large-job arrival rate, every number of servers, and an arbitrary large-job size distribution.

We also use simulation to examine the robustness of our policies beyond the assumptions of the theoretical model. The results show that the optimality structure persists under moderate departures from these assumptions, while sufficiently heavy-tailed small-job sizes can make anticipatory policies more effective.

There are several possible future directions. One natural generalization is to allow large jobs to occupy only a fraction of the system, such as half of the servers. This extension is already complicated, and the Pareto-optimal family of policies may depend on the arrival rate, unlike our result. A further generalization in this direction is to allow multiple large-job classes with different server requirements, which is closer to practical heterogeneous MSJ systems. A different direction is to incorporate preemption overhead or cancellation costs for small jobs, although the latter may naturally lead to a three-dimensional performance trade-off.

\bibliographystyle{elsarticle-num-names} 
\bibliography{reference.bib}


\appendix

\numberwithin{lemma}{section} 
\renewcommand{\thelemma}{\Alph{section}.\arabic{lemma}} 
\renewcommand{\theHlemma}{appendix.\Alph{section}.\arabic{lemma}} 
\renewcommand{\theHequation}{appendix.\Alph{section}.\arabic{equation}}

\section{$(k,m)$-convoy policy}
\label{app:1}

\propT*

\begin{proof}
Fix $k$ and $m$. Observe that the $(k,m)$-convoy policy is equivalent to a $k$-policy with setup cost.

When the $k$th job arrives (starting from an empty system), a setup period of random length $I_m$ begins, after which the server processes jobs in FCFS order until the system becomes empty again. Thus the system's renewal cycle begins when all large jobs have completed and ends the next time $k$ jobs have accumulated in the queue.

Throughout, we normalize the service times of large jobs so that $\E[S]=1$, and $\rho=\lambda<1$.

\textbf{Priority-based policy.}
Consider a cycle and label the $k$ jobs present at the instant the setup begins by
\begin{align*}
\text{Job}_1,\text{Job}_2,\ldots,\text{Job}_k
\end{align*}
in order of arrival. 

Consider the following auxiliary priority rule: While $\text{Job}_1$ is in setup or service, any arriving job immediately gains priority over $\text{Job}_2$. Moreover, any arrivals that occur during the entire sub-busy period generated by $\text{Job}_1$ also have priority over $\text{Job}_2$ and are served in FCFS order. Once this sub-busy period completes and no jobs remain with higher priority than $\text{Job}_2$, service of $\text{Job}_2$ begins. The same rule is then applied recursively: arrivals during the sub-busy period generated by $\text{Job}_2$ have priority over $\text{Job}_3$, and so on, until $\text{Job}_k$ completes. This partitions the busy period into $k$ consecutive sub-busy periods:
\begin{align*}
B_1 \ \text{(with setup $I_m$)}\quad\text{and}\quad B_2,\ldots,B_k \ \text{(standard $M/G/1$ sub-busy periods)}.
\end{align*}
Because this auxiliary rule is non-preemptive and does not use job sizes, and because it has the same setup cost as the original FCFS queue within each cycle, it yields the same mean response time as the $(k,m)$-convoy policy \cite[Theorem~29.2]{Harchol-Balter2013}.

\textbf{Mean cycle length.}
Let $L^{k,m}$ be the cycle length. The idle portion is the time required for $k$ Poisson arrivals, denoted by $A_k$, therefore $\E[A_k]=\frac{k}{\lambda}.$

For the busy portion, $B_1$ is a busy period of an $M/G/1$ queue with setup $I_m$ started by one job, whose mean length is
\begin{align*}
\E[B_1]=\frac{\E[I_m]+\E[S]}{1-\rho}=\frac{\E[I_m]+1}{1-\lambda},
\end{align*}
and each $B_i$ for $i\ge 2$ is a standard $M/G/1$ busy period started by one job, with mean length $\E[B_i]=\frac{\E[S]}{1-\rho}=\frac{1}{1-\lambda}.$
Therefore, $\E[L^{k,m}]
=\E[A_k]+\E[B_1]+\sum_{i=2}^{k}\E[B_i] =\frac{k}{\lambda}+\frac{\E[I_m]+1}{1-\lambda}+\frac{k-1}{1-\lambda}
=\frac{k}{\lambda}+\frac{\E[I_m]+k}{1-\lambda}.$

\textbf{Mean response time.}
Let $T^{k,m}$ be the stationary response time. By PASTA, an arriving job falls in the interval $A_k$, the first sub-busy period $B_1$, or one of the remaining sub-busy periods $B_{>1}:=\bigcup_{i=2}^k B_i$, with probabilities equal to the corresponding time fractions,
\begin{align*}
P_{A_k}
=\frac{\E[A_k]}{\E[L^{k,m}]}
=\frac{k(1-\lambda)}{k+\lambda\E[I_m]},\quad
P_{B_1}
=\frac{\E[B_1]}{\E[L^{k,m}]}
=\frac{\lambda(\E[I_m]+1)}{k+\lambda\E[I_m]},\quad P_{B_{>1}}
=\frac{\sum_{i=2}^k \E[B_i]}{\E[L^{k,m}]}
=\frac{\lambda(k-1)}{k+\lambda\E[I_m]}.
\end{align*}

\begin{itemize}
\item\emph{Arrival during a standard sub-busy period.}
An arrival during $B_{>1}$ observes a standard $M/G/1$ FCFS queue conditioned on being busy. Hence
\begin{align*}
\E\!\left[T^{k,m}\mid B_{>1}\right]
&=\E[S]+\frac{\lambda\E[S^2]}{2\rho(1-\rho)} =1+\frac{\E[S^2]}{2(1-\lambda)} .
\end{align*}

Using the identity $\E[T^{M/G/1\text{-}L}]
=
(1-\rho)\E[S]
+
\rho\,\E[T^{M/G/1\text{-}L}\mid B],$
where $B$ denotes the event that an arrival finds the server busy in a standard $M/G/1$ FCFS queue, and noting that $\rho=\lambda$ and $\E[S]=1$, we obtain
\begin{align*}
\E\!\left[T^{k,m}\mid B_{>1}\right]
=
\frac{\E[T^{M/G/1\text{-}L}]-(1-\lambda)}{\lambda}.
\end{align*}

\item \emph{Arrival during the setup sub-busy period $B_1$.}
Let $T^{\mathrm{setup}}$ denote the stationary response time in an $M/G/1$ queue with setup $I_m$. The classical formula \cite{Welch1964,Harchol-Balter2013} gives
\begin{align*}
\E\!\left[T^{\mathrm{setup}}\right]
=\E[T^{M/G/1\text{-}L}]
+\frac{2\E[I_m]+\lambda\E[I_m^2]}{2\left(1+\lambda\E[I_m]\right)}.
\end{align*}
The busy fraction in this setup model is
\begin{align*}
\rho^{\mathrm{setup}}=\frac{\lambda\E[I_m]+\rho}{1+\lambda\E[I_m]}
=\frac{\lambda(\E[I_m]+1)}{1+\lambda\E[I_m]}.
\end{align*}
Conditioning on whether an arrival sees the setup system idle or busy yields $\E\!\left[T^{\mathrm{setup}}\right]
=\left(1-\rho^{\mathrm{setup}}\right)\E[S+I_m]+\rho^{\mathrm{setup}}\E\!\left[T^{k,m}\mid B_1\right].$
Therefore
\begin{align*}
\E\!\left[T^{k,m}\mid B_1\right]
=\frac{\E\!\left[T^{\mathrm{setup}}\right]-\left(1-\rho^{\mathrm{setup}}\right)\left(1+\E[I_m]\right)}{\rho^{\mathrm{setup}}}.
\end{align*}

\item \emph{Arrival during $A_k$.}
Condition on the number $j\in\{0,1,\ldots,k-1\}$ of jobs already accumulated upon arrival. Since exactly $k$ arrivals occur during $A_k$, this $j$ is uniform over $\{0,1,\ldots,k-1\}$. If $j=0$ (the job is $\text{Job}_1$), then
\begin{align*}
\E\!\left[T^{k,m}\mid A_k,\,0\right]
=\frac{k-1}{\lambda}+\E[I_m]+\E[S]
=\frac{k-1}{\lambda}+\E[I_m]+1.
\end{align*}
If $j\ge 1$ (the job is $\text{Job}_{j+1}$), then after waiting $\frac{k-j-1}{\lambda}$ more time for the threshold to be reached, it waits for $B_1$ and for $(j-1)$ standard sub-busy periods to finish, and then completes after its own service. Hence
\begin{align*}
\E\!\left[T^{k,m}\mid A_k,\,j\right]
=\frac{k-j-1}{\lambda}+\frac{\E[I_m]+1}{1-\lambda}+\frac{j-1}{1-\lambda}+1,
\quad\text{for } j=1,\ldots,k-1.
\end{align*}
Averaging over $j$ gives
\begin{align*}
\E\!\left[T^{k,m}\mid A_k\right]
&=\frac{1}{k}\sum_{j=0}^{k-1}\E\!\left[T^{k,m}\mid A_k,\,j\right] =1+\frac{k-1}{2\lambda}+\frac{k-1}{2(1-\lambda)}+\frac{1}{k}\E[I_m]+\frac{k-1}{k(1-\lambda)}\E[I_m].
\end{align*}
\end{itemize}

Finally, by total expectation, $\E[T^{k,m}]
=
P_{A_k}\,\E\!\left[T^{k,m}\mid A_k\right]
+
P_{B_1}\,\E\!\left[T^{k,m}\mid B_1\right]
+
P_{B_{>1}}\,\E\!\left[T^{k,m}\mid B_{>1}\right].$
Substituting the expressions above and simplifying yields
\begin{align*}
\E[T^{k,m}]
=
\E[T^{M/G/1\text{-}L}]
+
\frac{k(k-1)}{2\lambda\left(k+\lambda\E[I_m]\right)}
+
\frac{k\E[I_m]}{k+\lambda\E[I_m]}
+
\frac{\lambda\E[I_m^2]}{2\left(k+\lambda\E[I_m]\right)}.
\end{align*}

To obtain the throughput $X^{k,m}$, we divide the expected number of small-job completions per cycle, $\frac{km\mu_1}{\lambda}+m$, by the expected cycle length $\E[L^{k,m}]$.
\end{proof}

\lemTconcave*

\begin{proof}
Let $\mu:=\mu_1$, and let $H_m:=\sum_{j=1}^{m}1/j$ denote
the $m$th harmonic number, with $H_0:=0$.
For $m\ge 1$, write $x_m := \E[I_m] = \frac{1}{\mu}H_m$ and $y_m := \E[I_m^2].$
Define $D_m := k+\lambda x_m = k+\frac{\lambda}{\mu}H_m\quad\text{and}\quad
a := \frac{\lambda}{\mu} > 0.$

Start from
\begin{align}\label{eq:lem_a_1_1}
\E[T^{k,m}]
=
\E[T^{M/G/1\text{-}L}]
+
\frac{k(k-1)}{2\lambda D_m}
+
\frac{kx_m}{D_m}
+
\frac{\lambda y_m}{2D_m}.
\end{align}
Use $\frac{kx_m}{D_m}
=
\frac{k}{\lambda}\cdot\frac{\lambda x_m}{k+\lambda x_m}
=
\frac{k}{\lambda}\left(1-\frac{k}{D_m}\right)
=
\frac{k}{\lambda}-\frac{k^2}{\lambda D_m}$ therefore $\frac{k(k-1)}{2\lambda D_m}+\frac{kx_m}{D_m}
=
\frac{k}{\lambda}-\frac{k(k+1)}{2\lambda}\cdot\frac{1}{D_m},$
so \Cref{eq:lem_a_1_1} becomes
\begin{align}\label{eq:lem_a_1_2}
\E[T^{k,m}]
=
\E[T^{M/G/1\text{-}L}]
+
\frac{k}{\lambda}
-\frac{k(k+1)}{2\lambda}\cdot\frac{1}{D_m}
+\frac{\lambda}{2}\cdot\frac{y_m}{D_m}.
\end{align}
Thus it suffices to show
\begin{align}\label{eq:lem_a_1_3}
-\frac{1}{D_m} \text{ is concave increasing}\quad\text{and}\quad
 \frac{y_m}{D_m} \text{ is concave increasing}.
\end{align}

We first show $-1/D_m$ is concave increasing in $m$. Since $H_{m+1}-H_m=\frac{1}{m+1}$ decreases in $m$, $H_m$ is increasing and concave, hence $D_m=k+aH_m$ is increasing and concave. The function $g(x)=1/x$ is convex and decreasing on $(0,\infty)$, so $g(D_m)=1/D_m$ is convex \cite[Section~3.2.4]{Boyd2004}. Therefore $-1/D_m$ is concave. Since $D_m$ is increasing, $1/D_m$ is decreasing and $-1/D_m$ is increasing. Hence $-1/D_m$ is concave increasing.

Next, we show $y_m/D_m$ is also concave increasing in $m$. 
Use the identity
\begin{align*}
2\sum_{i=1}^{m}\frac{H_i}{i}
=
H_m^2+\sum_{i=1}^{m}\frac{1}{i^2}
=
H_m^2+H_m^{(2)}\quad\text{and}\quad
H_m^{(2)}:=\sum_{i=1}^{m}\frac{1}{i^2}.
\end{align*}
Hence $y_m
=
\E[I_m^2]
=
\frac{1}{\mu^2}\sum_{i=1}^{m}\frac{2}{i}H_i
=
\frac{1}{\mu^2}\left(H_m^2+H_m^{(2)}\right),$
therefore
\begin{align}\label{eq:lem_a_1_4}
\frac{y_m}{D_m}
=
\frac{1}{\mu^2}\cdot\frac{H_m^2+H_m^{(2)}}{k+aH_m}.
\end{align}
Thus, to show $\frac{y_m}{D_m}$ is concave increasing in $m$, it is enough to show that
\begin{align}\label{eq:lem_a_1_5}
r_m:=\frac{H_m^2+H_m^{(2)}}{k+aH_m}
\text{ is concave increasing in } m.
\end{align}
Let $h_m:=H_m,
p_m:=H_m^{(2)},
V_m:=k+ah_m$ and $r_m=\frac{h_m^2+p_m}{V_m}.$ We first show $r_m$ is increasing. Let $\delta_m:=h_{m+1}-h_m=\frac{1}{m+1}$. Then
\begin{align*}
h_{m+1}=h_m+\delta_m,
\quad
p_{m+1}=p_m+\delta_m^2
\quad \text{and}\quad
V_{m+1}=V_m+a\delta_m.
\end{align*}
A direct calculation gives $r_{m+1}-r_m
=
\frac{\delta_m}{V_mV_{m+1}}
\left(2kh_m+a(h_m^2-p_m)+2\delta_mV_m\right).$
Now $h_m^2-p_m\ge 0$ because $h_m^2-p_m
=
\left(\sum_{i=1}^{m}\frac{1}{i}\right)^2-\sum_{i=1}^{m}\frac{1}{i^2}
=
\sum_{i\ne j}\frac{1}{ij}\ge 0.$
All other factors are nonnegative, hence $r_{m+1}-r_m\ge 0$ and $r_m$ is increasing.

Next, we show $r_m$ is concave. Concavity means the increments decrease
\begin{align}\label{eq:lem_a_1_6}
(r_{m+1}-r_m)-(r_{m+2}-r_{m+1})\ge 0.
\end{align}
Let $t:=\delta_m=\frac{1}{m+1}$, and $\delta_{m+1}=\frac{1}{m+2}=\frac{t}{1+t}$. Introduce
\begin{align*}
u:=h_m-1=\sum_{i=2}^{m}\frac{1}{i}\ge 0\quad\text{and}\quad
s:=h_m^2-p_m=\sum_{i\ne j}\frac{1}{ij}\ge 0.
\end{align*}

A simple bound that hold for every $m\ge 1$ is $s\ge 2u$. Restrict the sum defining $s$ to the cross-terms involving index $1$,
\begin{align*}
s=\sum_{i\ne j}\frac{1}{ij}\ge \sum_{j=2}^{m}\left(\frac{1}{1\cdot j}+\frac{1}{j\cdot 1}\right) =2\sum_{j=2}^{m}\frac{1}{j}=2u.
\end{align*}

Using the recursions $h_{m+1}=h_m+t, p_{m+1}=p_m+t^2$, and the fact that $\delta_{m+1}=t/(1+t)$, one can expand the difference of increments and simplify to
\begin{align*}
(r_{m+1}-r_m)-(r_{m+2}-r_{m+1})
=
\frac{t^2}{(1+t)^2}\cdot\frac{-Q(u,s,t)}{V_mV_{m+1}V_{m+2}},
\end{align*}
where $V_{m}V_{m+1}V_{m+2}>0$ and $-Q(u,s,t)=(s-2u)L(u,t)+2F(u,t),
$
with $L(u,t)
=a^2t^2+a^2tu+4a^2t+a^2u+3a^2+akt+ak>0,$
and
\begin{align*}
F(u,t)
&=a^2\left[t^3(u+1)+t^2(u^2+6u+4)+t(2u^2+7u+2)+(u-1)\right] \\
&\quad+ak\left[t^3+t^2(3u+6)+t(u^2+8u+7)+(u^2+3u+1)\right]+k^2\left[t^2+tu+2t+u\right].
\end{align*}

Since $s-2u\ge 0$, we have $-Q(u,s,t)\ge 2F(u,t)$. To prove $F(u,t)>0$, it suffices to show that the first bracket $B(u,t)
:=t^3(u+1)+t^2(u^2+6u+4)+t(2u^2+7u+2)+(u-1)$
is strictly positive. Indeed, the remaining brackets in the definition of $F(u,t)$ are sums of nonnegative terms and are therefore strictly positive for $u\ge 0$ and $t>0$.

For $m\ge 4$, we have $u=H_m-1=\sum_{i=2}^{m}\frac{1}{i}\ge \frac{1}{2}+\frac{1}{3}+\frac{1}{4}=\frac{13}{12}>1,$
so $u-1>0$, and since all other terms in $B(u,t)$ are nonnegative, it follows that $B(u,t)>0$.

It remains to check $m=1,2,3$, a direct substitution yields $B\left(0,\frac12\right)=\frac{9}{8}, 
B\left(\frac12,\frac13\right)=\frac{85}{36}>0$ and $
B\left(\frac56,\frac14\right)=\frac{355}{128}.$
Therefore $B(u,t)>0$ for all $m\ge 1$, and hence $F(u,t)>0$, so $-Q(u,s,t)>0$, which implies \Cref{eq:lem_a_1_6}. Thus $r_m$ is concave. Together with monotonicity, $r_m$ is concave increasing and so is $y_m/D_m$.

Finally, from \Cref{eq:lem_a_1_2}, $-1/D_m$ is concave increasing and $y_m/D_m$ is concave increasing, and the coefficients are positive while the remaining terms are constant in $m$. Hence $\E[T^{k,m}]$ is increasing and concave in $m$.
\end{proof}

\lemLCconcave*

\begin{proof}
    We first consider the mean cycle length $\E[L^{k,m}]
=
\frac{k}{\lambda}
+
\frac{\E[I_m]+k}{1-\lambda}.$

Since $k$ and $\lambda$ are fixed, $\E[L^{k,m}]$ is an affine function of $\E[I_m]$. 
Because $\E[I_m]$ is increasing and concave in $m$, it immediately follows that $\E[L^{k,m}]$ is also increasing and concave in $m$.

Next, observe that $\lambda \E[L^{k,m}]
=
\lambda\left(
\frac{k}{\lambda}
+
\frac{\E[I_m]+k}{1-\lambda}
\right)
=
\frac{k+\lambda\E[I_m]}{1-\lambda}.$
Define $D_m := k+\lambda\E[I_m]$. Then $\lambda\E[L^{k,m}] = \frac{D_m}{1-\lambda}.$ Using the expression for $\E[T^{k,m}]$, we obtain
\begin{align*}
\E[C^{k,m}]
&=
\lambda\E[L^{k,m}]\cdot \E[T^{k,m}] =
\frac{D_m}{1-\lambda}
\left(
\E[T^{M/G/1\text{-}L}]
+
\frac{k(k-1)}{2\lambda D_m}
+
\frac{k\E[I_m]}{D_m}
+
\frac{\lambda\E[I_m^2]}{2D_m}
\right).
\end{align*}
Distributing $D_m/(1-\lambda)$ gives
\begin{align*}
\E[C^{k,m}]
=
\frac{\E[T^{M/G/1\text{-}L}]\,D_m}{1-\lambda}
+
\frac{k(k-1)}{2\lambda(1-\lambda)}
+
\frac{k\E[I_m]}{1-\lambda}
+
\frac{\lambda\E[I_m^2]}{2(1-\lambda)}.
\end{align*}
Substituting $D_m = k+\lambda\E[I_m]$ yields
\begin{align*}
\E[C^{k,m}]
&=
\underbrace{
\frac{k\E[T^{M/G/1\text{-}L}]}{1-\lambda}
+
\frac{k(k-1)}{2\lambda(1-\lambda)}
}_{\text{independent of } m}
+
\underbrace{
\frac{k+\lambda\E[T^{M/G/1\text{-}L}]}{1-\lambda}
}_{>0}
\,\E[I_m]
+
\underbrace{
\frac{\lambda}{2(1-\lambda)}
}_{>0}
\,\E[I_m^2].
\end{align*}

Thus $\E[C^{k,m}]$ is a positive affine combination of $\E[I_m]$ and $\E[I_m^2]$, plus a constant independent of $m$. Since both $\E[I_m]$ and $\E[I_m^2]$ are increasing and concave in $m$, and all coefficients are positive, it follows that $\E[C^{k,m}]$ is also increasing and concave in $m$.
\end{proof}

\lemConHypo*

\begin{proof}
It suffices to show that the consecutive secant slopes
\begin{align*}
s_k:=\frac{X^{k+1,n}-X^{k,n}}{\E[T^{k+1,n}]-\E[T^{k,n}]}
\end{align*}
are positive and decreasing in $k$. We give the argument for the general secant slopes between neighboring points $\left(\E[T^{k,n}],X^{k,n}\right)$, the comparison between the initial secant joining $\left(\E[T^{1,0}],0\right)$ and $\left(\E[T^{1,n}],X^{1,n}\right)$ and the next secant is similar.

From the explicit formulas for $\E[T^{k,n}]$ and $X^{k,n}$, we obtain
\begin{align*}
s_k=
\frac{2n\lambda^2(1-\lambda)\left(\mu_1\E[I_n]-1\right)}
{k+k^2-\lambda^2\E[I_n^2]+2k\lambda\E[I_n]+2\lambda^2\E[I_n]^2}.
\end{align*}
The numerator is positive because $\mu_1\E[I_n]=H_n>1$.

To show that the denominator is increasing and positive, define $g(k):=
k+k^2-\lambda^2\E[I_n^2]+2k\lambda\E[I_n]+2\lambda^2\E[I_n]^2.$
Then $g(k+1)-g(k)=2k+2+2\lambda\E[I_n]>0,$ so $g(k)$ is increasing in $k$. It therefore remains to show that $g(k)$ is positive. Since $g$ is increasing, it is enough to check $k=1$, $g(1)=2+2\lambda\E[I_n]+\lambda^2\left(2\E[I_n]^2-\E[I_n^2]\right).$
Using $\E[I_n]=H_n/\mu_1$ and $\E[I_n^2]=\left(H_n^2+H_n^{(2)}\right)/\mu_1^2$, where $H_n^{(2)}:=\sum_{i=1}^n i^{-2}$, we have
\begin{align*}
2\E[I_n]^2-\E[I_n^2]
&=
\frac{1}{\mu_1^2}\left(2H_n^2-\left(H_n^2+H_n^{(2)}\right)\right) =
\frac{1}{\mu_1^2}\left(H_n^2-H_n^{(2)}\right) =
\frac{1}{\mu_1^2}\sum_{i\ne j}\frac{1}{ij}
\ge 0.
\end{align*}
Hence $g(1)>0$, and therefore $g(k)>0$ for all $k\ge 1$.

Thus $s_k>0$ for all $k$, and since the numerator is constant in $k$ while the denominator $g(k)$ is increasing, $s_k$ is decreasing in $k$. Therefore the piecewise-linear interpolation of the points $\left(\E[T^{k,n}],X^{k,n}\right)$ is increasing and concave, equivalently its hypograph is convex.
\end{proof}

\section{Inequalities}
\label{app:2}





Let $f:\{0,1,\ldots,n\}\to\mathbb{R}$ be increasing and discretely concave, i.e.,
\begin{align*}
\Delta_k := f(k)-f(k-1)\ge 0\quad\text{and}\quad
\Delta_1\ge \Delta_2\ge \cdots \ge \Delta_n.
\end{align*}
In the next few lemmas, we study inequalities involving probabilistic or convex mixtures of such a function $f$. Throughout, we repeatedly use the telescoping representation $f(k)=f(0)+\sum_{i=1}^{k}\Delta_i,$ together with the monotonicity of the increments, $\Delta_i \ge \Delta_{i+1}$.

For each level $r\ge 1$, define
\begin{align*}
p_r := \frac{r\mu_1}{r\mu_1+\lambda}\quad\text{and}\quad
q_r := 1-p_r = \frac{\lambda}{r\mu_1+\lambda}.
\end{align*}
Define the truncated products (starting from level $m$):
\begin{align*}
\beta_r^{(m)} := \prod_{j=r+1}^{m} p_j\quad\text{for }
r=0,1,\ldots,m\quad\text{with}\quad
\beta_m^{(m)}:=1.
\end{align*}

From the above definition, $\beta_{r-1}^{(m)}
= \prod_{j=r}^{m} p_j
= p_r \prod_{j=r+1}^{m} p_j
= p_r \beta_r^{(m)}.$ Hence
\begin{align}\label{eq:lem3_0_3}
\beta_r^{(m)} q_r
= \beta_r^{(m)}(1-p_r)
= \beta_r^{(m)} - \beta_{r-1}^{(m)}.
\end{align}

A direct consequence is $\sum_{r=1}^{m} \beta_r^{(m)} q_r
= \sum_{r=1}^{m}\left(\beta_r^{(m)}-\beta_{r-1}^{(m)}\right)
= \beta_m^{(m)}-\beta_0^{(m)}
= 1-\beta_0^{(m)},$
hence
\begin{align}\label{eq:lem3_0_4}
\sum_{r=1}^{m} \beta_r^{(m)} q_r + \beta_0^{(m)} = 1.
\end{align}

\begin{lemma}
\label{lem:b_1}
Let
\begin{align*}
c := \frac{\lambda}{\mu_1} > 0\quad\text{and}\quad
p_r = \frac{r\mu_1}{r\mu_1+\lambda} = \frac{r}{r+c}.
\end{align*}
For every integer $m\ge 0$,
\begin{align}\label{eq:lem3_2_1}
\sum_{r=0}^{m-1}\beta_r^{(m)} = \frac{m}{1+c} = \frac{m\mu_1}{\mu_1+\lambda}.
\end{align}
\end{lemma}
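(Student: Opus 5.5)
Induction on $m$ is the natural route, since the truncated products satisfy a simple multiplicative recursion in their starting level. By definition $\beta_r^{(m)}=\prod_{j=r+1}^{m}p_j$, so for $r\le m-1$ we have $\beta_r^{(m)} = p_m\,\beta_r^{(m-1)}$, where $\beta_{m-1}^{(m-1)}=1$. Write $S_m:=\sum_{r=0}^{m-1}\beta_r^{(m)}$. The base case is $m=0$, where the sum is empty and $S_0=0=0/(1+c)$. (Equally, $m=1$ gives $S_1=\beta_0^{(1)}=p_1=1/(1+c)$ as a check.)

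For the inductive step, split off the $r=m-1$ term and factor out $p_m$:
\begin{align*}
S_m = p_m\sum_{r=0}^{m-1}\beta_r^{(m-1)} = p_m\left(S_{m-1}+\beta_{m-1}^{(m-1)}\right) = p_m\left(S_{m-1}+1\right).
\end{align*}
Substituting the inductive hypothesis $S_{m-1}=(m-1)/(1+c)$ and $p_m=m/(m+c)$ gives
\begin{align*}
S_m=\frac{m}{m+c}\cdot\frac{m-1+1+c}{1+c}=\frac{m}{m+c}\cdot\frac{m+c}{1+c}=\frac{m}{1+c}.
\end{align*}
Finally, since $c=\lambda/\mu_1$, we have $m/(1+c)=m\mu_1/(\mu_1+\lambda)$.

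There is no real obstacle. The only point needing care is the indexing convention $\beta_m^{(m)}=1$, which is what makes the new summand $\beta_{m-1}^{(m-1)}$ equal to $1$, together with noticing the cancellation $m-1+1+c=m+c$. An alternative would be a probabilistic interpretation: $\beta_r^{(m)}$ is the probability of reaching level $r$, so the sum counts the expected number of completions of a pure drop from level $m$. The closed form itself, however, comes from the cancellation in the recursion, so the inductive computation is the cleanest route.
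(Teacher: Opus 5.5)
Your proposal is correct and follows the paper's proof: the paper also inducts on $m$ using $\beta_r^{(m+1)}=p_{m+1}\beta_r^{(m)}$, splits off the term $\beta_m^{(m)}=1$, and closes with the cancellation $p_{m+1}\bigl(\tfrac{m}{1+c}+1\bigr)=\tfrac{m+1}{1+c}$. You also state the base case $m=0$ (an empty sum) explicitly, which the paper leaves implicit.
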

\begin{proof}
We prove this by induction. Suppose \Cref{eq:lem3_2_1} is true for $m$, we prove that this also holds for the case $m+1$. For $r\le m$,
\begin{align*}
\beta_r^{(m+1)}
= \prod_{j=r+1}^{m+1}p_j
= \left[\prod_{j=r+1}^{m}p_j\right]p_{m+1}
= \beta_r^{(m)}p_{m+1}.
\end{align*}
Hence
\begin{align*}
\sum_{r=0}^{m}\beta_r^{(m+1)}
= p_{m+1}\sum_{r=0}^{m}\beta_r^{(m)}
= p_{m+1}\left[\sum_{r=0}^{m-1}\beta_r^{(m)}+\beta_m^{(m)}\right].
\end{align*}
By the induction hypothesis and $\beta_m^{(m)}=1$,
\begin{align*}
\sum_{r=0}^{m}\beta_r^{(m+1)}
= p_{m+1}\left(\frac{m}{1+c}+1\right)
= \frac{m+1}{m+1+c}\cdot \frac{m+1+c}{1+c}
= \frac{m+1}{1+c}.
\end{align*}
This is exactly \Cref{eq:lem3_2_1} for $m+1$.
\end{proof}

\begin{lemma}
\label{lem:b_2}
Fix integers $n>m\ge 1$. Let $f:\{0,1,\ldots,n\}\to\mathbb{R}$ be increasing and discretely concave. Let the mixing coefficient be
\begin{align*}
\alpha := \frac{m\mu_1}{n\mu_1+(n-m)\lambda}\in(0,1).
\end{align*}
Define the two quantities
\begin{align*}
T_1 &:= f(m),\\
T_2 &:= \alpha f(n) + (1-\alpha)\left[\sum_{r=1}^{m}\beta_r^{(m)}q_r f(r) + \beta_0^{(m)} f(0)\right].
\end{align*}
Then $T_1\ge T_2$.
\end{lemma}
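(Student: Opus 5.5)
The plan is to write everything in terms of the increments $\Delta_i=f(i)-f(i-1)$, using $f(k)=f(0)+\sum_{i=1}^{k}\Delta_i$. This turns $T_1-T_2$ into a weighted sum $\sum_{i=1}^{n}w_i\Delta_i$ whose weights are nonnegative for $i\le m$, nonpositive for $i>m$, and sum to zero for the given $\alpha$. The inequality then follows from $\Delta_1\ge\cdots\ge\Delta_n\ge 0$ by a one-line rearrangement argument.

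\textbf{Step 1 (tail probabilities).} Interpret the bracket in $T_2$ as $\E[f(V_0)]$, where $V_0$ is the level seen by the next arrival under Drop-to-$0$ from level $m$. By \Cref{eq:lem3_0_3}, this law puts mass $\beta_r^{(m)}q_r$ on $r\ge 1$ and $\beta_0^{(m)}$ on $0$, and by \Cref{eq:lem3_0_4} it sums to one. Telescoping $\beta_r^{(m)}q_r=\beta_r^{(m)}-\beta_{r-1}^{(m)}$ gives, for $1\le i\le m$,
\begin{align*}
\mathbb{P}(V_0\ge i)=\sum_{r=i}^{m}\left(\beta_r^{(m)}-\beta_{r-1}^{(m)}\right)=1-\beta_{i-1}^{(m)},
\end{align*}
and $\mathbb{P}(V_0\ge i)=0$ for $i>m$. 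Hence
\begin{align*}
T_2=f(0)+\alpha\sum_{i=1}^{n}\Delta_i+(1-\alpha)\sum_{i=1}^{m}\left(1-\beta_{i-1}^{(m)}\right)\Delta_i .
\end{align*}

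\textbf{Step 2 (weights).} Subtracting from $T_1=f(0)+\sum_{i=1}^{m}\Delta_i$ yields
\begin{align*}
T_1-T_2=\sum_{i=1}^{m}(1-\alpha)\beta_{i-1}^{(m)}\Delta_i-\alpha\sum_{i=m+1}^{n}\Delta_i .
\end{align*}
The weights on $i\le m$ are nonnegative, and those on $i>m$ equal $-\alpha\le 0$.

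\textbf{Step 3 (total weight zero).} By \Cref{lem:b_1}, $\sum_{i=1}^{m}\beta_{i-1}^{(m)}=m\mu_1/(\mu_1+\lambda)$. Using $1-\alpha=(n-m)(\mu_1+\lambda)/(n\mu_1+(n-m)\lambda)$, the positive weights sum to
\begin{align*}
(1-\alpha)\frac{m\mu_1}{\mu_1+\lambda}=\frac{(n-m)m\mu_1}{n\mu_1+(n-m)\lambda}=\alpha(n-m),
\end{align*}
which equals the total negative weight. Checking this cancellation is the only genuinely computational point, and it is exactly where the specific $\alpha$ of \Cref{lem:3_4} enters. I would also sanity-check it on $n=2$, $m=1$, $\lambda=\mu_1$, where $T_1=T_2$ for linear $f$.

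\textbf{Step 4 (conclude).} Since the $\Delta_i$ are nonincreasing and the positive weights sit on the smaller indices,
\begin{align*}
\sum_{i=1}^{m}(1-\alpha)\beta_{i-1}^{(m)}\Delta_i\ \ge\ \Delta_m\,\alpha(n-m)\ \ge\ \Delta_{m+1}\,\alpha(n-m)\ \ge\ \alpha\sum_{i=m+1}^{n}\Delta_i .
\end{align*}
Therefore $T_1\ge T_2$, with equality for affine $f$.

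The main obstacle is Step 1: getting the survival probabilities $\mathbb{P}(V_0\ge i)=1-\beta_{i-1}^{(m)}$ indexed correctly, because an off-by-one there breaks the cancellation in Step 3.
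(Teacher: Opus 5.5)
Your proof is correct and is essentially the paper's argument. You expand in the increments $\Delta_i$, telescope $\beta_r^{(m)}q_r=\beta_r^{(m)}-\beta_{r-1}^{(m)}$ to get $T_1-T_2=(1-\alpha)\sum_{i=1}^{m}\beta_{i-1}^{(m)}\Delta_i-\alpha\sum_{i=m+1}^{n}\Delta_i$, and use \Cref{lem:b_1} to show the positive and negative weights both total $\alpha(n-m)$. The only differences are cosmetic: you phrase the telescoping as tail probabilities $\mathbb{P}(V_0\ge i)=1-\beta_{i-1}^{(m)}$, and you pivot through $\Delta_m\ge\Delta_{m+1}$ where the paper bounds both sums against $\Delta_m$ directly.
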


\begin{proof}
Expand $T_1$ in terms of $\Delta_k$,
\begin{align}\label{eq:lem3_3_1}
T_1 = f(0) + \sum_{k=1}^{m}\Delta_k.
\end{align}
Next, we expand $T_2$ in terms of $\Delta_k$. Let $B^{(m)} := \sum_{r=1}^{m}\beta_r^{(m)}q_r f(r) + \beta_0^{(m)} f(0).$
Then $T_2 = \alpha f(n) + (1-\alpha)B^{(m)}.$ Expand $B^{(m)}$,
\begin{align*}
B^{(m)}
= \sum_{r=1}^{m}\beta_r^{(m)}q_r\left[f(0)+\sum_{k=1}^{r}\Delta_k\right] + \beta_0^{(m)} f(0).
\end{align*}
Group the $f(0)$ terms,
\begin{align*}
B^{(m)}
= f(0)\left[\sum_{r=1}^{m}\beta_r^{(m)}q_r + \beta_0^{(m)}\right]
+ \sum_{r=1}^{m}\beta_r^{(m)}q_r \sum_{k=1}^{r}\Delta_k.
\end{align*}
By \Cref{eq:lem3_0_4}, the bracket equals $1$. Hence $B^{(m)} = f(0) + \sum_{r=1}^{m}\beta_r^{(m)}q_r \sum_{k=1}^{r}\Delta_k.$

Next we switch the order of summation in the nested term $S := \sum_{r=1}^{m}\beta_r^{(m)}q_r \sum_{k=1}^{r}\Delta_k.$
A fixed $\Delta_k$ appears whenever $r\ge k$. Hence $S = \sum_{k=1}^{m}\Delta_k \sum_{r=k}^{m}\beta_r^{(m)}q_r.$
Now use \Cref{eq:lem3_0_3},
\begin{align*}
\sum_{r=k}^{m}\beta_r^{(m)}q_r
= \sum_{r=k}^{m}\left(\beta_r^{(m)}-\beta_{r-1}^{(m)}\right)
= \beta_m^{(m)}-\beta_{k-1}^{(m)}
= 1-\beta_{k-1}^{(m)}.
\end{align*}
Thus $S = \sum_{k=1}^{m}\Delta_k\left(1-\beta_{k-1}^{(m)}\right).$ Therefore $B^{(m)} = f(0) + \sum_{k=1}^{m}\Delta_k\left(1-\beta_{k-1}^{(m)}\right).$

Expand $f(n)$ using $f(n)=f(0)+\sum_{k=1}^{n}\Delta_k$ and plug into
$T_2$, we get
\begin{align*}
T_2
= \alpha\left[f(0)+\sum_{k=1}^{n}\Delta_k\right]
+ (1-\alpha)\left[f(0)+\sum_{k=1}^{m}\Delta_k(1-\beta_{k-1}^{(m)})\right].
\end{align*}
Group $f(0)$ again and notice that the coefficients sum to 1 $\alpha+(1-\alpha)=1$, so $f(0)$ remains.
Split $\sum_{k=1}^{n}\Delta_k=\sum_{k=1}^{m}\Delta_k+\sum_{k=m+1}^{n}\Delta_k$. Then the coefficient of $\Delta_k$ is, for $k\leq m$, $\alpha+(1-\alpha)(1-\beta_{k-1}^{(m)}) = 1-(1-\alpha)\beta_{k-1}^{(m)},$
and $\alpha$ for $k>m$. Therefore,
\begin{align}\label{eq:lem3_3_2}
T_2
= f(0)+\sum_{k=1}^{m}\Delta_k\left[1-(1-\alpha)\beta_{k-1}^{(m)}\right]
+ \alpha\sum_{k=m+1}^{n}\Delta_k.
\end{align}

Subtract \Cref{eq:lem3_3_2} from \Cref{eq:lem3_3_1}. The $f(0)$ cancels,
\begin{align*}
T_1-T_2
= \sum_{k=1}^{m}\Delta_k
-\sum_{k=1}^{m}\Delta_k\left[1-(1-\alpha)\beta_{k-1}^{(m)}\right]
-\alpha\sum_{k=m+1}^{n}\Delta_k.
\end{align*}
Simplify the first two sums,
\begin{align}\label{eq:lem3_3_3}
T_1-T_2
= (1-\alpha)\sum_{k=1}^{m}\beta_{k-1}^{(m)}\Delta_k
-\alpha\sum_{k=m+1}^{n}\Delta_k.
\end{align}

Now by concavity, we have $\Delta_k\ge \Delta_m$ for all $k\leq m$ and $\Delta_k\le \Delta_m$ for all $k\geq m+1$. Apply these to \Cref{eq:lem3_3_3},
\begin{align}\label{eq:lem3_3_4}
T_1-T_2
\ge \Delta_m\left[(1-\alpha)\sum_{k=1}^{m}\beta_{k-1}^{(m)} - \alpha(n-m)\right].
\end{align}
Thus it remains to evaluate the bracket.

Now apply \Cref{eq:lem3_2_1} of \Cref{lem:b_1} with $\sum_{k=1}^{m}\beta_{k-1}^{(m)}=\sum_{r=0}^{m-1}\beta_r^{(m)}$:
\begin{align}\label{eq:62}
\sum_{k=1}^{m}\beta_{k-1}^{(m)} = \frac{m\mu_1}{\mu_1+\lambda}.
\end{align}
Plug into the bracket in \Cref{eq:lem3_3_4},
\begin{align*}
(1-\alpha)\sum_{k=1}^{m}\beta_{k-1}^{(m)}-\alpha(n-m)
= (1-\alpha)\frac{m\mu_1}{\mu_1+\lambda}-\alpha(n-m).
\end{align*}
Now plug $\alpha=\frac{m\mu_1}{n\mu_1+(n-m)\lambda}$. First compute
\begin{align*}
1-\alpha
= \frac{n\mu_1+(n-m)\lambda-m\mu_1}{n\mu_1+(n-m)\lambda}
= \frac{(n-m)(\mu_1+\lambda)}{n\mu_1+(n-m)\lambda}.
\end{align*}
Therefore
\begin{align*}
(1-\alpha)\frac{m\mu_1}{\mu_1+\lambda}
= \frac{(n-m)(\mu_1+\lambda)}{n\mu_1+(n-m)\lambda}\cdot \frac{m\mu_1}{\mu_1+\lambda}
= \frac{(n-m)m\mu_1}{n\mu_1+(n-m)\lambda}
= \alpha(n-m).
\end{align*}
So the bracket equals $0$. Plugging into \Cref{eq:lem3_3_4} gives $T_1-T_2 \ge \Delta_m\cdot 0 = 0.$ Thus $T_1\ge T_2$.
\end{proof}

\begin{lemma}
\label{lem:b_3}
Let
\begin{align*}
c := \frac{\lambda}{\mu_1} > 0\quad\text{and}\quad
p_r = \frac{r\mu_1}{r\mu_1+\lambda} = \frac{r}{r+c}.
\end{align*}
For every integer $m\ge 0$,
\begin{align}\label{eq:lem3_4_1}
\sum_{r=0}^{m-1}\beta_r = \frac{m}{1+c}\beta_m.
\end{align}
\end{lemma}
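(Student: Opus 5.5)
The identity to prove is $\sum_{r=0}^{m-1}\beta_r=\frac{m}{1+c}\beta_m$, where the $\beta_r=\prod_{j=r+1}^{n}p_j$ are the untruncated products. I would reduce it directly to \Cref{lem:b_1}. The key observation is that for $r\le m\le n$ the product defining $\beta_r$ factors as
\begin{align*}
\beta_r=\prod_{j=r+1}^{m}p_j\cdot\prod_{j=m+1}^{n}p_j=\beta_r^{(m)}\,\beta_m ,
\end{align*}
where $\beta_r^{(m)}$ is the truncated product defined just before \Cref{lem:b_1}. This is immediate from the definitions, using the empty-product conventions $\beta_m^{(m)}=1$ and $\beta_n=1$.

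Summing this factorization over $r=0,\dots,m-1$ and pulling out the common factor $\beta_m$ gives
\begin{align*}
\sum_{r=0}^{m-1}\beta_r=\beta_m\sum_{r=0}^{m-1}\beta_r^{(m)} .
\end{align*}
Then \Cref{eq:lem3_2_1} of \Cref{lem:b_1} says the remaining sum equals $m/(1+c)$, which yields the claim. The case $m=0$ is trivial, since both sides vanish.

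If a self-contained argument is preferred, I would instead mimic the induction in \Cref{lem:b_1}. Assume the identity holds for $m$, and use $\beta_m=p_{m+1}\beta_{m+1}$. Then
\begin{align*}
\sum_{r=0}^{m}\beta_r=\frac{m}{1+c}\beta_m+\beta_m=\beta_m\,\frac{m+1+c}{1+c}=\beta_{m+1}\,\frac{m+1}{m+1+c}\cdot\frac{m+1+c}{1+c}=\frac{m+1}{1+c}\,\beta_{m+1}.
\end{align*}

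There is no real obstacle here. The only point needing care is the range of $m$: the statement says every integer $m\ge 0$, but $\beta_r$ is only defined for $r\le n$. I would therefore state that $m$ ranges over $0\le m\le n$. This covers the use in \Cref{lem:3_8} with $m=\ell\le n-1$.
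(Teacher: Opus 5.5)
Your proof is correct. Your fallback induction is exactly the paper's proof: induction on $m$ using $\beta_m=p_{m+1}\beta_{m+1}$, which in effect repeats the induction of \Cref{lem:b_1} for the untruncated products.

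Your main route is different. You factor $\beta_r=\beta_r^{(m)}\beta_m$ for $r\le m\le n$, which reduces the claim in one line to \Cref{lem:b_1}. This shows that the present lemma is just \Cref{lem:b_1} rescaled by $\beta_m$, so the same recursion need not be proved twice. The paper's version has the advantage of being self-contained, without relying on the truncated notation $\beta_r^{(m)}$.

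Your remark on the range is apt. The paper's inductive step uses $\beta_{m+1}$, so it silently requires $m+1\le n$, and the statement should be read as holding for $0\le m\le n$. The proof of \Cref{lem:b_4} needs the identity not only at $m=\ell$ but also at $m=n$, to get $\sum_{r=0}^{n-1}\beta_r=n/(1+c)$. Your stated range covers this, even though your closing sentence mentions only $m=\ell$.
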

\begin{proof}
We prove this by induction. Suppose \Cref{eq:lem3_4_1} is true for $m$, we prove that this also holds for the case $m+1$. We first write $\sum_{r=0}^{m}\beta_r
= \left(\sum_{r=0}^{m-1}\beta_r\right) + \beta_m.$
By the induction hypothesis,
\begin{align*}
\sum_{r=0}^{m}\beta_r
= \frac{m}{1+c}\beta_m + \beta_m
= \beta_m\left(\frac{m}{1+c} + 1\right)
= \beta_m \frac{m+1+c}{1+c}.
\end{align*}
Now observe that $\beta_m
= p_{m+1}\beta_{m+1}
= \frac{m+1}{m+1+c}\beta_{m+1}.$
Hence $\beta_m (m+1+c) = (m+1)\beta_{m+1}.$ Substituting into the previous expression yields $\sum_{r=0}^{m}\beta_r
= \frac{(m+1)\beta_{m+1}}{1+c}.$ This is exactly \Cref{eq:lem3_4_1} for $m+1$.
\end{proof}

\begin{lemma}
\label{lem:b_4}
    Let $f:\{0,1,\ldots,n\}\to\mathbb{R}$ be increasing and discretely concave. Let the mixing coefficient be
    \begin{align*}
        \alpha:=\frac{\ell}{n}\beta_{\ell}=\frac{\ell}{n}\prod_{r=\ell+1}^n\frac{r\mu_1}{r\mu_1+\lambda}
    \end{align*}

    Define the two quantities
    \begin{align*}
        T_1&=\sum_{r=\ell+1}^n\beta_rq_rf(r)+\beta_\ell f(\ell)\\
        T_2&=\alpha f(n)+(1-\alpha)\left[\sum_{r=1}^n\beta_rq_rf(r)+\beta_0f(0) \right].
    \end{align*}
    Then $T_1\geq T_2.$
\end{lemma}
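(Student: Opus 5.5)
The plan is to mirror the proof of \Cref{lem:b_2}: rewrite both $T_1$ and $T_2$ in the telescoped form $f(0)+\sum_k c_k\Delta_k$, subtract, and use concavity to compare every increment with the single increment $\Delta_\ell$. The leftover bracket should then vanish identically by \Cref{lem:b_3}.

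\textbf{Step 1 (expand $T_1$).} By \Cref{eq:lem3_0_1}, $\beta_r q_r=\beta_r-\beta_{r-1}$. Hence the weights in $T_1$ sum to
\begin{align*}
\sum_{r=\ell+1}^n(\beta_r-\beta_{r-1})+\beta_\ell=\beta_n=1.
\end{align*}
Substitute $f(r)=f(0)+\sum_{k\le r}\Delta_k$ and swap the order of summation. A fixed $\Delta_k$ then receives coefficient $1$ if $k\le\ell$, since every outcome $r\ge\ell$ includes it. If $k>\ell$, it receives $\sum_{r=k}^n(\beta_r-\beta_{r-1})=1-\beta_{k-1}$. Thus
\begin{align*}
T_1=f(0)+\sum_{k=1}^{\ell}\Delta_k+\sum_{k=\ell+1}^{n}(1-\beta_{k-1})\Delta_k .
\end{align*}

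\textbf{Step 2 (expand $T_2$).} Exactly as in the proof of \Cref{lem:b_2}, with $m=n$ and $\beta^{(n)}_r=\beta_r$, the bracket equals $f(0)+\sum_{k=1}^n(1-\beta_{k-1})\Delta_k$. Likewise $f(n)=f(0)+\sum_{k=1}^n\Delta_k$. Combining the two gives
\begin{align*}
T_2=f(0)+\sum_{k=1}^n\bigl[1-(1-\alpha)\beta_{k-1}\bigr]\Delta_k .
\end{align*}

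\textbf{Step 3 (subtract).} The $f(0)$ terms cancel, and the difference is
\begin{align*}
T_1-T_2=(1-\alpha)\sum_{k=1}^{\ell}\beta_{k-1}\Delta_k-\alpha\sum_{k=\ell+1}^{n}\beta_{k-1}\Delta_k .
\end{align*}
By discrete concavity, $\Delta_k\ge\Delta_\ell$ for $k\le\ell$ and $\Delta_k\le\Delta_{\ell+1}\le\Delta_\ell$ for $k>\ell$. All weights $\beta_{k-1}$, $1-\alpha$ and $\alpha$ are nonnegative. Therefore
\begin{align*}
T_1-T_2\ \ge\ \Delta_\ell\Bigl[(1-\alpha)\sum_{r=0}^{\ell-1}\beta_r-\alpha\sum_{r=\ell}^{n-1}\beta_r\Bigr]
=\Delta_\ell\Bigl[\sum_{r=0}^{\ell-1}\beta_r-\alpha\sum_{r=0}^{n-1}\beta_r\Bigr].
\end{align*}

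\textbf{Step 4 (the bracket vanishes).} This is the step where the specific value of $\alpha$ must enter, so it is the one I expect to be the crux. Apply \Cref{lem:b_3} twice, with $m=\ell$ and with $m=n$, using $\beta_n=1$:
\begin{align*}
\sum_{r<\ell}\beta_r=\frac{\ell\beta_\ell}{1+c},\qquad \sum_{r<n}\beta_r=\frac{n}{1+c}.
\end{align*}
With $\alpha=\ell\beta_\ell/n$, the bracket becomes
\begin{align*}
\frac{\ell\beta_\ell}{1+c}-\frac{\ell\beta_\ell}{n}\cdot\frac{n}{1+c}=0 .
\end{align*}
Hence $T_1-T_2\ge 0$. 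Finally, since $\Delta_\ell\ge 0$ and $1-\alpha\ge 0$ (as $\beta_\ell\le 1$ and $\ell<n$), the inequality directions used in Step 3 are legitimate, and $T_1\ge T_2$ follows.
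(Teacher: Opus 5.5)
Your proposal is correct and follows essentially the same route as the paper. You use the same telescoped expansions of $T_1$ and $T_2$, arrive at the same difference $(1-\alpha)\sum_{k\le\ell}\beta_{k-1}\Delta_k-\alpha\sum_{k>\ell}\beta_{k-1}\Delta_k$, make the same comparison of every increment with $\Delta_\ell$, and invoke \Cref{lem:b_3} to make the bracket vanish; regrouping the bracket as $\sum_{r<\ell}\beta_r-\alpha\sum_{r<n}\beta_r$ is just a tidier form of the paper's computation.

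One small remark: because the bracket is exactly zero, the sign of $\Delta_\ell$ is never used. So your closing appeal to $\Delta_\ell\ge 0$ is superfluous; only nonnegativity of $\beta_{k-1}$, $\alpha$, $1-\alpha$ together with concavity is needed.
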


\begin{proof}
Expand $T_1$ in terms of $\Delta_k$.
\begin{align*}
T_1 
&= \sum_{r=\ell+1}^{n} \beta_r q_r 
\left[
f(0) + \sum_{k=1}^{r} \Delta_k
\right]
+ \beta_\ell 
\left[
f(0) + \sum_{k=1}^{\ell} \Delta_k
\right].
\end{align*}

Group $f(0)$ terms,
\begin{align*}
T_1 
&= f(0)
\left[
\sum_{r=\ell+1}^{n} \beta_r q_r + \beta_\ell
\right]
+ \sum_{r=\ell+1}^{n} \beta_r q_r \sum_{k=1}^{r} \Delta_k
+ \beta_\ell \sum_{k=1}^{\ell} \Delta_k.
\end{align*}

By \Cref{eq:lem3_0_1}, $\beta_r q_r = \beta_r - \beta_{r-1}$,
\begin{align*}
\sum_{r=\ell+1}^{n} \beta_r q_r
= \sum_{r=\ell+1}^{n} (\beta_r - \beta_{r-1})
= \beta_n - \beta_\ell
= 1 - \beta_\ell.
\end{align*}
Hence $\sum_{r=\ell+1}^{n} \beta_r q_r + \beta_\ell = 1.$ Switch the order in the nested sum and define $S_1:=\sum_{r=\ell+1}^{n} \beta_r q_r \sum_{k=1}^{r} \Delta_k.$

A fixed $\Delta_k$ appears in the inner sum precisely when $k\le r$. Since $r$ runs from $\ell+1$ to $n$, we split into two cases:
\begin{itemize}
    \item If $k\le \ell$, then $k\le r$ for every $r=\ell+1,\ldots,n$, so $\Delta_k$ appears in every term of the outer sum.
    \item If $k\ge \ell+1$, then $k\le r$ forces $r=k,k+1,\ldots,n$.
\end{itemize}
Therefore,
\begin{align*}
S_1
&=
\sum_{k=1}^{\ell}\Delta_k\sum_{r=\ell+1}^{n}\beta_r q_r
+
\sum_{k=\ell+1}^{n}\Delta_k\sum_{r=k}^{n}\beta_r q_r.
\end{align*}

Again telescoping, $\sum_{r=k}^{n} \beta_r q_r
=
\sum_{r=k}^{n} (\beta_r - \beta_{r-1})
=
\beta_n - \beta_{k-1}
=
1 - \beta_{k-1}.$
Also, from the computation above, $\sum_{r=\ell+1}^{n}\beta_r q_r = 1-\beta_\ell.$ Substituting these into the expression for $S_1$ yields $S_1
=
(1-\beta_\ell)\sum_{k=1}^{\ell}\Delta_k
+
\sum_{k=\ell+1}^{n}\Delta_k(1-\beta_{k-1}).$

Thus,
\begin{align*}
T_1
&=
f(0)\cdot 1
+
S_1
+
\beta_\ell \sum_{k=1}^{\ell}\Delta_k =
f(0)
+
\left[(1-\beta_\ell)\sum_{k=1}^{\ell}\Delta_k
+
\sum_{k=\ell+1}^{n}\Delta_k(1-\beta_{k-1})\right]
+
\beta_\ell\sum_{k=1}^{\ell}\Delta_k.
\end{align*}
Combine the $\sum_{k=1}^{\ell}\Delta_k$ terms $(1-\beta_\ell)\sum_{k=1}^{\ell}\Delta_k+\beta_\ell\sum_{k=1}^{\ell}\Delta_k
=\sum_{k=1}^{\ell}\Delta_k.$ Hence
\begin{align}
\label{eq:lem3_5_3}
T_1
=
f(0)
+
\sum_{k=1}^{\ell} \Delta_k
+
\sum_{k=\ell+1}^{n} \Delta_k (1-\beta_{k-1}).
\end{align}

Next, we expand $T_2=\alpha f(n)+(1-\alpha B)$ in terms of $\Delta_k$, where 
\begin{align*}
B &:= \sum_{r=1}^{n} \beta_r q_r f(r) + \beta_0 f(0)=
\sum_{r=1}^{n} \beta_r q_r 
\left[
f(0) + \sum_{k=1}^{r} \Delta_k
\right]
+ \beta_0 f(0)=
f(0)\left[\sum_{r=1}^{n}\beta_r q_r+\beta_0\right]
+
\sum_{r=1}^{n}\beta_r q_r\sum_{k=1}^{r}\Delta_k.
\end{align*}

Since $\sum_{r=1}^{n} \beta_r q_r
= \sum_{r=1}^{n}(\beta_r-\beta_{r-1})
= \beta_n-\beta_0
= 1-\beta_0,$
we obtain $\sum_{r=1}^{n}\beta_r q_r+\beta_0 = 1,$
and therefore $B
=
f(0)
+
\sum_{r=1}^{n}\beta_r q_r\sum_{k=1}^{r}\Delta_k.$

Switch the order in the nested sum.
Define $S_2:=\sum_{r=1}^{n}\beta_r q_r\sum_{k=1}^{r}\Delta_k.$
A fixed $\Delta_k$ appears whenever $k\le r$, and now $r$ runs from $1$ to $n$, so for each fixed $k$ we have $r=k,k+1,\ldots,n$. Hence $S_2
=
\sum_{k=1}^{n}\Delta_k\sum_{r=k}^{n}\beta_r q_r.$
Using the same telescoping identity as above,
\begin{align*}
\sum_{r=k}^{n}\beta_r q_r
=
\sum_{r=k}^{n}(\beta_r-\beta_{r-1})
=
\beta_n-\beta_{k-1}
=
1-\beta_{k-1},
\end{align*}
Therefore $S_2
=
\sum_{k=1}^{n}\Delta_k(1-\beta_{k-1})$ and $B
=
f(0)
+
\sum_{k=1}^{n}\Delta_k(1-\beta_{k-1}).$

Thus
\begin{align}
\label{eq:lem3_5_4}
T_2
&=
\alpha 
\left[
f(0)+\sum_{k=1}^{n}\Delta_k
\right]
+
(1-\alpha)
\left[
f(0)+\sum_{k=1}^{n}\Delta_k(1-\beta_{k-1})
\right]=f(0)
+
\sum_{k=1}^{n}
\Delta_k
\left(
1 - (1-\alpha)\beta_{k-1}
\right).
\end{align}

Subtract \Cref{eq:lem3_5_4} from \Cref{eq:lem3_5_3}
\begin{align}
T_1 - T_2
&=
(1-\alpha)
\sum_{k=1}^{\ell}
\beta_{k-1}\Delta_k
-
\alpha
\sum_{k=\ell+1}^{n}
\beta_{k-1}\Delta_k.
\label{eq:lem3_5_1}
\end{align}

Since $f$ is concave, we have $\Delta_k\geq \Delta_\ell$ for $k\leq \ell$ and $\Delta_k\leq \Delta_\ell$ for $k\geq \ell+1$. Applying these bounds to \Cref{eq:lem3_5_1},
\begin{align}
T_1 - T_2
\ge
\Delta_\ell
\left[
(1-\alpha)
\sum_{k=1}^{\ell} \beta_{k-1}
-
\alpha
\sum_{k=\ell+1}^{n} \beta_{k-1}
\right].
\label{eq:lem3_5_2}
\end{align}

Using \Cref{eq:lem3_4_1} of \Cref{lem:b_3} we obtain
\begin{align*}
\sum_{k=1}^{\ell} \beta_{k-1}
=
\frac{\ell}{1+c}\beta_\ell\quad\text{and}\quad
\sum_{k=\ell+1}^{n} \beta_{k-1}
=
\frac{n}{1+c}
-
\frac{\ell}{1+c}\beta_\ell.
\end{align*}

Plugging into \Cref{eq:lem3_5_2} gives $T_1 - T_2
\ge
\Delta_\ell
\frac{1}{1+c}
\left(
\ell \beta_\ell - \alpha n
\right).$
With $\alpha = \frac{\ell}{n}\beta_\ell$, the bracket equals $0$. Hence $T_1 - T_2 \ge 0$.
\end{proof}

\end{document}